\newtheorem{thm}{Theorem}
\newtheorem{lem}[thm]{Lemma}
\let\hat\widehat
\theoremstyle{remark}
\newtheorem{remark}{Remark}
\newcommand\R{\mathbb{R}}
\newcommand\K{\mathbb{K}}
\renewcommand\P{\mathbb{P}}
\newcommand\mathand{\ {\rm and}\ }
\newcommand\norm[1]{\|#1\|}
\newcommand\Haus{{\sf Haus}}
\newcommand\dest{{\sf dest}}
\newcommand\cL{{\cal L}}
\newcommand\cF{{\cal F}}
\newcommand\cM{{\cal M}}
\newcommand\cC{{\cal C}}
\newcommand\cA{{\cal A}}
\newcommand\cB{{\cal B}}
\DeclareMathOperator{\Cov}{Cov}
\DeclareMathOperator{\Quantile}{{\sf Quantile}}
\DeclareMathOperator{\reach}{{\sf reach}}
\newcommand\BC{\mathbf{BC}}
\newcommand{\Set}[1]{\left\{{#1}\right\}}       
\newcommand{\st}{:\;}
\newskip\beforeproofvskip
\newskip\afterproofvskip
\def\prooftag{Proof}
\def\proofskip{\enspace}
\def\proof{\@ifnextchar[{\@@proof}{\@proof}}  
\def\@startproof{\par\vskip\beforeproofvskip\leavevmode}
\def\@proof{\@startproof{\scshape\prooftag.}\proofskip}
\def\@@proof[#1]{\@startproof {\scshape\prooftag #1.}\proofskip}
\newcommand{\blind}{0}
\begin{document}

\def\spacingset#1{\renewcommand{\baselinestretch}%
{#1}\small\normalsize} \spacingset{1}

\if0\blind
{
  \title{\bf Density Level Sets: Asymptotics, Inference, and Visualization}
  \author{ Yen-Chi Chen\thanks{Supported by the William S. Dietrich II Presidential Ph.D. Fellowship and DOE grant number DE-FOA-0000918 at Carnegie Mellon University.
	}\hspace{.2cm}\\
    Department of Statistics, University of Washington\\
    and \\
    Christopher R. Genovese\thanks{Supported by DOE grant number DE-FOA-0000918 and 
    NSF grant number DMS-1208354.
	} \\
    Department of Statistics, Carnegie Mellon University\\
    and \\
    Larry Wasserman\thanks{Supported by NSF grant number DMS-1208354.} \\
    Department of Statistics, Carnegie Mellon University}
  \maketitle
} \fi

\if1\blind
{
  \bigskip
  \bigskip
  \bigskip
  \begin{center}
    {\LARGE\bf 
    	\vspace{1 in}
    Density Level Sets: Asymptotics, Inference, and Visualization}
\end{center}
  \medskip
} \fi

\bigskip

\begin{abstract}
\noindent
We study the plug-in estimator for density level sets under Hausdorff
loss. We derive asymptotic theory for this estimator, and based on this
theory, we develop two bootstrap confidence regions for level sets. We
introduce a new technique for visualizing density level sets, even in
multidimensions, that is easy to interpret and efficient to compute.
\end{abstract}

\noindent%
\emph{Keywords:}  
Nonparametric inference, asymptotic theory, level set clustering, anomaly detection, visualization
\vfill

\newpage
\spacingset{1.45} 
\section{Introduction}

Estimating the level sets of a probability density function
has a wide range of applications, 
including anomaly detection
(outlier detection) \citep{breunig2000lof, hodge2004survey},
two-sample comparison \citep{duong2009highest},
binary classification \citep{mammen1999smooth},
and clustering \citep{Rinaldo2010a,Rinaldo2010b}.
In this paper,
we study the problem of estimating
the level set 
\begin{equation}
D_h \equiv D_h(\lambda) = \Set{x\st p_h(x)=\lambda},
\end{equation}
where $p_h$ is the expected kernel density estimator with bandwidth $h$,
a smoothed version of the underlying density $p$.
Using $D_h$ (and thus $p_h$) has several advantages,
which we discuss in detail in Section \ref{sec::smooth}.
Figures \ref{fig::vis_two} and \ref{fig::vis1}
illustrate the kind of confidence sets and visualizations
we will develop in this paper.

A commonly used estimator of the density level-set is the plug-in estimator
$\hat{D}_h = \Set{x\st \hat{p}_h(x) =\lambda}$,
where $\hat{p}_h$ is the kernel density estimator or some other density estimator.  
There is a large literature for level sets 
(and upper level sets, which replace $= \lambda$ with $\ge\lambda$) 
that focuses on 
the consistency, rates of convergence
\citep{Polonik1995, Tsybakov1997, Walther1997,Cadre2006, Cuevas2006}
and minimaxity \citep{singh2009adaptive}
of such estimators
under various error loss functions.

Recent results on statistical inference for level sets
include \cite{jankowski2012confidence} and \cite{mammen2013confidence}.
Statistical inference is challenging in this setting 
because the estimand is a set
and the estimator is a random set \citep{Molchanov2005}.
\cite{mason2009asymptotic} establish
asymptotic normality for upper level sets when the loss function
is the measure of the set difference.
However, it is unclear how to derive a confidence set from this result.

Another challenge of level set estimation
is that we cannot directly visualize
the level sets when the dimension of the data $d$ is larger than 3.
One approach  is to construct a \emph{level-set tree},
which shows how the connected components 
for the upper level sets 
bifurcate
when we gradually increase $\lambda$
\citep{stuetzle2003estimating, klemela2004visualization, klemela2006visualization,
klemela2009smoothing, stuetzle2010generalized,kent2013debacl}.
The level-set tree reveals topological information
about the level sets
but loses geometric information.

In this paper, we propose solutions to all of these problems.
Our main contributions can be summarized as follows.
\begin{enumerate}
\item We derive the limiting distribution of 
$\Haus(\hat{D}_h, D_h)$  
(Theorem~\ref{thm::Gaussian}).
\item We develop two bootstrap-based methods to construct confidence regions for $D_h$ 
(Section \ref{sec::inf}).
\item We prove that both bootstrap methods are valid
(Theorem \ref{thm::bootstrap} and \ref{thm::CI2}).
\item We devise a visualization technique that preserves the geometric information for density level sets
(Section \ref{sec::vis}).
\end{enumerate}

\emph{Related Work.}
Early work on density level set focuses on proving the consistency or
the rate of convergence under various metrics. See e.g.
\cite{Polonik1995, Tsybakov1997, Walther1997, Cuevas2006,Rinaldo2010a}.
However, none of these derives a limiting distribution for 
the density level sets.
To our knowledge, the only paper that considers limiting distributions is
\cite{mason2009asymptotic},
proving asymptotic normality under a generalized integrated distance.
However, this metric cannot be used to construct a confidence 
set for density level sets since the asymptotic distribution
involves the true density, which is unknown. 
Estimating the level set is also related to support estimation,
see e.g. \cite{cuevas2004boundary} and \cite{Cuevas2009}.

\cite{jankowski2012confidence} and \cite{mammen2013confidence},
both provide methods for constructing confidence sets for
the density level sets using the variation of the density function.
Our approach is similar to theirs but is based on Hausdorff distance.
We will compare their methods to ours in Section \ref{sec::CI::M2}.

\emph{Outline.}
We begin with a short introduction to density level sets 
along with some useful geometric concepts in Section \ref{sec::tech}.
In Section \ref{sec::thm}, we derive the limiting distribution of
the Hausdorff distance between the estimated and true level sets.
In Section \ref{sec::inf}, 
we construct a valid confidence set for density level sets.
In Section \ref{sec::vis}, we devise a visualization method for density level sets 
that is simple to interpret and efficient to compute.
(We provide an R package that implements our visualization method.)
We summarize our results and discuss related problems
in Section \ref{sec::discuss}.

\section{Technical Background} \label{sec::tech}

\subsection{Level Sets}

Let $X_1,\cdots,X_n$ be a random sample from an unknown, continuous density $p(x)$.
We define the density level set by
\begin{equation} \label{eq::p-level-set}
D \equiv D(\lambda) = \Set{x\st p(x)= \lambda}
\end{equation}
for some $\lambda>0$. 
Note that in the literature the term level sets is sometimes
used for the set $\Set{x\st p(x)\geq \lambda}$;
we call the latter the \emph{upper level set} to distinguish
it from the \emph{level set} in equation \ref{eq::p-level-set}.
Thus, the level sets (in our terminology)
are the boundaries to the upper level sets,
under mild smoothness assumptions (e.g., assumption G below).

We assume that $\lambda$ is a fixed, positive value.
A plug-in estimate for $D$ is
\begin{equation}
\hat{D}_h \equiv \hat{D}_h(\lambda) = \{x: \hat{p}_h(x)= \lambda\},
\end{equation}
where $\hat{p}_h$ is the kernel density estimator (KDE),
\begin{equation}
\hat{p}_h(x) = \frac{1}{nh^d}\sum_{i=1}^n K\left(\frac{\norm{x-X_i}}{h}\right).
\end{equation}

\subsection{Smoothed Density Level Set}	\label{sec::smooth}

In this paper, 
we focus on inference for the level sets of a smoothed version of $p$, 
specifically:
\begin{equation}
p_h = p\star K_h = \mathbb{E}(\hat{p}_h),
\end{equation}
where $K_h(x) = \frac{1}{h^d}K\left(\frac{\norm{x}}{h}\right)$ 
and $\star$ denotes convolution.
We denote the $\lambda$ level set of $p_h$ by
\begin{equation}
D_h = \Set{x\st p_h(x)= \lambda}.
\end{equation}
Note that although we focus on estimating $D_h$,
we allow $h=h_n\rightarrow 0$ as $n\rightarrow \infty$.

Here, we argue that 
$p_h$ (and hence $D_h$) is a better target for level-set estimation than $p$ (and hence $D$).
For simplicity, in this section we focus on 
the upper level sets
$L_h(\lambda) = \{x:\ p_h(x) \geq \lambda\}$
and $L(\lambda) = \{x: p_h(x)\geq\lambda\}$,
but the gist of the argument remains the same in either case.

Our arguments are:
(i)~$p_h$ always exists while $p$ may not even exist;
(ii)~$\hat p_h$ can be naturally viewed as an estimator for $p_h$;
(iii)~$p_h$ has all the salient structure of $p$ but is easier to estimate than $p$;
(iv)~While the bias $p_h(x) - p(x)$ may be analyzed theoretically,
in practice, it cannot be accurately estimated.
It is better to just be clear that we are really estimating $p_h$.

Let us now expand on some of these points.
Regarding point (ii):
Let $X_1,\ldots, X_n$ be an iid sample from $P$,
where $P$ is supported on some compact set
$\K \subset \mathbb{R}^d$.
We have, for any $P$,
$$
P^n\Biggl(\sup_x |\hat p_h(x) - p_h(x)| \geq \epsilon\Biggr) \leq c_1 e^{- c_2 n h^d \epsilon^2 }.
$$
As long as $h\geq (\log n/n)^{1/d}$,
$P^n(\sup_x |\hat p_h(x) - p_h(x)| \geq \epsilon) \to 0$,
and hence we can uniformly consistently estimate $p_h$,
whether we keep $h$ fixed or let it tend to 0.
The same is \emph{not} true for $p$.
In fact, $P$ may not even have a density $p$.
The bias cannot be uniformly estimated
in a distribution-free way.

Regarding (iii):
The left plot in Figure~\ref{fig::j1} shows a density $p$.
The blue points at the bottom show the upper level set $L=\{x:\ p \geq 0.05\}$.
The right plot shows
$p_h$ for $h=0.2$ and
the blue 
points at the bottom show the upper level set $L_h=\{x:\ p_h \geq 0.05\}$.
The smoothed out density $p_h$ is biased
and the upper level set $L_h$ loses the small details of $L$.
But these small details are the least estimable,
and $L_h$ captures the principal structure of $L$.
In addition, when $L$ is smooth (having positive $\mu$-reach; \citealt{Chazal2005,Chazal2009}) 
and $h$ is sufficiently small, 
$L_h$ and $L$ will be topologically similar, in the sense that a small expansion of $L_h$
is homotopic to $L$
(\citealt{Chazal2005,Chazal2009, genovese2014nonparametric}).
The $\mu$-reach and the concept of being nearly homotopic can be found
in section 3.2 of \cite{genovese2014nonparametric} and section 4 of \cite{Chazal2009}.

As a second example,
let $P= (1/3) \phi(x;-5,1) + (1/3) \delta_0 + (1/3) \phi(x;5,1)$
where $\phi$ is a Normal density and $\delta_0$ is a point mass at 0.
Of course, this distribution does not even have a density.
The left plot in Figure
(\ref{fig::mix}) shows the density of the absolutely continuous
part of $P$ with a vertical line to who the point mass.
The right plot shows $p_h$, which is a smooth, well-defined density.
Again the blue points show the level sets.
As before $p_h$ and $L_h$ are slightly biased,
but they are also well-defined.
And $\hat p_h$ and $\hat L_h$
are accurate estimators of $p_h$ and $L_h$, respectively.
Moreover, $L_h$ captures the most important 
qualitative information about $L$, namely, that there are
three connected components, one of which is small.
These examples show that $p_h$ --- and hence $D_h$ ---
is a sensible target of inference.

\begin{figure}
\begin{center}
\begin{tabular}{cc}
\includegraphics[scale=.4]{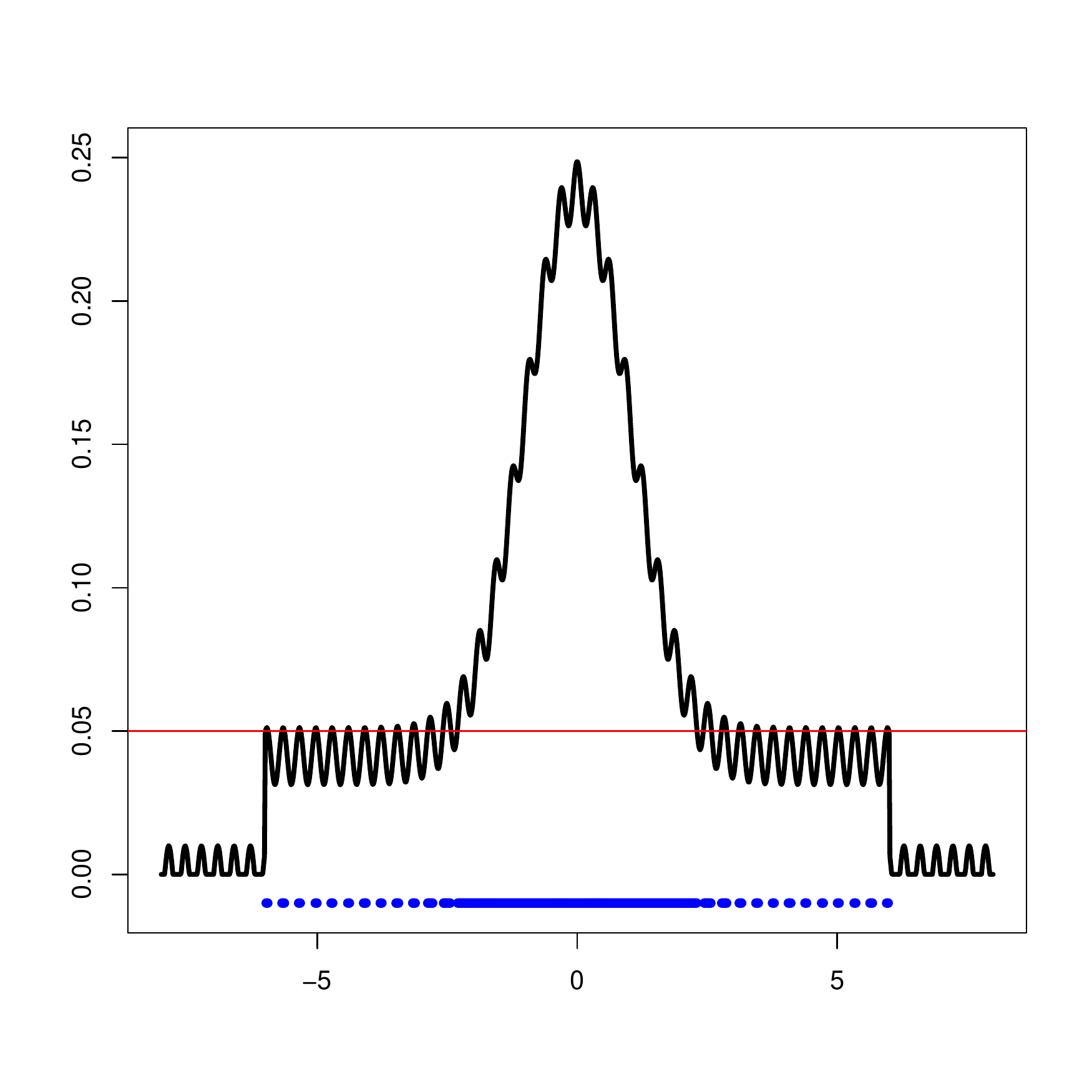} & \includegraphics[scale=.4]{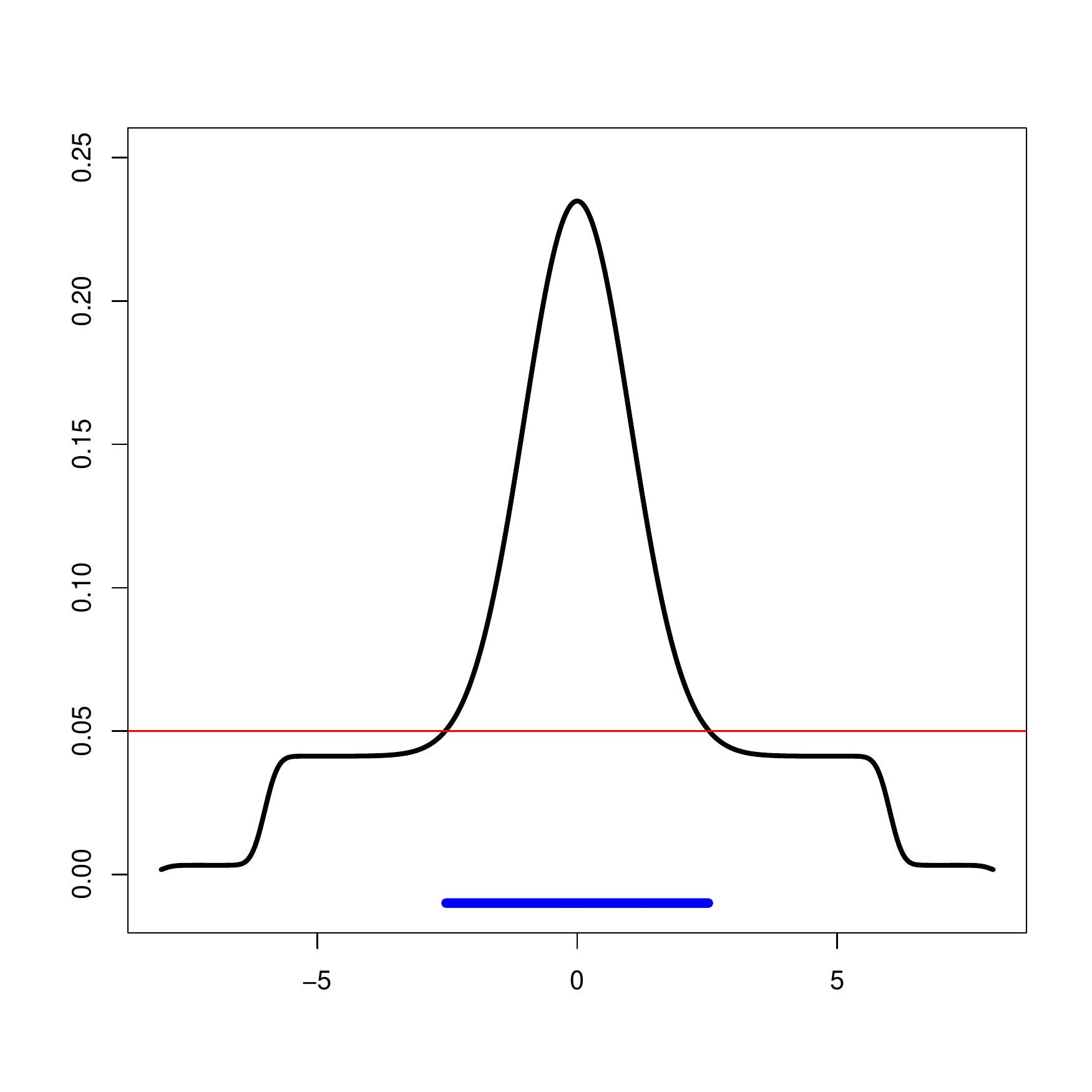}
\end{tabular}
\end{center}
\caption{Left: a density $p$ and an upper level set $\{p \geq t\}$.
Right: the smoothed density $p_h$ and the upper level set $\{p_h \geq t\}$.}
\label{fig::j1}
\end{figure}

\begin{figure}
\begin{center}
\begin{tabular}{cc}
\includegraphics[scale=.4]{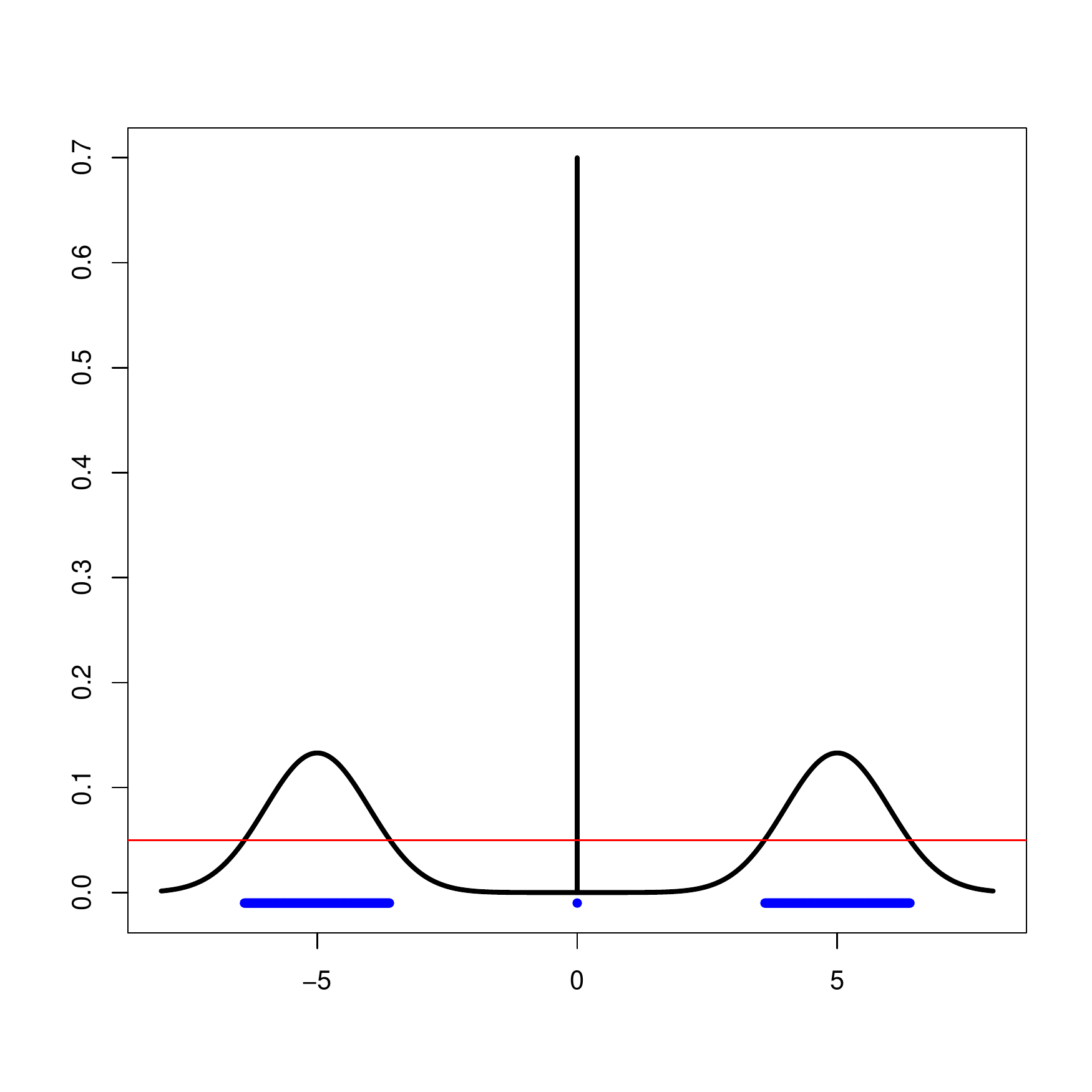} & \includegraphics[scale=.4]{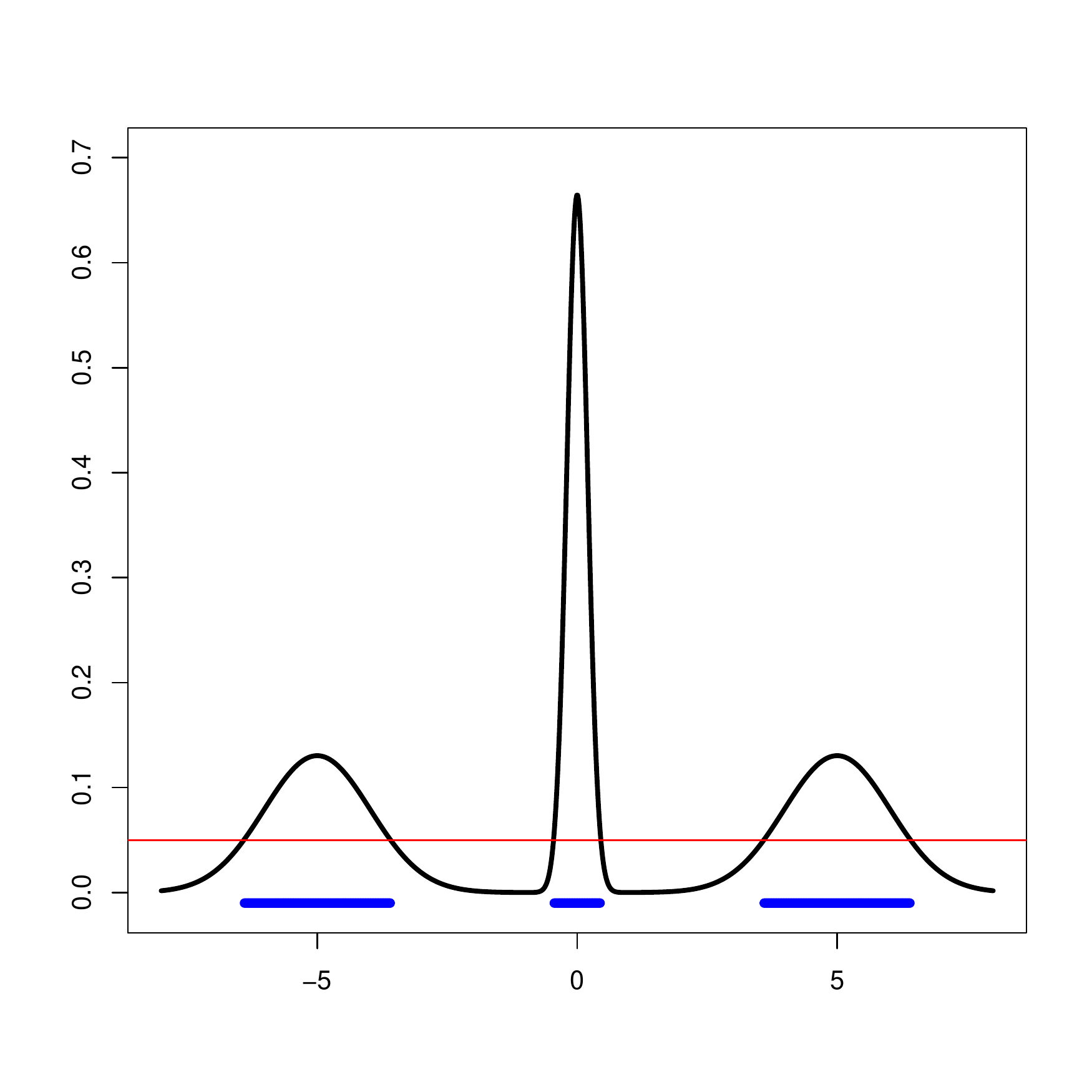}
\end{tabular}
\end{center}
\caption{Left: a distribution with a continuous component and a point mass at 0.
Right: the smoothed density $p_h$. The upper level set $L_h$ is biased but is estimable
and it approximates the main features of $L$.}
\label{fig::mix}
\end{figure}

Lastly, we would like to point out that
the idea of viewing $p_h$
as the estimand is not new.
The ``scale space'' approach to smoothing
explicitly argues that we should view
$\hat p_h$ as an estimate of $p_h$,
and $p_h$ is then regarded as a view of $p$ at a particular resolution.
This idea is discussed in detail in
\cite{chaudhuri2000scale, chaudhuri1999sizer, godtliebsen2002significance}.

If one really wants to focus on making inference for $D$, the level
set of the original density, then 
we need to undersmooth\footnote{Undersmoothing the density estimate to make statistical inferences 
is a common practice in nonparametric statistics; see e.g. page 89 of
\cite{wasserman2006all}.} so that the bias will not affect the limiting distribution.
This leads to estimates of $D$ that are highly variable.
We believe that an accurate confidence set for $D_h$
is more useful than a poor confidence set for $D$.

\begin{remark}
These arguments explain why we regard $p_h$ rather than $p$
as the estimand.
But these arguments do not tell us how to choose $h$.
Bandwidth selection is always a challenge and in this paper 
we mainly use Silverman's rule of thumb
(\cite{Silverman1986}).
\end{remark}

\subsection{Geometric Concepts}

Let $\pi_A(x)$ be the \emph{projection} of a point $x$ onto a set $A$. 
Note that $\pi_A(x)$ 
may not be unique.
The distance induced by the projection is 
\begin{equation}
d(x,A) = \inf\{\norm{x-y}_2: y\in A\} = \norm{x-\pi_A(x)}_2 
\end{equation}

A common measure of distance between two subsets of a metric space (e.g., $\R^d$)
is the \emph{Hausdorff distance}, given by
\begin{equation}
\begin{aligned}
\Haus(A,B) &= \inf\{\epsilon: A \subset B \oplus \epsilon \mathand B \subset A \oplus \epsilon\}\\
& = \max\left\{\sup_{x\in B}d(x, A) , \sup_{x\in A}d(x, B)\right\},
\end{aligned}
\label{eq::Haus}
\end{equation}
where $A \oplus \epsilon = \bigcup_{x\in A} B(x,\epsilon)$ and 
$B(x,\epsilon)=\{y:\; \norm{x-y} \le \epsilon\}$.
The Hausdorff distance is a generalized version of the $\cL_{\infty}$ metric
for sets.

The \emph{reach} of a set $M$ (\citealt{Federer1959,Cuevas2009}, also known as condition
number~\citealt{Niyogi2008} or minimal feature size~\citealt{Chazal2005})
is the largest distance from $M$ such that
every point within this distance to $M$ has a unique projection onto
$M$. i.e.
\begin{equation}
\reach(M) = \sup\{r: \pi_M(x) \mbox{ is unique } \forall x\in M\oplus r\}.
\end{equation}
Note that $\pi_A(x)$ is unique if $0<d(x,A)\leq \reach(A)$.
Another way to understand the reach is as the largest radius of a ball that can 
freely move along $M$; see Figure~\ref{Fig::reach} for an example. In some cases, 
the reach is the same as the smallest radius of curvature on $M$. 
The reach plays a key role in relating the Hausdorff distance to the empirical process. 
Note that the reach is closely related to `rolling properties' 
and `$\alpha$-convexity'; see \cite{Cuevas2009}, \cite{cuevas2012statistical} 
and appendix A of \cite{Lopez2008}.

\begin{figure}
\centering
	\subfigure[]
	{
		\includegraphics[scale=0.5]{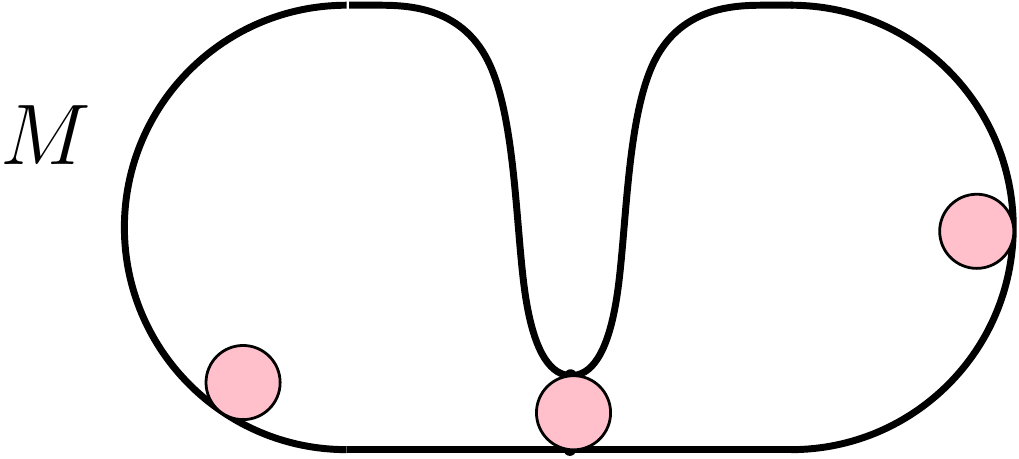}
	}
	\subfigure[]
	{
		\includegraphics[scale=0.5]{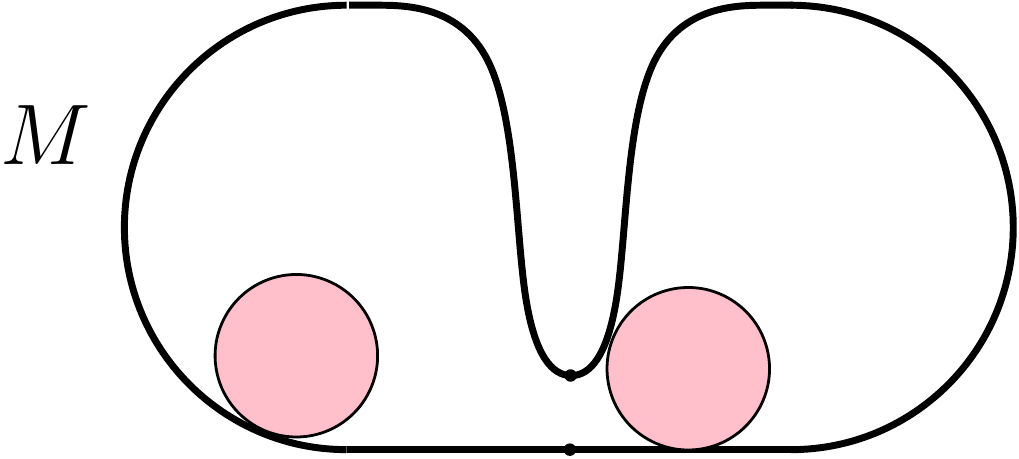} 
	}
\caption{An illustration for reach. The reach is the 
largest radius for a ball that can freely move along the 
set $M$.
In (a), the radius of the pink ball is equal to the reach. 
In (b), the radius is too large so that it cannot pass the small gap on $M$.}
\label{Fig::reach}
\end{figure}

Finally, two smooth sets $A$ and $B$ are called \emph{normal compatible} \citep{Chazal2007}
if the projection between $A$ and $B$ are one to one and onto; see
Figure~\ref{Fig::normal} for an example.
When $A$ and $B$ are normal compatible, the Hausdorff distance 
has the simpler form
\begin{equation}
\Haus(A,B) = \sup_{x\in B}d(x, A) = \sup_{x\in A}d(x, B).
\end{equation}

\begin{figure}
\centering
	\subfigure[]
	{
		\includegraphics[height= 1.5 in]{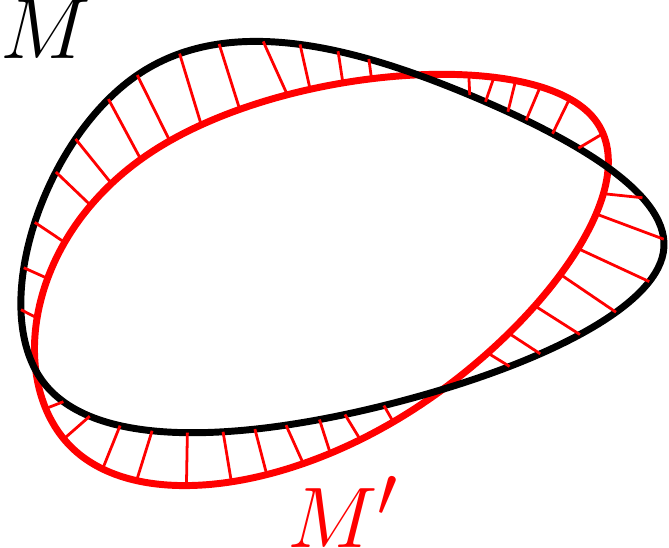}
	}
	\subfigure[]
	{
		\includegraphics[height=1.5 in]{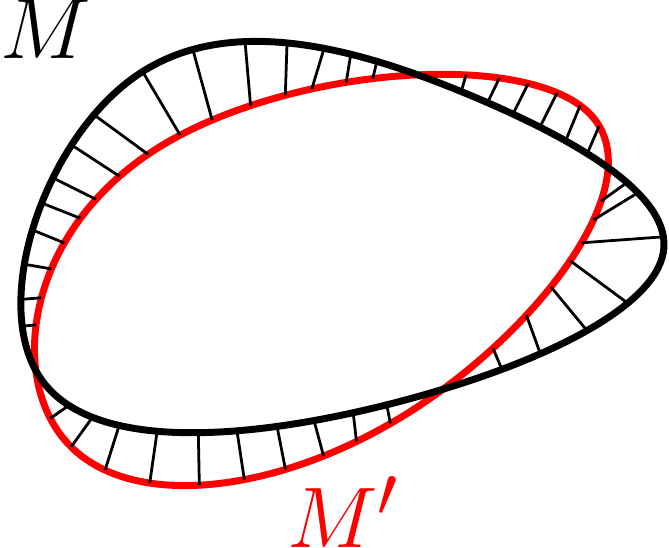} 
	}
\caption{
An example for two normal compatible curves.
Panel (a): each thin red line indicates the projection from a point of $M$ onto $M'$.
Panel (b): each thin black line indicates the projection from a point of $M'$ onto $M$.
When these projections are one to one and onto, we say $M$ and $M'$
are normal compatible to each other.
}
\label{Fig::normal}
\end{figure}

\section{Asymptotic Theory}
\label{sec::thm}

In this section, we derive the asymptotic theory for $\Haus (\hat{D}_h,D_h)$.
Note first that for two sets $A$ and $B$,
the Hausdorff distance satisfies the inclusion property
\begin{equation}
A\subset B\oplus\Haus(A,B), \,\, B\subset A\oplus \Haus(A,B).
\end{equation}
From this, it follows that
when we have a set estimator $\hat{A}_n$ and a parameter of interest
$A$, then for any $\alpha>0$, 
the set 
\begin{equation}
\hat{A}_n \oplus {\sf Quantile}\left(\Haus(\hat{A}_n,A), 1-\alpha\right)
\end{equation}
is a $1-\alpha$ confidence set for $A$,
where $\Quantile(X,\alpha)$ is the $\alpha$-quantile of random variable $X$.
Thus, whenever we can approximate the distribution of $\Haus(\hat{A}_n,A)$,
we can construct a confidence set for $A$.
Note that 
\cite{chen2014asymptotic} and \cite{chen2014nonparametric}
have used this property to construct confidence sets,
but neither paper used this property to full effect.

We now define some notation that will be used throughout this paper.
Let $\BC^r$
denote the collection of functions (including both univariate and multivariate functions) 
with bounded continuous derivatives up
to the $r$-th order.
For a smooth function $f:\R^d\mapsto \R$ with $f\in\BC^r$, 
we denote the elementwise max norm
for $r$-th derivative as $\norm{f}_{r,\max}$.
For instance, 
\begin{align*}
\norm{f(x)}_{1,\max}  = \max_{1\leq i\leq d}\left|\frac{\partial f(x)}{\partial x_i}\right|, \quad
\norm{f(x)}_{2,\max}  = \max_{1\leq i, j\leq d}\left|\frac{\partial^2 f(x)}{\partial x_i\partial x_j}\right|.
\end{align*}
We define the sup norm using derivatives up to $r$-th order by:
\begin{equation}
\norm{f}^*_{r,\max} = \max\left\{\sup_{x\in\K}\norm{f(x)}_{\ell,\max}: \ell= 0,\cdots, r\right\}.
\end{equation}
Let $\alpha = (\alpha_1,\cdots,\alpha_d)$ be an multi-index
such that each $\alpha_j$ is a non-negative integer
and $|\alpha| = \alpha_1+\cdots+\alpha_d$.
We define 
$$
f^{(\alpha)} = \frac{\partial^{|\alpha|} }{\partial^{\alpha_1} x_1\cdots\partial^{\alpha_d} x_d}f(x)
$$
to be the partial derivative.
Note that for a vector $v\in\R^d$, the norm $\norm{v}$ denotes the usual Euclidean norm.
Throughout this paper, we use the conventional notation for $O_P(\cdot)$ and $O(\cdot)$
for $n\rightarrow \infty$ and $h=h_n$, where possibly $h_n\rightarrow0$. 

\medskip

We now state are main assumptions,
for an arbitrary density $q$.
When we use the assumptions in what follows,
we will take $q$ to be $p_h$.

\noindent{\bf Assumptions.}
\begin{description}
\item[\textbf{(G)}] Let $D(q) =\{x\in \K: q(x)=\lambda\}$ be the level set for a density $q$. There are $\delta_0,g_0>0$ such that $\forall x\in D(q)\oplus \delta_0$, we have $\norm{\nabla q(x)}>g_0$.
\item[\textbf{(K1)}] The kernel function $K\in\BC^3$ and is symmetric, non-negative, and 
$$\int x^2K^{(\alpha)}(x)dx<\infty,\qquad \int \left(K^{(\alpha)}(x)\right)^2dx<\infty
$$ 
for all multi-indices $\alpha$ satisfying $|\alpha|\leq 3$. 
\item[\textbf{(K2)}] The kernel function $K$ and its partial derivative satisfies condition $K_1$ in \cite{Gine2002}. Specifically, let 
\begin{equation}
\begin{aligned}
\mathcal{K} &= \Set{y\mapsto K^{(\alpha)}\left(\frac{x-y}{h}\right)\st x\in\mathbb{R}^d, h>0, |\alpha|\leq2}\\
\end{aligned}
\end{equation}
We require that $\mathcal{K}$ satisfies
\begin{align}
\underset{P}{\sup} N\left(\mathcal{K}, L_2(P), \epsilon\norm{F}_{L_2(P)}\right)\leq \left(\frac{A}{\epsilon}\right)^v
\label{eq::VC}
\end{align}
for some positive numbers $A$ and $v$,
where $N(T,d,\epsilon)$ denotes the $\epsilon$-covering number of the metric space $(T,d)$, 
$F$ is the envelope function of $\mathcal{K}$, 
and the supremum is taken over the whole $\mathbb{R}^d$.
The $A$ and $v$ are usually called the VC characteristics of $\mathcal{K}$.
The norm $\norm{F}^2_{L_2(P)} =\int|F(x)|^2dP(x)$.
\end{description}
Assumption (G) appears in 
\cite{Molchanov1990,Tsybakov1997,Walther1997, Molchanov1998, Cadre2006, 
mammen2013confidence,Laloe2012}. For a smooth density $q$, (G) holds whenever
the specified level $\lambda$ does not coincide with
the density value for a critical point.

Assumption (K1) is to guarantee that the variance of
the KDE is bounded and to ensure that $p_h\in\BC^3$.
This assumption is very common in statistical literature,
see e.g. \cite{wasserman2006all}. 
Assumption (K2) is to regularize the complexity of the kernel function
so that the supremum norm for kernel functions and their derivatives 
can be bounded in probability.
Similar assumption appears in \cite{Einmahl2005} and \cite{genovese2014nonparametric}.
The Gaussian kernel and many compactly supported kernels satisfy both assumptions.

An immediate result from assumption (G) is the smoothness of the density
level set. This smoothness property will be used to understand
the distribution of $\Haus(\hat{D}_h, D_h)$.


\begin{lem}[Smoothness Theorem]
Assume a density $p_h\in\BC^2$ satisfies condition (G) and let $D_h$ denote
the level set for $p_h$ at $\lambda$.
Then
\begin{equation}
\reach(D_h) \geq \min\left\{\frac{\delta_0}{2}, \frac{g_0}{\norm{p_h}^*_{2,\max}}\right\}.
\end{equation}
Moreover, let $q\in\mathbf{BC}^3$ be another density function and define $D(q)$ as the level set
for $q$ at level $\lambda$. When $\norm{p_h-q}^*_{2,\max}$ is sufficiently small,
\begin{itemize}
\item[1.] Condition (G) holds for $q$.
\item[2.] $\reach(D(q)) = \min\left\{\frac{\delta_0}{2}, \frac{g_0}{\norm{p_h}^*_{2,\max}}\right\}
+O(\norm{p_h-q}^*_{2,\max})$.
\item[3.] $D_h$ and $D(q)$ are normal compatible. 
\end{itemize}
\label{lem::normal}
\end{lem}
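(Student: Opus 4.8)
The plan is to establish the reach lower bound first and then obtain the three perturbation claims by applying it to $q$ while tracking how the constants move. Since $\nabla p_h$ is nonzero on $D_h\oplus\delta_0\supseteq D_h$ by (G), the implicit function theorem makes $D_h$ a compact $C^2$ hypersurface (it is closed in the compact $\K$). Put $r_0=\min\{\delta_0/2,\,g_0/\norm{p_h}^*_{2,\max}\}$, and fix any $x$ with $d(x,D_h)=t<r_0$. A nearest point $y\in D_h$ exists by compactness and, being a constrained critical point of $\norm{x-\cdot}^2$ on the hypersurface $D_h$, satisfies $x-y=\pm t\,n(y)$ where $n(y)=\nabla p_h(y)/\norm{\nabla p_h(y)}$. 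Suppose some other $y'\in D_h$ also has $\norm{x-y'}\le t$; then $\norm{y'-y}\le 2t<\delta_0$, so the segment $[y,y']$ lies in $D_h\oplus\delta_0$, and a second-order Taylor expansion of $p_h$ on it, using $\norm{\nabla p_h(y)}>g_0$ and the Hessian bound $\norm{p_h}^*_{2,\max}$, gives $|n(y)\cdot(y'-y)|\le \frac{\norm{p_h}^*_{2,\max}}{2g_0}\norm{y'-y}^2$ (suppressing an absolute dimensional constant). Substituting $x=y+t\,n(y)$ into $\norm{x-y'}^2\le t^2$ yields $\norm{y'-y}^2\le 2t\,n(y)\cdot(y'-y)$, hence $1\le t\,\norm{p_h}^*_{2,\max}/g_0$, i.e.\ $t\ge g_0/\norm{p_h}^*_{2,\max}\ge r_0$, contradicting $t<r_0$. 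So the nearest point is unique whenever $d(x,D_h)<r_0$, which is exactly $\reach(D_h)\ge r_0$.

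For the perturbation claims, write $\epsilon=\norm{p_h-q}^*_{2,\max}$. I would first bound $\Haus(D_h,D(q))=O(\epsilon)$: for $x\in D(q)$, $|p_h(x)-\lambda|=|p_h(x)-q(x)|\le\epsilon$; a compactness argument shows $\sup_{x\in D(q)}d(x,D_h)\to 0$ as $\epsilon\to 0$, so for small $\epsilon$ the segment from $x$ to $\pi_{D_h}(x)$ stays in $D_h\oplus\delta_0$ and integrating $\nabla p_h$ along it (its normal component staying $\ge g_0/2$) gives $d(x,D_h)\le 2\epsilon/g_0$; the reverse inclusion $D_h\subset D(q)\oplus O(\epsilon)$ is symmetric. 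In particular $D(q)\oplus\delta_0^{(q)}\subset D_h\oplus\delta_0$ for $\delta_0^{(q)}:=\delta_0-2\epsilon/g_0$, and there $\norm{\nabla q}\ge\norm{\nabla p_h}-O(\epsilon)>g_0-O(\epsilon)=:g_0^{(q)}>0$, so (G) holds for $q$ with constants $(\delta_0^{(q)},g_0^{(q)})$ — item 1. Applying the reach bound to $q$ with these constants, together with $\norm{q}^*_{2,\max}=\norm{p_h}^*_{2,\max}+O(\epsilon)$, gives $\reach(D(q))\ge\min\{\delta_0^{(q)}/2,\,g_0^{(q)}/\norm{q}^*_{2,\max}\}=\min\{\delta_0/2,\,g_0/\norm{p_h}^*_{2,\max}\}-O(\epsilon)$, which is item 2 (the lower bound being the substantive half; taking $\delta_0^{(q)}$ close to $\delta_0$ rather than, say, $\delta_0/2$ is what keeps the estimate within $O(\epsilon)$).

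For item 3: since $\Haus(D_h,D(q))=O(\epsilon)$ while both reaches are bounded below by a fixed positive quantity for small $\epsilon$, every $x\in D_h$ satisfies $d(x,D(q))<\reach(D(q))$, so $\Pi:=\pi_{D(q)}|_{D_h}$ is single-valued and $C^1$ (projection onto a $C^2$ set is $C^1$ on its reach-tube). I would show $\Pi$ is a local diffeomorphism: the kernel of $d\pi_{D(q)}$ at a tube point is the $D(q)$-normal line at its foot, and at $\Pi(y)$ that normal is $O(\epsilon)$-close to $n(y)\perp T_yD_h$, hence transverse to $D_h$, so $d\Pi_y$ is an isomorphism of the two $(d-1)$-dimensional tangent spaces. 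Then $\Pi$ is injective: $\Pi(y_1)=\Pi(y_2)=z$ puts $y_1,y_2$ on the $D(q)$-normal line at $z$ within $O(\epsilon)$ of $z$, and near $z$ the set $D_h$ is a graph over a hyperplane nearly orthogonal to that line, so the line meets $D_h$ exactly once. An injective local diffeomorphism from the compact $D_h$ has image open and closed in $D(q)$, hence a union of connected components; since distinct components of $D(q)$ are $\ge 2\reach(D(q))$ apart while every point of $D(q)$ lies within $O(\epsilon)$ of $D_h$, the image meets and therefore contains every component. So $\Pi$ is a bijection $D_h\to D(q)$, and the symmetric argument makes $\pi_{D_h}|_{D(q)}$ a bijection $D(q)\to D_h$; thus $D_h$ and $D(q)$ are normal compatible.

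The main obstacle is the reach bound: turning the analytic nondegeneracy (G) into a quantitative geometric statement through the nearly-tangent Taylor estimate, and — for items 2 and 3 — the careful bookkeeping of how $\delta_0$, $g_0$ and $\norm{\cdot}^*_{2,\max}$ degrade under the perturbation. After that, item 3 still needs care in the local-diffeomorphism and connected-component steps, while the remainder is routine $C^2$ perturbation theory.
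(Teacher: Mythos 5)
Your reach lower bound argument is the same as the paper's: pick $x$ within $r_0$ of $D_h$, suppose two nearest points $y,y'$, use the projection condition $x-y=t\,n(y)$ plus a second-order Taylor expansion of $p_h$ along the chord (legitimate because $\norm{y-y'}\le 2t<\delta_0$ keeps the chord in the tube where (G) holds), and extract the contradiction $t\ge g_0/\norm{p_h}^*_{2,\max}$. The paper phrases the chain through $\norm{x-c}^2-\norm{x-b}^2=\norm{b-c}^2+2(b-c)^T(x-b)\ge\norm{b-c}^2(1-\norm{p_h}_{2,\max}|t_b|)$ rather than your $\norm{y'-y}^2\le 2t\,n(y)\cdot(y'-y)$, but these are the same estimate in slightly different dress. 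Your quantitative tracking of $(\delta_0,g_0)\mapsto(\delta_0-2\epsilon/g_0,\,g_0-O(\epsilon))$ for assertions 1--2 is more explicit than the paper's ``trivially true since (G) only involves gradients,'' and both you and the paper in effect only establish the lower-bound half of the reach ``equality'' in assertion 2, so no gap relative to what the paper actually proves.

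Where you genuinely diverge is assertion 3. The paper disposes of normal compatibility by quoting Theorem 1 of \citet{Chazal2007} --- two $(d-1)$-manifolds whose Hausdorff distance is below $(2-\sqrt2)$ times the smaller reach are normal compatible --- combined with the $O(\epsilon)$ Hausdorff rate, which it borrows from the Cuevas result (your Theorem 8). You instead re-derive the Hausdorff rate from a Taylor/integration argument and then prove normal compatibility from scratch: single-valuedness from the reach bound, local diffeomorphism via transversality of the $D(q)$-normal line to $T_yD_h$, injectivity via the local graph picture, and surjectivity via the open-and-closed image argument combined with the component-separation estimate $d(C_i,C_j)\ge 2\,\reach(D(q))$. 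That last estimate is correct (a midpoint of a shorter chord would violate uniqueness of projection) and the whole argument is sound; in effect you have re-proved a version of the Chazal--Cohen-Steiner--Lieutier theorem in the special case you need. This buys self-containment at the cost of considerable length; the paper's citation is the shorter route, but your version makes explicit the transversality-plus-connectedness mechanism the cited theorem rests on.
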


The proof is given in the supplementary materials.
Lemma \ref{lem::normal} is very similar to Theorem 1 and 2 in \cite{Walther1997}.
Essentially, this lemma shows that the level set $D$ is smooth
and that whenever two smooth densities are sufficiently close,
their level sets will both be smooth, be close to each other, and
have one-to-one and onto normal projections between them.

Given a collection of functions $\mathcal{F} = \{f_t:\R^d\mapsto \R: t\in T\}$, where $T$ is some
index set, the empirical process $\mathbb{G}_n$
is defined as 
\begin{equation}
\mathbb{G}_n(f) = 
\frac{1}{\sqrt{n}}\sum_{i=1}^n (f(X_i)-\mathbb{E}(f(X_1))),\quad f\in \cF.
\end{equation}

\begin{lem}[Empirical Approximation]
Assume (K1--K2) and (G) hold for $p_h$.
Let $D_h$ and $\hat{D}_h$ be the density level sets with level $\lambda$
for $p_h$ and $\hat{p}_h$.
Define the function
\begin{equation}
f_x(y) = \frac{1}{\sqrt{h^d}\norm{ \nabla p_h(x)}}K\left(\frac{x-y}{h}\right)
\end{equation}
with $x\in D_h$.
If $\frac{\log n}{nh^{d+2}} \rightarrow 0, \, h\rightarrow 0$, then
\begin{equation}
\sup_{x\in D_h}\left|\frac{\mathbb{G}_n(f_x)-\sqrt{nh^d} \cdot d(x, \hat{D}_h)}{\sqrt{nh^d}\cdot d(x, \hat{D}_h)} \right|
= O(\norm{\hat{p}_h-p_h}^*_{1,\max}) = O_P\left(\sqrt{\frac{\log n}{nh^{d+2}}}\right).
\end{equation}
\label{thm::empirical}
\end{lem}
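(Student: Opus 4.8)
The plan is to reduce the statement to two ingredients: a deterministic ``geometry-to-analysis'' identity that, for each fixed $x\in D_h$, rewrites $d(x,\hat D_h)$ in terms of the pointwise deviation $\hat p_h(x)-p_h(x)$; and the observation that this deviation, once normalized by $\norm{\nabla p_h(x)}$, is exactly $\mathbb{G}_n(f_x)/\sqrt{nh^d}$, together with a uniform-in-$x$ rate for $\norm{\hat p_h-p_h}^*_{1,\max}$. I begin with the normalization. Since $\hat p_h(x)=(nh^d)^{-1}\sum_i K((x-X_i)/h)$ and $\mathbb{E}\hat p_h(x)=p_h(x)$, a one-line computation gives $\tfrac1n\sum_i f_x(X_i)=\tfrac{\sqrt{h^d}}{\norm{\nabla p_h(x)}}\hat p_h(x)$ and $\mathbb{E} f_x(X_1)=\tfrac{\sqrt{h^d}}{\norm{\nabla p_h(x)}}p_h(x)$, hence $\mathbb{G}_n(f_x)=\tfrac{\sqrt{nh^d}}{\norm{\nabla p_h(x)}}\,(\hat p_h(x)-p_h(x))=\tfrac{\sqrt{nh^d}}{\norm{\nabla p_h(x)}}\,(\hat p_h(x)-\lambda)$, the last step because $p_h(x)=\lambda$ on $D_h$. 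So it suffices to show $\sqrt{nh^d}\,d(x,\hat D_h)=\tfrac{\sqrt{nh^d}}{\norm{\nabla p_h(x)}}\,|\hat p_h(x)-\lambda|\,(1+O(\norm{\hat p_h-p_h}^*_{1,\max}))$ uniformly over $x\in D_h$; equivalently, that $\sqrt{nh^d}\,d(x,\hat D_h)$ and $|\mathbb{G}_n(f_x)|$ agree up to relative error $O(\norm{\hat p_h-p_h}^*_{1,\max})$ (the sign in the displayed ratio being that of $\hat p_h(x)-\lambda$).

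For the geometric identity I would work on the event on which $\rho_n:=\norm{\hat p_h-p_h}^*_{1,\max}$ is small; this event has probability tending to one because, by (K1)--(K2) and the Gin\'e--Guillou-type uniform deviation bound for the kernel class (\citealp{Gine2002}), $\rho_n=O_P(\sqrt{\log n/(nh^{d+2})})$, which tends to $0$ under the bandwidth hypothesis. On this event, Lemma~\ref{lem::normal} applied with $q=\hat p_h$ gives that (G) holds for $\hat p_h$ (with constants, say, $\delta_0/2$ and $g_0/2$), that $D_h$ and $\hat D_h$ are normal compatible, and that $\reach(\hat D_h)$ is bounded below by a positive constant. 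Fix $x\in D_h$, so $p_h(x)=\lambda$ and $\norm{\nabla p_h(x)}>g_0$. A line search along $\nabla p_h(x)/\norm{\nabla p_h(x)}$, using that $\norm{\nabla\hat p_h}$ is bounded below near $D_h$, produces a point of $\hat D_h$ within distance $O(|\hat p_h(x)-\lambda|/g_0)=O(\sup_{\K}|\hat p_h-p_h|)$ of $x$, so $d(x,\hat D_h)$ is that small and the nearest point $\hat x:=\pi_{\hat D_h}(x)$ is unique with $x-\hat x\parallel\nabla\hat p_h(\hat x)$ by first-order optimality. A first-order Taylor expansion of $\hat p_h$ along $[\hat x,x]$ then yields $|\hat p_h(x)-\lambda|=|\hat p_h(x)-\hat p_h(\hat x)|=\norm{\nabla\hat p_h(\hat x)}\,d(x,\hat D_h)\,(1+O(d(x,\hat D_h)\,\norm{\hat p_h}^*_{2,\max}))$, and replacing $\norm{\nabla\hat p_h(\hat x)}$ by $\norm{\nabla p_h(x)}$ costs a factor $1+O(\rho_n)$ via the triangle inequality and the second-derivative bound (using $\norm{\nabla p_h(x)}>g_0$). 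Since $d(x,\hat D_h)=O(\sup_{\K}|\hat p_h-p_h|)=O(h\rho_n)$, all corrections are $O(\rho_n)$, and the implied constants depend only on $g_0,\delta_0$, the kernel, and $\norm{p_h}^*_{2,\max}$, so the bound is uniform over $x\in D_h$. Combining with the normalization gives the statement.

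The main obstacle is controlling the curvature correction, i.e.\ $\norm{\hat p_h}^*_{2,\max}$ (equivalently, securing a uniform positive lower bound on $\reach(\hat D_h)$): this is exactly where Lemma~\ref{lem::normal} and a second-order version of the kernel deviation bound enter, and one must check that the bandwidth condition $\log n/(nh^{d+2})\to0$, together with the smoothness keeping $\norm{p_h}^*_{2,\max}$ under control, is enough to absorb this term into the claimed $O(\norm{\hat p_h-p_h}^*_{1,\max})$ rate. The rest --- the line search, the first-order optimality of the projection, and bookkeeping for uniformity in $x$ --- is routine.
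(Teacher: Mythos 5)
Your proof follows essentially the same route as the paper's: write $\hat p_h(x)-p_h(x)$ as $\mathbb{G}_n$ evaluated at the (unnormalized) kernel function, project $x$ onto $\hat D_h$, use that the projection direction is parallel to $\nabla\hat p_h$ at the foot, Taylor-expand $\hat p_h$ along that segment, and replace $\nabla\hat p_h$ at the foot by $\nabla p_h(x)$ at cost $O(\|\hat p_h-p_h\|^*_{1,\max})$, invoking Lemma~\ref{lem::normal} and the Gin\'e--Guillou/Talagrand-type uniform deviation bounds throughout. The only differences are presentational: you supply a short line-search argument where the paper simply cites the $\Haus(\hat D_h,D_h)=O(\|\hat p_h-p_h\|_{0,\max})$ rate from \cite{Cuevas2006}, and you are somewhat more explicit than the paper in flagging that the curvature correction is controlled by $\|\hat p_h\|^*_{2,\max}$ and in noting the sign convention implicit in the displayed ratio.
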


A key element for the proof of Lemma~\ref{thm::empirical}
is the smoothness of $D_h$ and $\hat{D}_h$, which relies on Lemma~\ref{lem::normal}.
This smoothness  allows us to 
approximate the local difference by an empirical process.

\begin{remark}
Lemma~\ref{thm::empirical} implies that 
for each $x\in D_h$,
$d(x,\hat{D}_h)$
converges to a mean $0$ Gaussian process.
We can use $\mathbb{E}\left(d(x,\hat{D}_h)^2\right)$ as a measure
of \emph{local uncertainty} \cite{chen2014asymptotic} -- analogous to the mean squared error --
and thus can apply the bootstrap to estimate this quantity.
\end{remark}

Lemma~\ref{thm::empirical} shows that 
the projected distance to
the level sets can be approximated by a stochastic process 
(the empirical process) defined on a smooth manifold.
Specifically,
Lemma~\ref{thm::empirical} shows that 
the projection distance can be approximated by 
an empirical process on certain functions $f_x$,
where $x\in D_h$. The level sets $D_h$ now acts as an index set.
Thus, we define the function space
\begin{equation}
\cF = \left\{f_x(y)\equiv\frac{1}{\sqrt{h^d}\norm{ \nabla p_h(x)}}K\left(\frac{x-y}{h}\right): x\in D_h\right\}
\label{eq::cF}
\end{equation}
and define a Gaussian process $\mathbb{B}$ on $\cF$
such that for all $f_1, f_2\in\cF$,
\begin{equation}
\mathbb{B}(f_1) \overset{d}{=} N(0, \mathbb{E}(f_1(X_1)^2)),\quad \Cov(\mathbb{B}(f_1), \mathbb{B}(f_2)) 
= \mathbb{E}(f_1(X_1)f_2(X_1)).
\label{eq::G}
\end{equation}


\begin{thm}[Asymptotic Theory]
Assume (K1--K2) and (G) holds for $p_h$.
Let $D_h$ and $\hat{D}_h $ be the density level sets with level $\lambda$
for $p_h$ and $\hat{p}_h$. 
Then when $\frac{\log n}{nh^{d+2}} \rightarrow 0,\, h\rightarrow 0$, 
the Hausdorff distance satisfies
\begin{equation}
\sup_t\left|\mathbb{P}\left(\sqrt{nh^d}\ \Haus(\hat{D}_h, D_h)<t\right) - 
\mathbb{P}\left(\sup_{f\in\cF}|\mathbb{B}(f)|<t\right)\right| 
= O\left(\left(\frac{\log^7 n}{nh^d}\right)^{1/8}\right),
\end{equation}
where $\cF$ is defined in equation \eqref{eq::cF}
and $\mathbb{B}$ is a Gaussian process defined on $\cF$ satisfying equation \eqref{eq::G}.
\label{thm::Gaussian}
\end{thm}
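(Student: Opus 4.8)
The plan is to chain together Lemma~\ref{thm::empirical} with an anti-concentration / Gaussian-approximation argument for the supremum of the empirical process. First I would use Lemma~\ref{lem::normal}: since the rate condition $\log n/(nh^{d+2})\to 0$ forces $\norm{\hat p_h - p_h}^*_{2,\max} = o_P(1)$ (by the standard uniform concentration of KDE derivatives, which is available under (K1)--(K2)), on an event of probability tending to $1$ the sets $D_h$ and $\hat D_h$ are normal compatible, so $\Haus(\hat D_h, D_h) = \sup_{x\in D_h} d(x,\hat D_h)$. Then Lemma~\ref{thm::empirical} gives $\sqrt{nh^d}\,d(x,\hat D_h) = \mathbb{G}_n(f_x)(1+O_P(\sqrt{\log n/(nh^{d+2})}))$ uniformly over $x\in D_h$, so that $\sqrt{nh^d}\,\Haus(\hat D_h,D_h) = \sup_{f\in\cF}|\mathbb{G}_n(f)| + R_n$, where the remainder $R_n$ is controlled by $\sup_{f\in\cF}|\mathbb{G}_n(f)|$ times the Lemma~\ref{thm::empirical} error rate. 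Since $\sup_{f\in\cF}|\mathbb{G}_n(f)| = O_P(\sqrt{\log n})$ (by (K2), the class $\cF$ has polynomial covering numbers, so maximal inequalities apply), we get $R_n = O_P(\sqrt{\log^2 n/(nh^{d+2})})$, which is of smaller order than the claimed bound.

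Next I would invoke a Gaussian coupling for the empirical process indexed by the VC-type class $\cF$. The class $\cF$ is uniformly bounded (envelope of order $h^{-d/2}$ up to the gradient normalization, which is bounded below by $g_0$ thanks to (G)) and has polynomial uniform covering numbers by (K2), so the Chernozhukov--Chetverikov--Kato type coupling for suprema of empirical processes applies: there is a version $\mathbb{B}$ of the Gaussian process with covariance \eqref{eq::G} such that $\bigl|\sup_{f\in\cF}|\mathbb{G}_n(f)| - \sup_{f\in\cF}|\mathbb{B}(f)|\bigr| = O_P\bigl((\log^7 n / (nh^d))^{1/8}\bigr)$ --- this is precisely where the unusual exponent $1/8$ and the power $\log^7 n$ enter, matching the statement. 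I would cite the relevant coupling theorem and check that the envelope, VC characteristics, and the effective sample-size scaling $nh^d$ (coming from the $h^{-d/2}$ normalization in $f_x$) produce exactly that rate.

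Finally I would combine the two approximations with an anti-concentration inequality for $\sup_{f\in\cF}|\mathbb{B}(f)|$ to convert the coupling (a statement about random variables being close) into the Kolmogorov-distance bound claimed for the CDFs. Writing $S_n = \sqrt{nh^d}\,\Haus(\hat D_h, D_h)$, $T_n = \sup_{f\in\cF}|\mathbb{G}_n(f)|$, and $T = \sup_{f\in\cF}|\mathbb{B}(f)|$, for any $\epsilon>0$ one has $\P(S_n < t) \le \P(T < t + \epsilon) + \P(|S_n - T_n| > \epsilon/2) + \P(|T_n - T| > \epsilon/2) \le \P(T < t) + \sup_t \P(t \le T \le t+\epsilon) + (\text{coupling and remainder error})$, and symmetrically for the lower bound; the anti-concentration bound controls $\sup_t\P(t\le T\le t+\epsilon)$ by a constant times $\epsilon$ times a mild logarithmic factor. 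Optimizing $\epsilon$ against the coupling rate $(\log^7 n/(nh^d))^{1/8}$ gives the stated bound.

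The main obstacle is the middle step: obtaining the Gaussian coupling at the stated rate. This requires verifying that $\cF$ is a bona fide VC-type (Euclidean) class with the right envelope and covering-number behavior as $h\to 0$ --- in particular that the normalization $1/\norm{\nabla p_h(x)}$ is uniformly bounded above and below over $x\in D_h$ (which follows from (G) and $p_h\in\BC^3$, but must be tracked uniformly in $n$ as $h$ varies), and that $D_h$ as an index set does not blow up the entropy. Once the class is set up correctly, the coupling rate is a direct application of an off-the-shelf theorem, but the bookkeeping to confirm that the $h$-dependence feeds through to exactly $(\log^7 n/(nh^d))^{1/8}$ is the delicate part, as is ensuring the Lemma~\ref{thm::empirical} multiplicative-error remainder is genuinely negligible compared to that rate under the assumed bandwidth condition.
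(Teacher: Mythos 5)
Your proposal follows essentially the same three-step route as the paper's proof: (i) normal compatibility from Lemma~\ref{lem::normal} to reduce $\Haus$ to a one-sided supremum, (ii) Lemma~\ref{thm::empirical} to replace that supremum by $\sup_{f\in\cF}|\mathbb{G}_n(f)|$ up to a negligible remainder, and (iii) a Chernozhukov--Chetverikov--Kato Gaussian coupling for the supremum of the empirical process over the VC-type class $\cF$, converted into a Kolmogorov-distance bound via an anti-concentration inequality and an optimization over the coupling-error parameter. The only cosmetic difference is that you track the Lemma~\ref{thm::empirical} remainder with an explicit extra $\sqrt{\log n}$ factor from $\sup_{f\in\cF}|\mathbb{G}_n(f)|=O_P(\sqrt{\log n})$, whereas the paper bounds it via $d(x,\hat D_h)=O_P((nh^d)^{-1/2})$; both are negligible against the target rate, so the conclusion is unchanged.
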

The proof of Theorem~\ref{thm::Gaussian} depends on two geometric observations.
First, the empirical approximation in Lemma~\ref{thm::empirical},
shows that the local difference is approximately the same as an empirical process,
and hence the maximum local difference is approximated by the maximum of the empirical process.
Second, the normal compatibility between $\hat{D}_h$ and $D_h$ guaranteed by 
Lemma~\ref{lem::normal},
which implies that maximal of local difference is the same as the Hausdorff distance. 

Theorem \ref{thm::Gaussian} shows that the Hausdorff distance $\Haus(\hat{D}_h, D_h)$
can be approximated by a maximum over a certain Gaussian process.
Note that we cannot directly use this theorem to construct a confidence set
for $D_h$ since the Gaussian process is defined on $D_h$, which is unknown.
Later we will use the bootstrap to approximate this limiting distribution and
construct a confidence set.

The random variable $\sup_{f\in\cF}|\mathbb{B}(f)|$
follows an extreme-value type distribution.
However, writing down the explicit form for this distribution
is not very helpful in statistical inference because it involves 
unknown quantities and because the convergence 
to the distribution is notoriously slow.
Instead, we will use the bootstrap to approximate the 
distribution of $\Haus(\hat{D}_h, D_h)$,
avoiding the unknown quantities and yielding much faster convergence.

\section{Statistical Inference}	\label{sec::inf}


We now show that we can construct valid
confidence sets for $D_h$ with the bootstrap.
A set $S_{n, 1-\alpha}$ is called an asymptotically valid
confidence set for $D_h$ if
\begin{equation}
\mathbb{P}(D_h \subset  S_{n, 1-\alpha}) = 1-\alpha + O(r_n),
\end{equation}
where $r_n\rightarrow 0$ as $n\rightarrow \infty$.
We propose two methods for constructing a confidence set,
and we will show that they are both asymptotically valid.

\subsection{Method 1: Hausdorff Loss}

The first approach is to use the Hausdorff distance between the level sets.
Let $W_n = \Haus(\hat{D}_h, D_h)$ and define
\begin{equation}
w_{1-\alpha} = F^{-1}_{W_n}(1-\alpha),
\end{equation}
where $F_A$ denotes the cdf for a random variable $A$.
Then, it is easy to see that
\begin{equation}
\mathbb{P}(D_h\subset \hat{D}_h\oplus w_{1-\alpha})\geq 1-\alpha.
\end{equation}
We use the bootstrap to estimate $w_{1-\alpha}$.

Let $X^*_1,\cdots, X^*_n$ be a bootstrap sample from the empirical measure.
Let $\hat{p}_h^*$ denote the KDE using the bootstrap sample,
and $\hat{D}^*_n$ the corresponding level set.
We define $W^*_n = \Haus(\hat{D}^*_n, \hat{D}_h)$
and 
\begin{equation}
\hat{w}_{1-\alpha} = F^{-1}_{W^*_n}(1-\alpha),
\end{equation}
Then the bootstrap confidence set is $\hat{D}_h\oplus \hat{w}_{1-\alpha}$.

\begin{thm}
Assume (K1--K2) and (G) holds for $p_h$ and 
$\frac{\log n}{nh^{d+2}} \rightarrow 0, \,h\rightarrow 0$.
Let $D_h$ and $\hat{D}_h $ and $\hat{D}^*_n$ be the 
density level set with level $\lambda$
for $p_h$ and $\hat{p}_h$ and $\hat{p}_h^*$. Then
there exist $\mathcal{X}_n$ such that
\begin{align}
\sup_t\Big|\mathbb{P}\left(\sqrt{nh^d}\Haus\left(\hat{D}^*_n, \hat{D}_h\right)<t \,\big|\,X_1,\cdots,X_n\right)&- 
         \mathbb{P}\left(\sqrt{nh^d}\Haus(\hat{D}_h, D_h)<t\right)\Big| \nonumber\\
   &= O\left(\left(\norm{\hat{p}_h-p_h}_{\max}\right)^{1/8}\right)
\end{align}
for all $(X_1,\cdots,X_n)\in \mathcal{X}_n$
and $\mathbb{P}(\mathcal{X}_n)\geq 1-3e^{-nh^{d+2}\tilde{A}_0}$ for some constants $\tilde{A}_0$.
Thus,
\begin{equation}
\mathbb{P}\left(D_h\subset \hat{D}_h\oplus \hat{w}_{1-\alpha}\right) =
    1-\alpha+O\left(\left(\frac{\log^7 n}{nh^{d}}\right)^{1/8}\right).
  \end{equation}
\label{thm::bootstrap}
\end{thm}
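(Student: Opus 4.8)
\textbf{Proof proposal for Theorem~\ref{thm::bootstrap}.}

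The plan is to mimic the proof of Theorem~\ref{thm::Gaussian} one level up, treating $\hat{p}_h$ as the ``truth'' and $\hat{p}_h^*$ as the estimator, and then to compare the two Gaussian approximations. First I would establish a high-probability event $\cX_n$ on which $\hat{p}_h$ behaves like a nice density: by the concentration bound for $\sup_x|\hat{p}_h - p_h|$ together with its derivative analogue (available from assumptions (K1--K2) via the Gin\'e--Guillou / Einmahl--Mason machinery already invoked before Lemma~\ref{lem::normal}), we have $\norm{\hat{p}_h - p_h}^*_{2,\max} \le \epsilon_n$ with probability at least $1 - 3e^{-nh^{d+2}\tilde{A}_0}$ for a suitable rate $\epsilon_n$ of order $\sqrt{\log n/(nh^{d+2})}$. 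On this event Lemma~\ref{lem::normal} applies with $q = \hat{p}_h$: condition (G) holds for $\hat{p}_h$ (with slightly perturbed constants), $D_h$ and $\hat{D}_h$ are normal compatible, and $\reach(\hat{D}_h)$ is bounded below. This is exactly the structural input needed to re-run Lemmas~\ref{thm::empirical} and the argument of Theorem~\ref{thm::Gaussian} conditionally on the data.

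Next, conditionally on $(X_1,\dots,X_n)\in\cX_n$, I would apply the empirical-approximation step (Lemma~\ref{thm::empirical}) to the bootstrap world: the projected distance $d(x, \hat{D}^*_n)$ for $x\in\hat{D}_h$ is approximated, uniformly, by the bootstrap empirical process $\mathbb{G}^*_n(\hat{f}_x)$ indexed by $\hat{f}_x(y) = (h^d)^{-1/2}\norm{\nabla\hat{p}_h(x)}^{-1}K((x-y)/h)$, $x\in\hat{D}_h$, with error $O_P(\norm{\hat{p}_h^* - \hat{p}_h}^*_{1,\max}) = O_P(\sqrt{\log n/(nh^{d+2})})$ under the conditional (empirical) measure. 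Normal compatibility of $\hat{D}^*_n$ and $\hat{D}_h$ (again Lemma~\ref{lem::normal}, now with the pair $\hat{p}_h, \hat{p}_h^*$) turns the maximal local difference into the Hausdorff distance, so $\sqrt{nh^d}\,\Haus(\hat{D}^*_n,\hat{D}_h)$ is within the stated error of $\sup_{\hat{f}\in\hat{\cF}}|\mathbb{G}^*_n(\hat{f})|$. By the same token $\sqrt{nh^d}\,\Haus(\hat{D}_h, D_h) \approx \sup_{f\in\cF}|\mathbb{G}_n(f)|$ from Theorem~\ref{thm::Gaussian}. It then remains to compare (i) the bootstrap empirical process $\mathbb{G}^*_n$ on $\hat{\cF}$ with a Gaussian process $\hat{\mathbb{B}}$ having covariance $\hat{\E}(\hat{f}_1\hat{f}_2)$, via a conditional Gaussian approximation (Chernozhukov--Chetverikov--Kato type, which is the tool behind the $(\log^7 n/(nh^d))^{1/8}$ rate in Theorem~\ref{thm::Gaussian}), and (ii) the two Gaussian suprema $\sup_{\hat{f}\in\hat{\cF}}|\hat{\mathbb{B}}(\hat{f})|$ and $\sup_{f\in\cF}|\mathbb{B}(f)|$ using a Gaussian comparison (Gaussian anti-concentration plus a bound on the difference of covariance functions). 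The covariance difference is controlled by $\norm{\hat{p}_h - p_h}^*_{1,\max}$: the index sets $\hat{\cF}$ and $\cF$ are close because $\hat{D}_h$ is within $O(\norm{\hat{p}_h-p_h}^*_{1,\max})$ of $D_h$ (normal compatibility) and the gradient normalizations differ by the same order, so $|\hat{\E}(\hat{f}_1\hat{f}_2) - \E(f_1 f_2)| = O(\norm{\hat{p}_h - p_h}_{\max})$ after matching points via the projection. Combining, the conditional cdf of $\sqrt{nh^d}\,\Haus(\hat{D}^*_n,\hat{D}_h)$ and that of $\sqrt{nh^d}\,\Haus(\hat{D}_h, D_h)$ differ in sup-norm by $O((\norm{\hat{p}_h - p_h}_{\max})^{1/8})$, which is the first displayed claim.

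For the final conclusion, I would unwind the confidence-set coverage: by construction $\hat{w}_{1-\alpha}$ is the conditional $(1-\alpha)$-quantile of $\Haus(\hat{D}^*_n,\hat{D}_h)$, so the first display (plus Gaussian anti-concentration to pass from cdf closeness to quantile closeness) gives that $\sqrt{nh^d}\,\hat{w}_{1-\alpha}$ approximates the $(1-\alpha)$-quantile $w_{1-\alpha}$ of $\sqrt{nh^d}\,\Haus(\hat{D}_h, D_h)$ up to the same error, on $\cX_n$. Since $\P(D_h\subset\hat{D}_h\oplus w_{1-\alpha})\ge 1-\alpha$ exactly (the inclusion property), replacing $w_{1-\alpha}$ by $\hat{w}_{1-\alpha}$ and accounting for the off-event $\cX_n^c$ of probability $3e^{-nh^{d+2}\tilde{A}_0}$ (which is negligible relative to the polynomial rate) yields $\P(D_h\subset\hat{D}_h\oplus\hat{w}_{1-\alpha}) = 1-\alpha + O((\log^7 n/(nh^d))^{1/8})$, after plugging $\norm{\hat{p}_h - p_h}_{\max} = O_P(\sqrt{\log n/(nh^d)})$ into the conditional error bound.

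I expect the main obstacle to be the conditional Gaussian approximation in step (i): one must show the CCK-type coupling rate for $\mathbb{G}^*_n$ holds uniformly over the data in $\cX_n$, which requires that the random index class $\hat{\cF}$ inherits the VC-type entropy bound of (K2) with constants not depending on the sample (true because $\hat{\cF}\subset\cK$ up to the deterministic normalization, whose range is pinned down on $\cX_n$ by the lower bound on $\norm{\nabla\hat{p}_h}$ from (G)), and that the relevant moment/envelope quantities are stable under bootstrap resampling. The bookkeeping needed to show all error terms genuinely collapse to $O((\norm{\hat{p}_h-p_h}_{\max})^{1/8})$ — i.e., that the weakest of the several rates is this one — is the other place where care is required, but it is routine given the lemmas.
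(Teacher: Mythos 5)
Your proposal is correct and follows essentially the same route as the paper: establish a high-probability event on which Lemma~\ref{lem::normal} applies to $\hat{p}_h$, re-run the empirical approximation and Gaussian approximation conditionally for the bootstrap pair $(\hat{p}^*_h,\hat{p}_h)$, then compare the resulting Gaussian supremum (indexed by $\hat{D}_h$ with empirical covariance) to the one in Theorem~\ref{thm::Gaussian} (indexed by $D_h$ with population covariance), and finally unwind to coverage via quantile closeness. The only cosmetic difference is that the paper splits the Gaussian-process comparison into two steps (index set via matched $\epsilon$-nets, then covariance), whereas you bundle them into a single covariance-difference estimate after projection matching; the underlying argument is the same, and your remark about needing anti-concentration to pass from cdf closeness to quantile closeness makes explicit a step the paper's proof leaves implicit.
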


An intuitive explanation for Theorem~\ref{thm::bootstrap}
is that as $n$ goes to infinity, the bootstrap process
converges to the same Gaussian process as the empirical process -- 
thus, they share the same Berry-Esseen bound.

\subsection{Method 2: Supremum Loss}\label{sec::CI::M2}

The second approach is to use the supremum norm of the KDE
and impose an upper and lower bound around the density level.

Let $M_n = \sup_{x\in\K}|\hat{p}_h(x)-p_h(x)|$ and $m_{1-\alpha} = F^{-1}_{M_n}(1-\alpha)$.
Define
\begin{equation}
C_{n,1-\alpha} = \Set{x\in\K\st |\hat{p}_h(x)- \lambda|\leq m_{1-\alpha}}.
\label{eq::CI0}
\end{equation}
It is easy to verify that 
\begin{equation}
\mathbb{P}(D_h\subset C_{n,1-\alpha})\geq 1-\alpha.
\label{eq::CI_C}
\end{equation}

Again we use the bootstrap to estimate the quantile.
Recall that $\hat{p}_h^*$ is the KDE based on the bootstrap sample.
We define $M^*_n = \sup_{x\in\K}|\hat{p}_h^*(x)- \hat{p}_h(x)|$
and set
\begin{equation}
\hat{m}_{1-\alpha} = F^{-1}_{M^*_n}(1-\alpha).
\end{equation}
Then the confidence set is
\begin{equation}
\hat{C}_{n,1-\alpha} = \{x\in\K: |\hat{p}_h(x)- \lambda|\leq \hat{m}_{1-\alpha}\}.
\end{equation}

\begin{thm}
Assume (K1--K2) and (G) holds for $p_h$ and $\frac{\log n}{nh^{d+2}} \rightarrow 0, \,h \rightarrow0$.
Let $D_h$ and $\hat{D}_h $ and $\hat{D}^*_n$ be the density level sets with level $\lambda$
for $p_h$ and $\hat{p}_h$ and $\hat{p}_h^*$. Then
$$
\mathbb{P}\left(D_h\subset \hat{C}_{n,1-\alpha}\right)
= 1-\alpha+O\left(\left(\frac{\log^7 n}{nh^d}\right)^{1/8}\right).
$$
\label{thm::CI2}
\end{thm}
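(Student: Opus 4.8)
\noindent\textbf{Proof plan for Theorem~\ref{thm::CI2}.}
The plan is to reduce the coverage statement to a Berry--Esseen-type comparison between the bootstrap law of $\sqrt{nh^d}M^*_n$ and the law of $\sqrt{nh^d}M_n$, and then to run, for this $\sup$-statistic, the analogue of the argument behind Theorem~\ref{thm::bootstrap}. The key simplification relative to Theorem~\ref{thm::bootstrap} is that no level-set geometry enters: we never need Lemma~\ref{lem::normal} or Lemma~\ref{thm::empirical}, only that $p_h\equiv\lambda$ on $D_h$. First I would rewrite the KDE deviation as an empirical process: with $g_x(y)=h^{-d/2}K((x-y)/h)$ one has $\sqrt{nh^d}\,(\hat{p}_h(x)-p_h(x))=\mathbb{G}_n(g_x)$, so that $\sqrt{nh^d}M_n=\sup_{x\in\K}|\mathbb{G}_n(g_x)|$ and, writing $\mathbb{G}_n^*$ for the empirical (Efron) bootstrap process, $\sqrt{nh^d}M^*_n=\sup_{x\in\K}|\mathbb{G}_n^*(g_x)|$. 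By (K1) the class $\mathcal{G}=\{g_x:x\in\K\}$ has envelope of order $h^{-d/2}$ and $O(1)$ variances, and by (K2) it is VC-type with fixed characteristics $(A,v)$; the same holds for the product class $\mathcal{G}\cdot\mathcal{G}$. Since $p_h\equiv\lambda$ on $D_h\subset\K$, the coverage event satisfies $\{D_h\subset\hat{C}_{n,1-\alpha}\}=\{\sup_{x\in D_h}|\hat{p}_h(x)-p_h(x)|\le\hat{m}_{1-\alpha}\}\supset\{M_n\le\hat{m}_{1-\alpha}\}$, so the crux is to show $\mathbb{P}(M_n\le\hat{m}_{1-\alpha})=1-\alpha+O(r_n)$ with $r_n=(\log^7 n/(nh^d))^{1/8}$, from which the coverage statement for $\hat{C}_{n,1-\alpha}$ follows.

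Next I would establish the Gaussian approximation. Using the Gaussian approximation for suprema of empirical processes over VC-type classes (Chernozhukov--Chetverikov--Kato), I would build, on a suitable probability space, a Gaussian process $\mathbb{B}^P$ indexed by $\mathcal{G}$ with $\Cov(\mathbb{B}^P(g_x),\mathbb{B}^P(g_{x'}))=\Cov_P(g_x(X_1),g_{x'}(X_1))$ such that $\sup_t|\mathbb{P}(\sqrt{nh^d}M_n\le t)-\mathbb{P}(\sup_{x\in\K}|\mathbb{B}^P(g_x)|\le t)|=O(r_n)$; the power $\tfrac18$ and the polylog factor are precisely what those theorems give once the envelope $h^{-d/2}$, the $O(1)$ variance proxy, and the fixed VC characteristics are inserted, and the assumption $\log n/(nh^{d+2})\to0$ (in particular $nh^d/\log^7 n\to\infty$) makes $r_n\to0$. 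The corresponding statement for the empirical bootstrap gives, on an event $\mathcal{X}_n$ with $\mathbb{P}(\mathcal{X}_n)\ge1-O(e^{-cnh^{d}})$, a Gaussian process $\mathbb{B}^{P_n}$ with covariance $\Cov_{P_n}$ such that $\sup_t|\mathbb{P}(\sqrt{nh^d}M^*_n\le t\mid X_1,\dots,X_n)-\mathbb{P}(\sup_{x\in\K}|\mathbb{B}^{P_n}(g_x)|\le t)|=O(r_n)$.

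It then remains to compare $\mathbb{B}^P$ and $\mathbb{B}^{P_n}$ and to pass from distributional to quantile closeness. A Talagrand-type concentration bound on the VC-type product class $\mathcal{G}\cdot\mathcal{G}$ gives $\Delta_n:=\sup_{x,x'\in\K}|\Cov_{P_n}(g_x,g_{x'})-\Cov_P(g_x,g_{x'})|=O_P(\sqrt{\log n/(nh^d)})$; the Gaussian comparison inequality for suprema then yields $\sup_t|\mathbb{P}(\sup_x|\mathbb{B}^{P_n}(g_x)|\le t)-\mathbb{P}(\sup_x|\mathbb{B}^P(g_x)|\le t)|=O_P(\Delta_n^{1/3}\log^{2/3}n)=O_P(r_n)$. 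Chaining with the two Gaussian approximations gives $\sup_t|\mathbb{P}(\sqrt{nh^d}M^*_n\le t\mid X_1,\dots,X_n)-\mathbb{P}(\sqrt{nh^d}M_n\le t)|=O_P(r_n)$ on $\mathcal{X}_n$. Finally, anti-concentration for the supremum of a Gaussian process, $\mathbb{P}(|\,\sup_x|\mathbb{B}^P(g_x)|-t\,|\le\epsilon)\lesssim\epsilon(1+\mathbb{E}\sup_x|\mathbb{B}^P(g_x)|)\lesssim\epsilon\sqrt{\log n}$, converts this Kolmogorov bound into $|\,\mathbb{P}(M_n\le\hat{m}_{1-\alpha})-(1-\alpha)\,|=O(r_n)$, since $\hat{m}_{1-\alpha}$ lies within $O_P(r_n)$ of the $(1-\alpha)$-quantile of the same Gaussian supremum whose true $(1-\alpha)$-quantile $m_{1-\alpha}$ satisfies $\mathbb{P}(M_n\le m_{1-\alpha})=1-\alpha+O(r_n)$. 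Combining with the inclusion from the first paragraph gives $\mathbb{P}(D_h\subset\hat{C}_{n,1-\alpha})=1-\alpha+O(r_n)$.

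The main obstacle is making the empirical-process Gaussian approximation and its bootstrap counterpart genuinely quantitative and uniform in the shrinking bandwidth $h=h_n\to0$: one must track how the growing envelope $h^{-d/2}$, the bounded variance proxy, and the VC characteristics supplied by (K1)--(K2) propagate through the Chernozhukov--Chetverikov--Kato bounds so that the final rate is exactly $O((\log^7 n/(nh^d))^{1/8})$, and one must control the covariance discrepancy $\Delta_n$ over the product class with the matching power of $n$. Everything else --- the reduction to the $\sup$-statistic, the anti-concentration step, and the conversion to quantiles --- is routine once these estimates are in hand, and in particular is strictly simpler than in Theorem~\ref{thm::bootstrap}, which additionally had to invoke the smoothness and normal-compatibility Lemmas~\ref{lem::normal} and~\ref{thm::empirical}.
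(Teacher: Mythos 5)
Your plan reproduces the paper's own (sketchy) argument for this theorem: reduce coverage via equation \eqref{eq::CI_C} to the event $\{M_n\le\hat m_{1-\alpha}\}$, approximate $\sqrt{nh^d}M_n$ and $\sqrt{nh^d}M_n^*$ by suprema of Gaussian processes over the VC-type kernel class (Chernozhukov--Chetverikov--Kato), compare the two Gaussian laws via a covariance comparison on $\mathcal{X}_n$, and close with anti-concentration --- and you are right that Lemmas~\ref{lem::normal} and \ref{thm::empirical} are not needed here, which is exactly why the paper omits the details. One small caveat worth noting (and it is shared with the paper's wording): the inclusion $\{M_n\le\hat m_{1-\alpha}\}\subset\{D_h\subset\hat C_{n,1-\alpha}\}$ is strict in general, so this argument only controls coverage from below; writing the conclusion as an equality rather than ``$\ge 1-\alpha+O(r_n)$'' quietly overstates what the inclusion delivers.
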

The proof of this Theorem is similar to the proof of 
Theorem~\ref{thm::bootstrap}, so we omit the details.
The basic idea is as follows.
By equation \eqref{eq::CI_C},
the quantile of $\hat{M}_n$ can be used to construct confidence sets.
We then show that $\sqrt{nh^d}{M}_n$ is approximated by the maximum of a Gaussian process 
(similar to Theorem~\ref{thm::Gaussian} and made explicit in \citealt{chernozhukov2014anti}).
Finally, we show that the bootstrap $\sqrt{nh^d}M^*_n$ converges to $\sqrt{nh^d}{M}_n$
as in Theorem~\ref{thm::bootstrap}.

This method embodied in Theorem \ref{thm::CI2} is very similar to the methods in
\cite{jankowski2012confidence} and \cite{mammen2013confidence}.
\cite{jankowski2012confidence}
proposes to construct a confidence set of the form
\begin{equation}
C_n = \{x: \lambda-\ell_n\leq\hat{p}_h(x)\leq \lambda+\tau_n\}
\end{equation}
with some $\ell_n, \tau_n\rightarrow 0$.
They require
that $\sqrt{nh^d}(\hat{p}_h-p_h)$ converges weakly
to a known random field.
This convergence is true when $h$ is fixed but is \emph{not} attainable if we allow $h=h_n\rightarrow 0$,
in contrast to our approach, which supports both cases.
They also assume the limiting random field is either known or can be easily estimated;
we do not require that assumption.

\cite{mammen2013confidence}
construct a confidence set using a similar approach to method 2,
but they focus on the original level set $D = \{x: p(x)=\lambda\}$ 
rather than the smoothed version $D_h$.
Instead of taking the supremum of the density deviation over the whole
support $\K$, they propose to focus on the regions $x\in D\triangle \hat{D}_h$. i.e.
\begin{equation}
R_n = \sup_{x\in D\triangle \hat{D}_h} |\hat{p}_h(x)-p|,
\end{equation}
where $A\triangle B= \{x: x\in A, x\notin B\}\cup \{x: x\in B, x\notin A\}$ is the symmetric
difference between sets.
Then they use the upper quantile of $R_n$ to construct a confidence set
of a similar form to \eqref{eq::CI0}
and apply the bootstrap to estimate the quantile.
Their bootstrap consistency relies on
Neumann's method (Proposition 3.1 in \cite{neumann1998strong})
and they assume that $h$ converges fast enough so that one can ignore the
bias for estimating the original density $p$.
Actually, under their assumptions, our proposed bootstrap confidence sets
(from both methods 1 and 2) are also consistent
for the original level set $D$ since the bias converges faster than
the stochastic variation.
The method in \cite{mammen2013confidence} 
should have higher power than our method 2
since they consider taking the supremum over a smaller region.

\begin{remark}	\label{rm::scaled}
We may use a variance stabilizing transform to obtain an adaptive
confidence set using similar idea to \cite{chernozhukov2012inference}.
The variance of $\hat{p}_h(x)$ is proportional to $p(x)$.
Thus, we may use
\begin{align}
V^*_n &= \sup_{x\in\K}\frac{|\hat{p}_h^*(x)- \hat{p}_h(x)|}{\sqrt{\hat{p}_h(x)}}\\
\hat{v}_{1-\alpha} &= F^{-1}_{V^*_n}(1-\alpha)
\end{align}
and set $\hat{v}_{1-\alpha}\times\sqrt{\hat{p}_h(x)}$
as an adaptive threshold for constructing the confidence set.
Namely, the adaptive confidence set is given by
\begin{equation}
\hat{C}^*_{n,1-\alpha} = 
    \Set{x\in\K\st |\hat{p}_h(x)- \lambda|\leq \hat{v}_{1-\alpha}\times\sqrt{\hat{p}_h(x)}}.
\end{equation}
Using the same approach as in the proof to Theorem~\ref{thm::CI2},
we can show that $\hat{C}^*_{n,1-\alpha}$ has asymptotically $1-\alpha$ coverage.
\end{remark}

\begin{remark}
The rate $O\left(\left(\frac{\log^7 n}{nh^d}\right)^{1/8}\right)$ may not be optimal.
In \cite{chernozukov2014central}, they apply a induction technique that 
gives a rate of order $n^{-1/6}$ for the Gaussian approximation.
Despite not being mentioned explicitly in that paper,
we believe that similar technique applies to the empirical process.
If this is true, the rate in Theorem~\ref{thm::Gaussian},
\ref{thm::bootstrap} and \ref{thm::CI2}
can be further refined to 
$O\left(\left(\frac{\log^7 n}{nh^d}\right)^{1/6}\right)$.

\end{remark}

\subsection{Comparing Methods 1 and 2} 

Both methods 1 and 2 generate confidence sets with asymptotically valid coverage.
Figure~\ref{fig::vis_two} compares the $90\%$ confidence sets from both methods
on the old faithful dataset.
Apparently, method 1 (left panel; blue regions) is superior to
method 2 (right panel; gold regions)
in the sense that the size of confidence set is much smaller.
The main reason is that both methods use the maximum over certain
empirical processes but the two processes are defined on different function spaces.
Method 1 only takes the supremum over a small function space $\cF$,
in which the index set contains only points on the level sets $D_h$.
However, method 2 takes the maximum over a large function space
whose index set is the whole space $\K$.
Thus, we expect the second method to have a wider confidence set.

Despite the fact that method 2 yields a much larger confidence sets,
it has some nice properties.
First, method 2 is very simple: all we need is to compute
the bootstrap distribution of supremum loss.
Second, method 2 is connected to the methods proposed in 
\cite{mammen2013confidence} and \cite{jankowski2012confidence}.
Third, the confidence sets produced in method 2 are
related to level sets with level $\lambda\pm \hat{m}_{1-\alpha}$.
The last property makes it easy to visualize the confidence sets;
see Section \ref{sec::vis::CI}.


\begin{figure}
\centering
\includegraphics[width=2in]{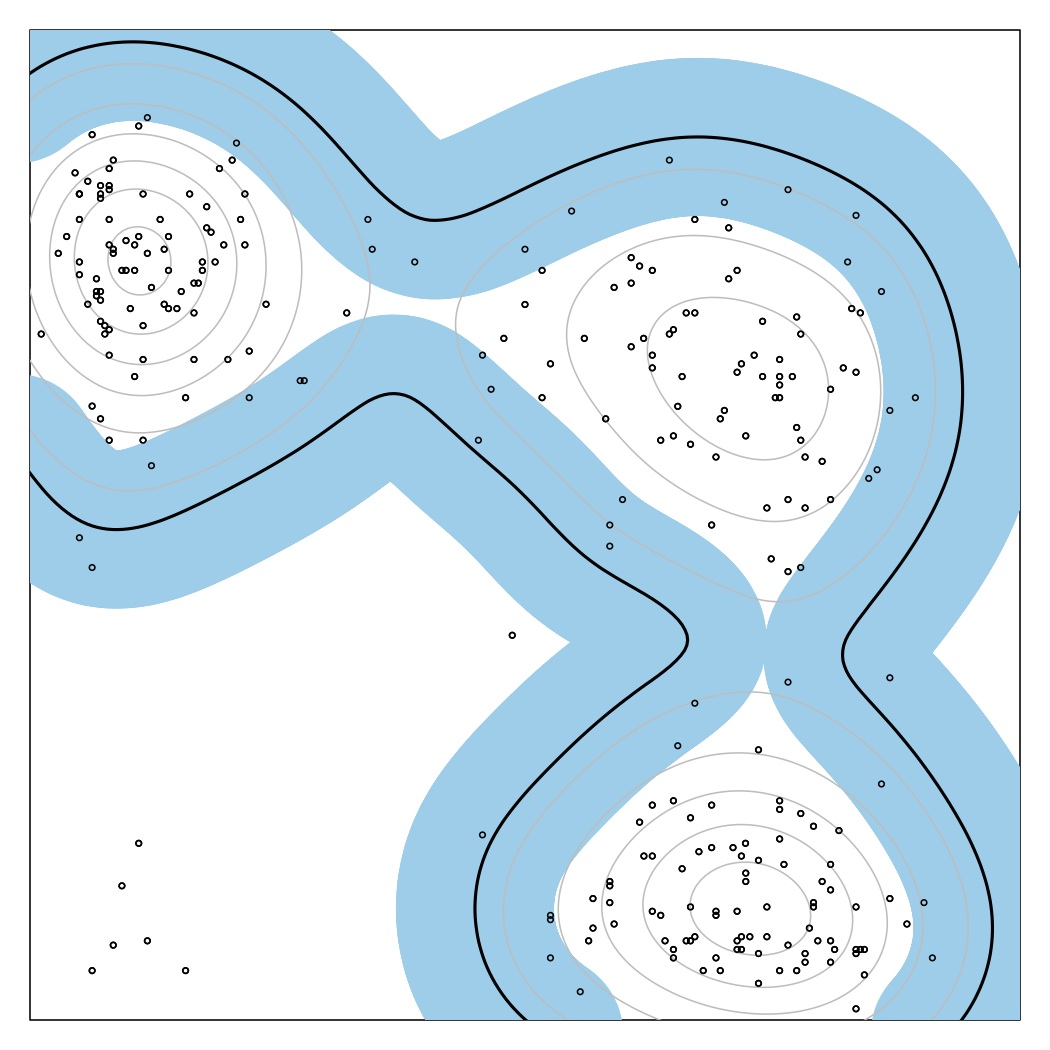}
\includegraphics[width=2in]{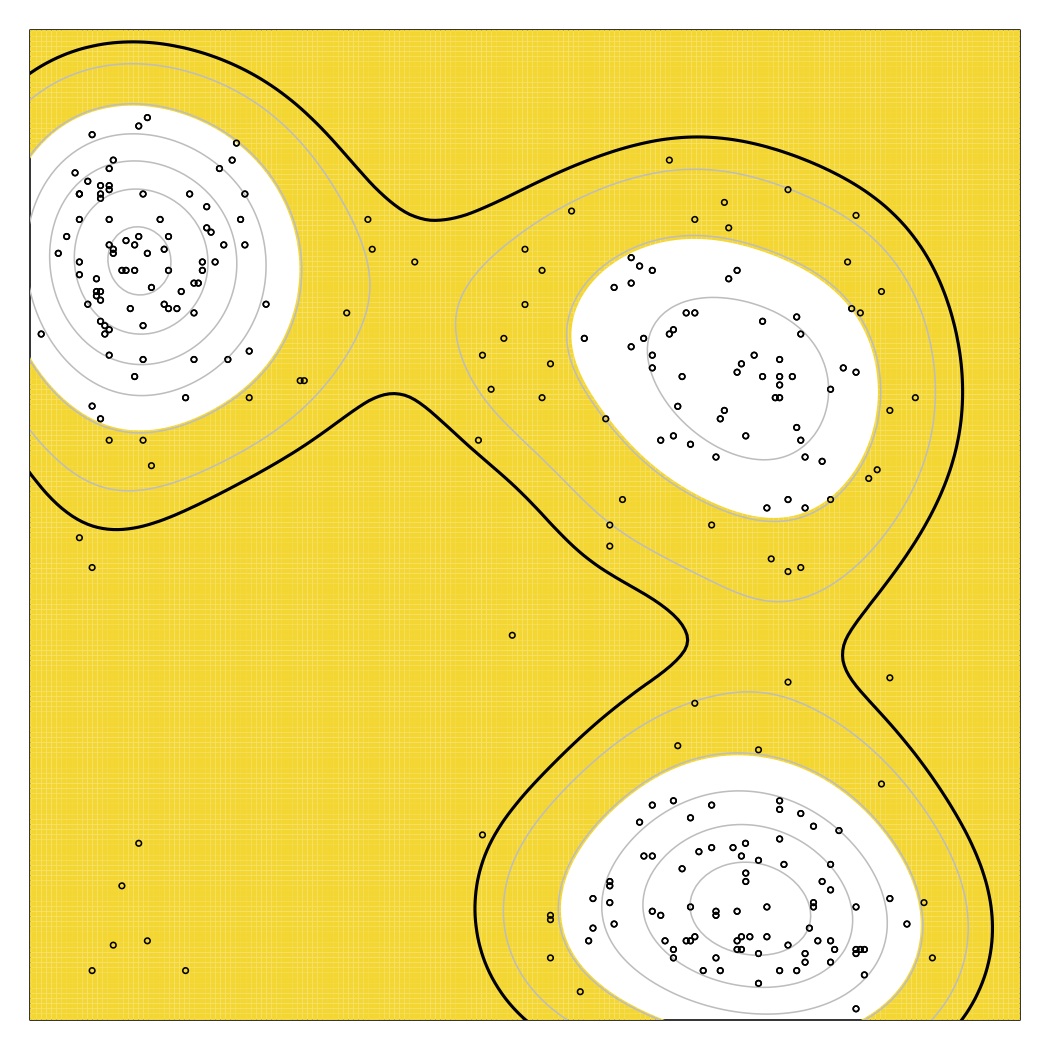}
\caption{An example of $90\%$ confidence sets 
using Hausdorff loss of level set (method 1; blue regions in left panel) and the supremum
loss of the density (method 2; gold regions in right panel).
This dataset is the old faithful dataset.
As can be seen easily, the supremum loss of density (right panel) is too huge
so that it contains a wide regions as the confidence set.
On the other hand, Hausdorff loss of level sets (left panel) gives
a much tighter confidence set.
The smoothing parameter $h=0.3$ and the two axes are from $1.5$ to $5.5$.}
\label{fig::vis_two}
\end{figure}

\begin{figure}
\centering
\includegraphics[width=2in]{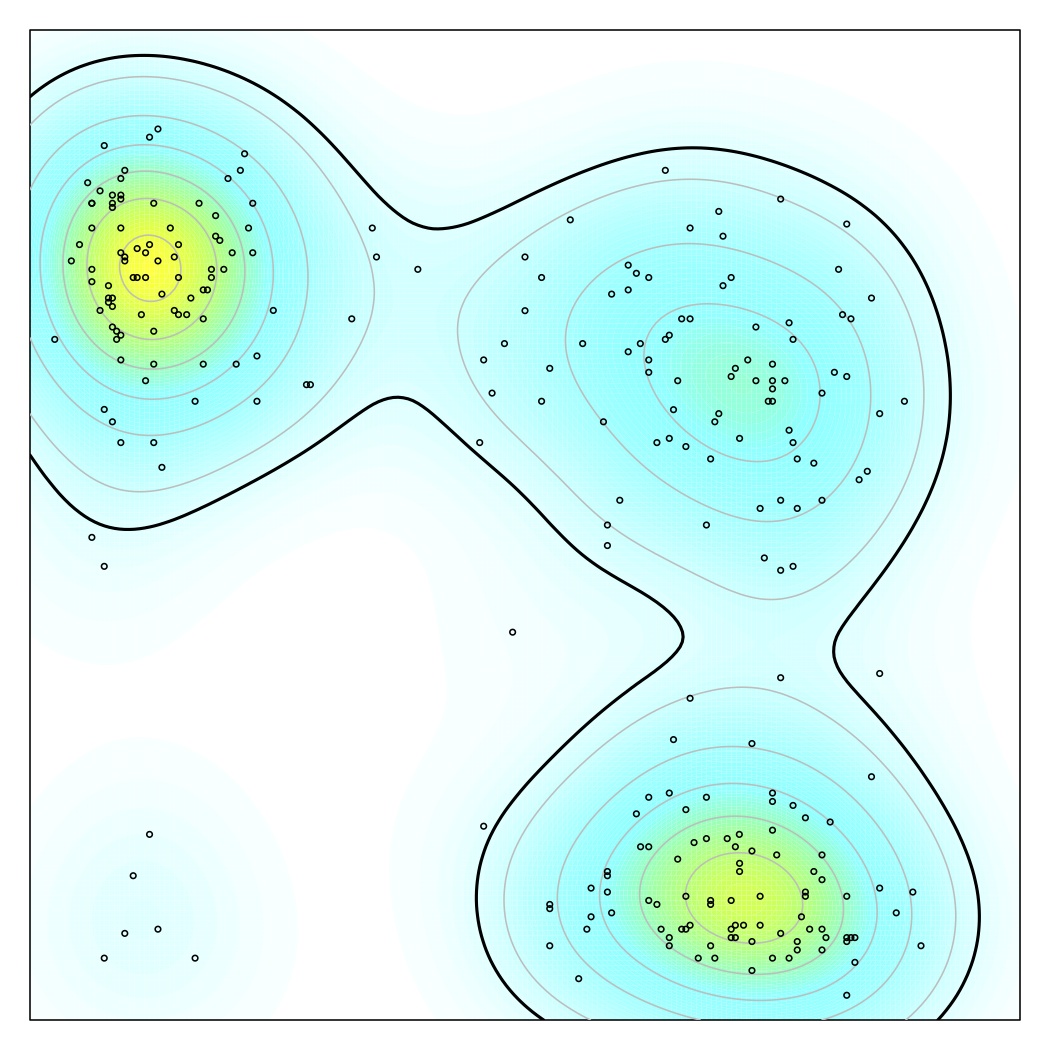}
\includegraphics[width=2in]{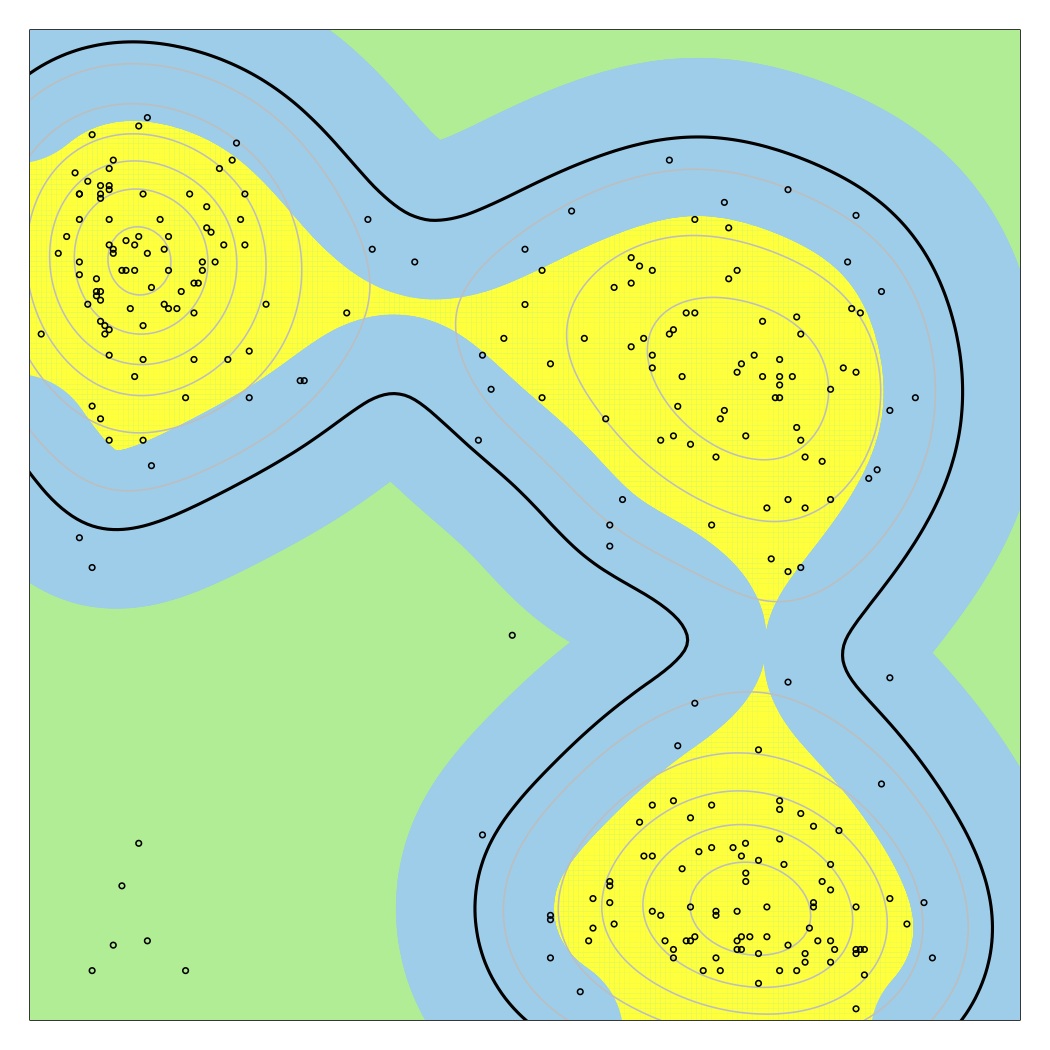}
\caption{An example for density level sets and confidence regions for the old faithful data.
Left: density level set (thick black contour denotes the specified level $\lambda$).
Right: $90\%$ confidence regions for the density level sets.
We also have $90\%$ confidence that (1) all the yellow regions have density above $\lambda$
(2) all green regions have density below $\lambda$ and
(3) the level sets $\{x: p_h(x)= \lambda\}$ are within the blue regions.
Note that the yellow and green regions are the collection of $x$ 
that we reject $H_{\sf in, 0}(x)$
and $H_{\sf out, 0}(x)$ and simultaneously control the significance level at $\alpha=10\%$.}
\label{fig::CR}
\end{figure}

\begin{figure}
\centering
\includegraphics[width=2in]{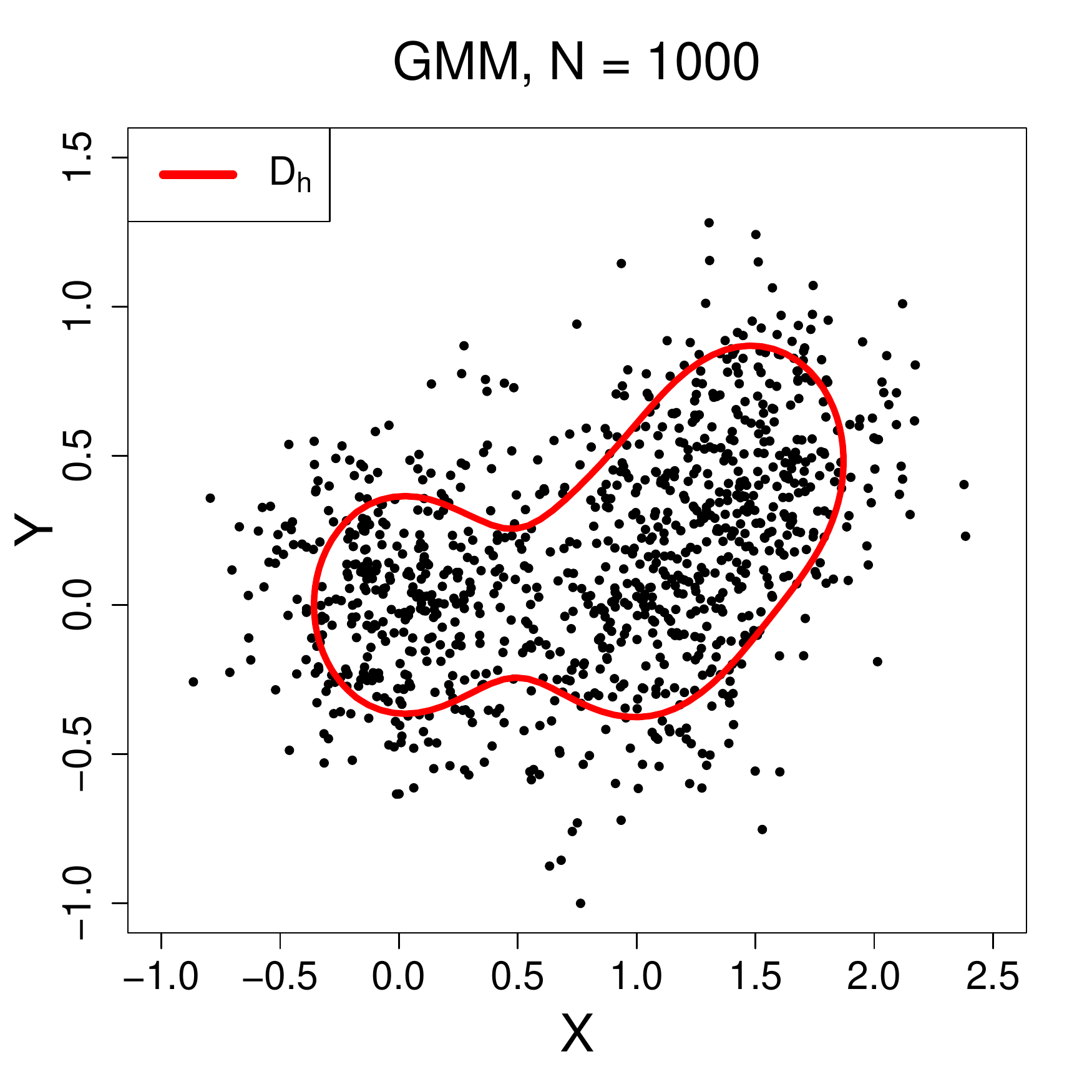}
\includegraphics[width=2in]{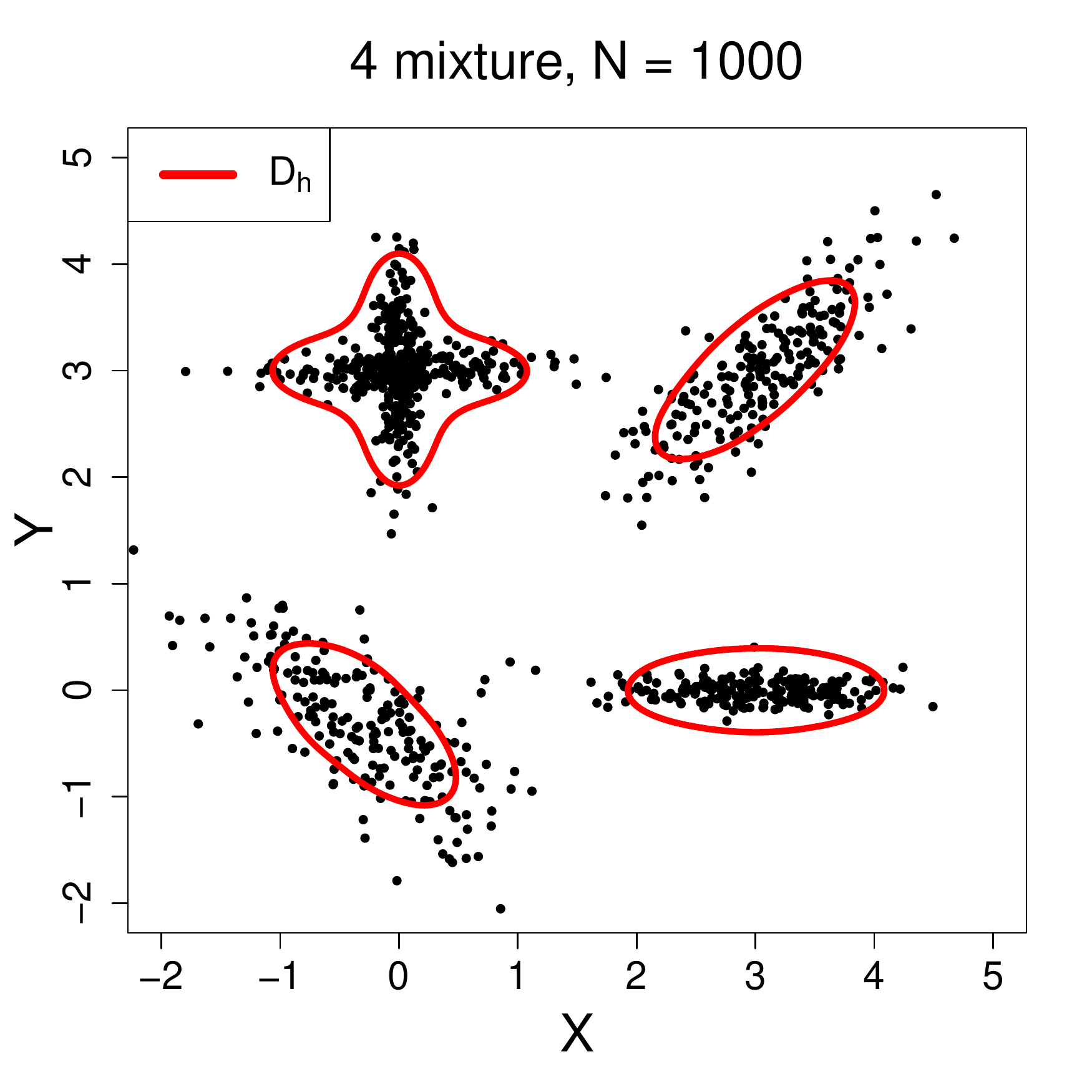}
\caption{Simulated data for comparing coverage of confidence sets.
Left: Three Gaussian mixture dataset.
Right: Four-mixture dataset with sample from \cite{cadre2009clustering}. 
The red curves indicate the density level set we are interested in.}
\label{fig::sim}
\end{figure}

Here, we consider two simulated datasets
to compare the coverage for confidence sets
constructed using Hausdorff loss (method 1),
$L_\infty$ loss (supremum loss; method 2),
and scaled $L_\infty$ loss (remark \ref{rm::scaled}).

The first dataset is a three-Gaussian mixture.
The data is generated from the following distribution:
\begin{equation}
p(x) = \frac{1}{3} \phi_2(x;\mu_1, 0.3^2 \mathbf{I}_2)
+\frac{1}{3} \phi_2(x;\mu_2, 0.3^2 \mathbf{I}_2)
+\frac{1}{3} \phi_2(x;\mu_3, 0.3^2 \mathbf{I}_2),
\end{equation}
where $\phi_2(x;\mu, \Sigma)$ is the density to bivariate Gaussian with mean
vector $\mu$ and Covariance matrix $\Sigma$.
$\mathbf{I}_2$ is the $2\times 2$ identity matrix
and $\mu_1=(0,0)^T$, $\mu_2=(1,0)^T$, and $\mu_3= (1.5, 0.5)^T$.
We use density level $\lambda=0.3$ and smoothing parameter $h=0.2$.
The corresponding level set $D_h$ is the red curve in the left panel of Figure~\ref{fig::sim}.
We consider three different sample sizes, $N=500, 1000, 2500$ and compare
the coverage of confidence sets using the three methods. 
The corresponding coverages are given in Table \ref{tab::cover_3GMM}.
As can be seen from Table \ref{tab::cover_3GMM},
all the three methods have the desire nominal coverage.

The second dataset is a four-mixture dataset from \cite{cadre2009clustering}.
The data is generated from the following distribution:
\begin{equation}
p(x) = \sum_{\ell=1}^6 \pi_\ell \phi_2(x;\mu_\ell, \Sigma_\ell)
\end{equation}
with $\pi_1=\pi_2=\pi_3=\pi_4 =1/5$ and $\pi_5=\pi_6=1/10$,
and
$\mu_1= (-0.3, -0.3)^T, \mu_2= (3.0, 3.0)^T, \mu_3 = \mu_4= (0,3)^T, \mu_5=\mu_6 = (3,0)$,
and
\begin{align}
\Sigma_1 = 
\left( \begin{array}{cc}
0.39 & -0.28 \\
-0.28 & 0.39  \end{array} \right),&\quad 
\Sigma_2 = 
\left( \begin{array}{cc}
0.36 & 0.30 \\
0.30 & 0.36  \end{array} \right),\\
\Sigma_3=\Sigma_5=
\left( \begin{array}{cc}
0.33 & 0 \\
0 & 0.01  \end{array} \right),&\quad 
\Sigma_4=\Sigma_6=
\left( \begin{array}{cc}
0.01 & 0 \\
0 & 0.33  \end{array} \right).
\end{align}
We use density level $\lambda=0.05$ and smoothing parameter $h=0.2$.
The corresponding level set $D_h$ is the red curves in the right panel of Figure~\ref{fig::sim}.
Again, we consider three different sample sizes $N=500, 1000, 2500$ and compare
the coverage of confidence sets using the three methods. 
The corresponding coverages are given in Table \ref{tab::cover_4mix}.
It is clear that
all the three methods have the desire nominal coverage.
Moreover, it can be seen from both Table \ref{tab::cover_3GMM} and \ref{tab::cover_4mix}
that the supremum loss method over covers.

\begin{table}[]
\centering
\caption{Coverage for the three-Gaussian mixture data.}
\label{tab::cover_3GMM}
\begin{tabular}{l|ll|ll|ll}
\hline
\multicolumn{1}{c|}{\multirow{2}{*}{Sample Size}} & \multicolumn{2}{c|}{Hausdorff Loss} & \multicolumn{2}{c|}{$L_\infty$ Loss} & \multicolumn{2}{c}{Scaled $L_\infty$ Loss} \\ \cline{2-7} 
\multicolumn{1}{c|}{}                             & $\alpha=0.90$    & $\alpha=0.95$    & $\alpha=0.90$     & $\alpha=0.95$    & $\alpha=0.90$        & $\alpha=0.95$        \\ \hline
$n=500$                                           & 0.946            & 0.983            & 0.992             & 0.998            & 0.957                & 0.984                \\ \hline
$n=1000$                                          & 0.944            & 0.969            & 0.991             & 0.997            & 0.946                & 0.974                \\ \hline
$n=2500$                                          & 0.915            & 0.969            & 0.992             & 0.998            & 0.947                & 0.978                \\ \hline
\end{tabular}
\end{table}

\begin{table}[]
\centering
\caption{Coverage for the four-mixture data.}
\label{tab::cover_4mix}
\begin{tabular}{l|ll|ll|ll}
\hline
\multicolumn{1}{c|}{\multirow{2}{*}{Sample Size}} & \multicolumn{2}{c|}{Hausdorff Loss} & \multicolumn{2}{c|}{$L_\infty$ Loss} & \multicolumn{2}{c}{Scaled $L_\infty$ Loss} \\ \cline{2-7} 
\multicolumn{1}{c|}{}                             & $\alpha=0.90$    & $\alpha=0.95$    & $\alpha=0.90$     & $\alpha=0.95$    & $\alpha=0.90$        & $\alpha=0.95$        \\ \hline
$n=500$                                           & 0.991            & 0.998            & 1.000             & 1.000            & 0.950                & 0.973                \\ \hline
$n=1000$                                          & 0.995            & 0.998            & 1.000             & 1.000            & 0.917                & 0.951                \\ \hline
$n=2500$                                          & 0.936            & 0.976            & 1.000             & 1.000            & 0.976                & 0.991                \\ \hline
\end{tabular}
\end{table}

\subsection{Pointwise Hypothesis Tests for Level Sets}


The confidence sets developed in the previous section are
related to two types of local hypothesis tests. 
For fixed density level $\lambda$ and an arbitrary point $x$,
consider the tests of whether $p(x)$ is greater or less than $\lambda$:
\begin{align}
H_{\sf in, 0}(x): p_h(x)&\leq \lambda,\quad H_{\sf in, A}(x): p_h(x)> \lambda,
\label{eq::Hin}\\
H_{\sf out, 0}(x): p_h(x)&\geq \lambda, \quad H_{\sf out, A}(x): p_h(x)< \lambda.
\label{eq::Hout}
\end{align}
When we only want to test just a few points,
we can do local tests for each point and control the family-wise error rate
to control the type 1 error rate.
Usually, however,
we are interested conducting the local test at many or even an infinite number of points (like a region),
making it difficult to control type 1 error simultaneously.

Inverting the confidence sets of the previous subsection
gives a solution to this problem.
Let $\hat{S}_{n,1-\alpha}$ be a confidence set for $D_h$
and let $L_h$ and $V_h$ be, respectively, the upper and lower lambda level sets.
Then the decision rules
\begin{equation}
\begin{aligned}
T_{\sf in, n}(x) &= 1(\hat{p}_h(x)\geq \lambda\wedge x\notin \hat{S}_{n,1-\alpha}),\\
T_{\sf out, n}(x) &= 1(\hat{p}_h(x)\leq \lambda\wedge x\notin \hat{S}_{n,1-\alpha}).
\end{aligned}
\label{eq::rule}
\end{equation}
control type 1 error simultaneously for all $x\in\K$. 
This simultaneous control is asymptotic in the sense that
$\mathbb{P}(T_{\sf out,n}(x)=1\quad\forall x\in L_h) = \alpha + O(r_n)$
for $r_n \to 0$ and similarly for $T_{\sf in,n}(x)$ and $V_h$.

In addition, inverting the regions where we cannot reject
$H_{\sf in, 0}(x)$ and $H_{\sf out, 0}(x)$ in \eqref{eq::rule} yields confidence sets
for $L_h$ and $V_h$.
In Figure~\ref{fig::CR}, a $90\%$ confidence regions for the upper level set $L_h$
is the union of yellow and blue regions ($T_{\sf out, n}(x) = 0$). And a $90\%$ confidence 
regions for $V_h$, the lower level set, is the union of green and blue regions ($T_{\sf in, n}(x)=0$).
Thus, with $90\%$ confidence,
all yellow regions are above $\lambda$ and the true high density regions
should be contained by the yellow and blue regions.

\begin{remark}
The two local tests described in \eqref{eq::Hin} and \eqref{eq::Hout} are relevant
to the problems of level-set clustering \citep{hartigan1975clustering, Polonik1995,Rinaldo2010a, Rinaldo2010b} 
and anomaly detection
\citep{desforges1998applications, breunig2000lof, he2003discovering,chandola2009anomaly}.
Rejecting $H_{\sf in, 0}(x)$ can be viewed as evidence
that $x$ belongs to a level-set cluster.
And a point $x$ where $H_{\sf out, 0}(x)$ is rejected can be viewed as anomalous.
\end{remark}

\begin{remark}
Note that one can modify the local testing procedure to 
control the False Discovery Rate \citep{benjamini1995controlling} 
rather than familywise error.
\end{remark}

\section{Visualization for Multivariate Level Sets}	\label{sec::vis}

\begin{figure}
\centering
\includegraphics[width=1.5in]{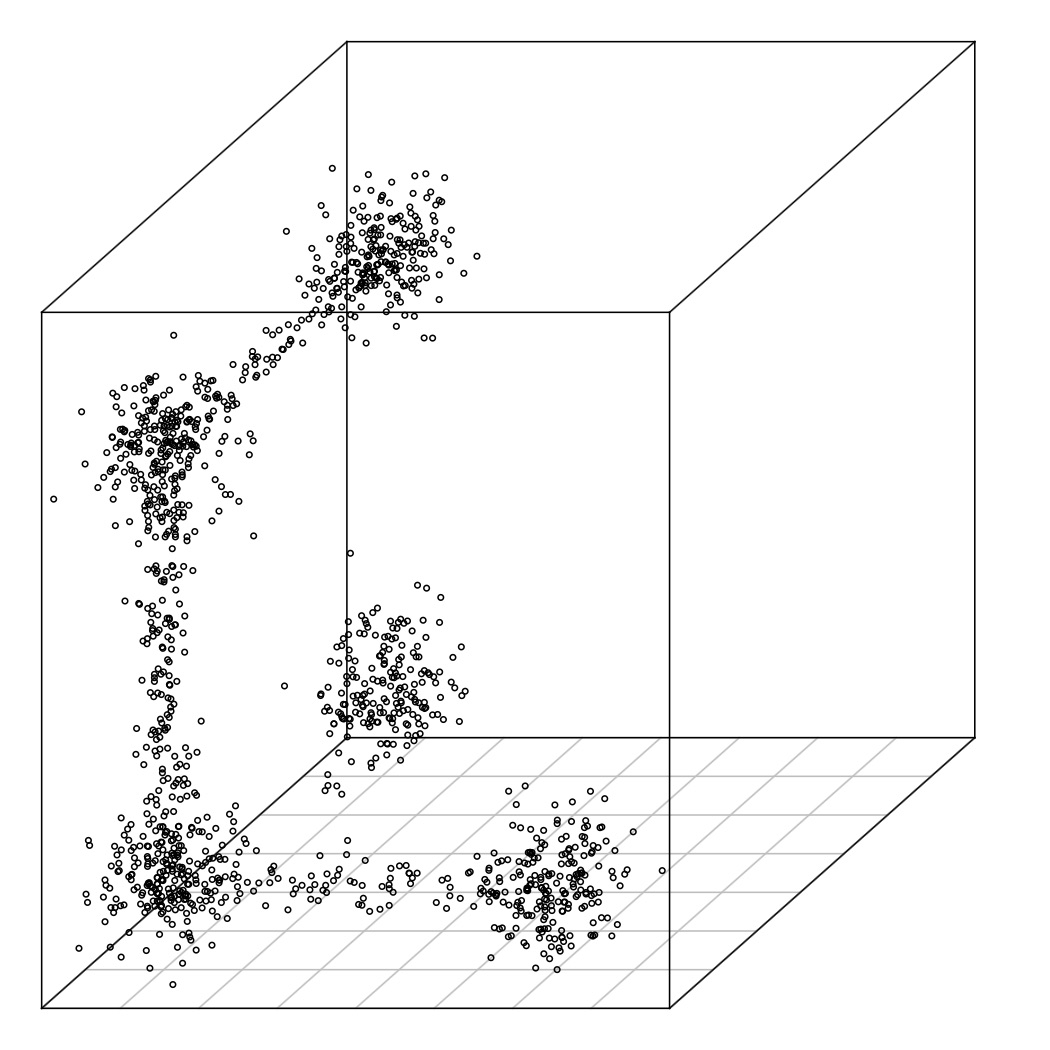}
\includegraphics[width=2in]{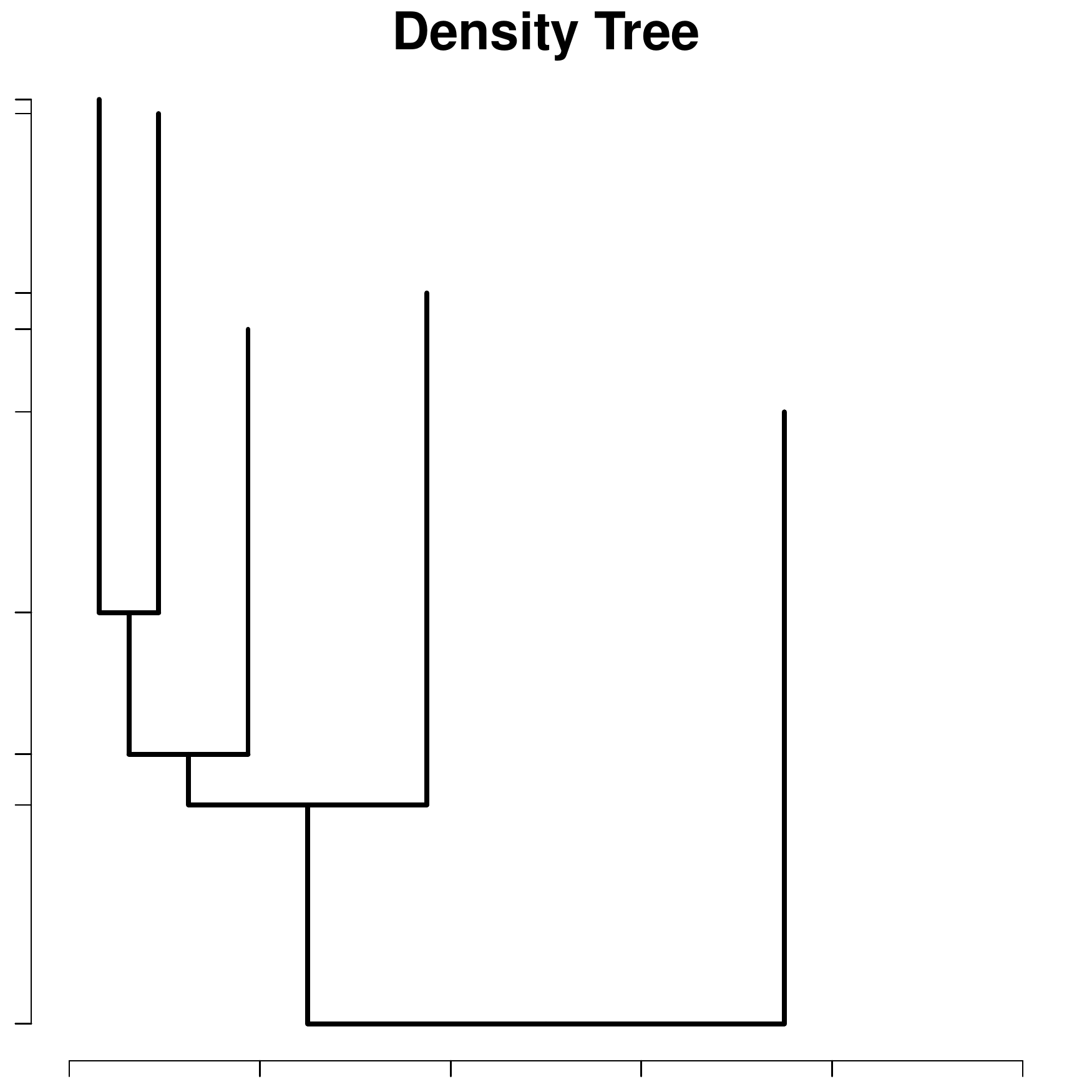}
\includegraphics[width=2in]{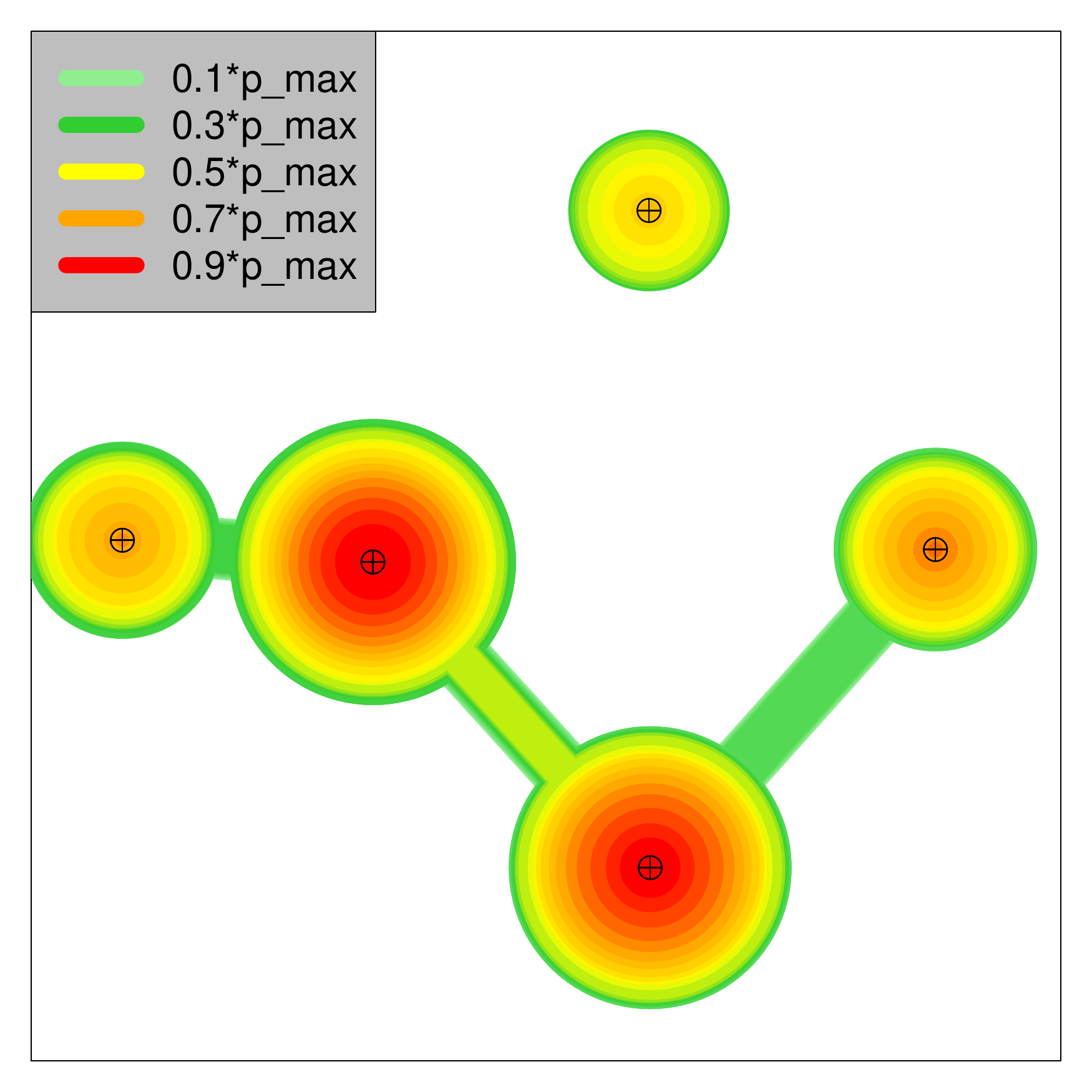}
\caption{An example for comparing density trees to our visualization method.
Left: a 3 dimensional dataset from \cite{chen2014enhanced}. 
There are 5 clusters
and four of them are connected
by a tube-like structure.
Middle: visualization using the density tree from DeBaCl \citep{kent2013debacl}.
Right: visualization using our method,
which preserves the real connections between clusters.
It is easy to see the evolution of multiple level sets
and how each cluster is connected to one another using our method.
But it is hard to get the same information
using density trees.
}
\label{fig::vis0}
\end{figure}

Level-set estimators can reveal useful information about a distribution,
but beyond three dimensions,
we cannot directly visualize the level sets, making the results
difficult to use.

In this section, we propose a novel visualization technique 
density upper level sets in multidimensions.
Any visualization entails some loss of information,
but our goals are to preserve important geometric information about the sets,
make the overall visualization easy to interpret,
and give a method that is efficient to compute.
Our method exploits the relationship between level-set clustering and mode clustering
(see Figure~\ref{fig::vis_ms}).

A current and commonly used visualization method for level sets
is the density tree \citep{stuetzle2003estimating, klemela2004visualization,
klemela2006visualization, kent2013debacl, balakrishnan2013cluster}.
This method considers several density levels, $\lambda_1<\cdots<\lambda_K$,
and computes the number of connected components for the upper density level set
at each level.
As we increase the density level, some connected components
may vanish and others may split into additional components.
In the typical case when the underlying density function is a Morse function
(i.e., Hessian at critical points is non-degenerate)
\citep{morse1925relations,morse1930foundations, milnor1963morse},
the connected component disappears
when the density level is above the maximum density value over the component
and splits only if the density level passes the density value of some saddle points
within the component \citep{klemela2009smoothing}.
The vanishing and splitting of components as the level changes produces
a tree structure.
The density tree uses this tree structure as a visualization of the level sets.
We refer to \cite{klemela2009smoothing} for more details.

Density trees display primarily topological information
about the underlying density function
\citep{stuetzle2003estimating, kent2013debacl}
but need not preserve or impart other features that may be of interest.
Extensions have been proposed that would endow the tree with additional
information about the distribution \citep{klemela2004visualization,klemela2006visualization,klemela2009smoothing},
but this is difficult to do with multiple features
and can make the visualization difficult to interpret.
See Figure~\ref{fig::vis0} for an example.

In this section, we propose a novel technique that visualizes
several density (upper) level sets that preserve some geometric information and that
are very easy to understand.
Our method is based on the relationship between level set clustering and mode clustering
(see Figure~\ref{fig::vis_ms}).
Note that in this section, we will focus on density upper level sets.

Our method complements existing tree-based methods.
It combine two clustering techniques -- level-set and mode clustering --
to produce a simple and intuitive visualization.

\begin{figure}
\centering
\includegraphics[width=1.8in]{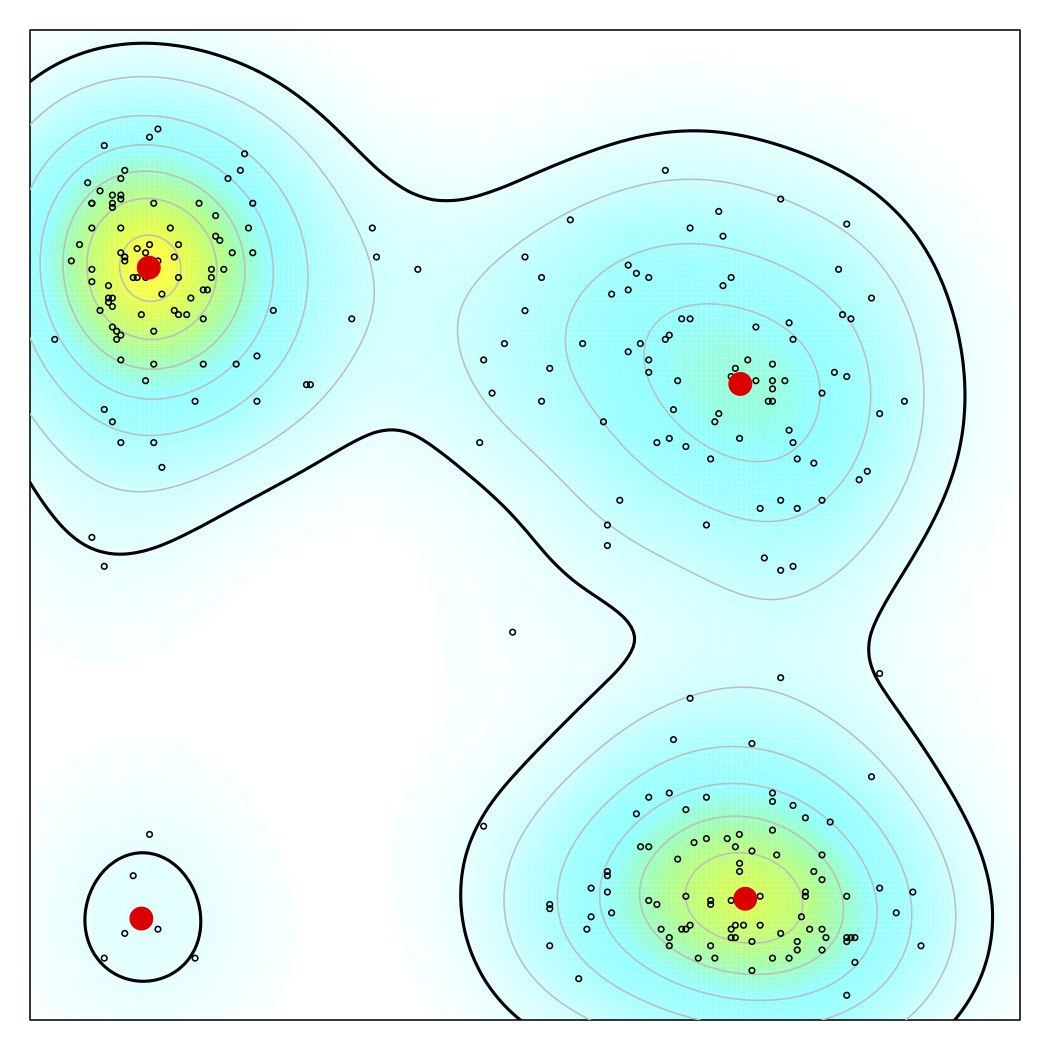}
\includegraphics[width=1.8in]{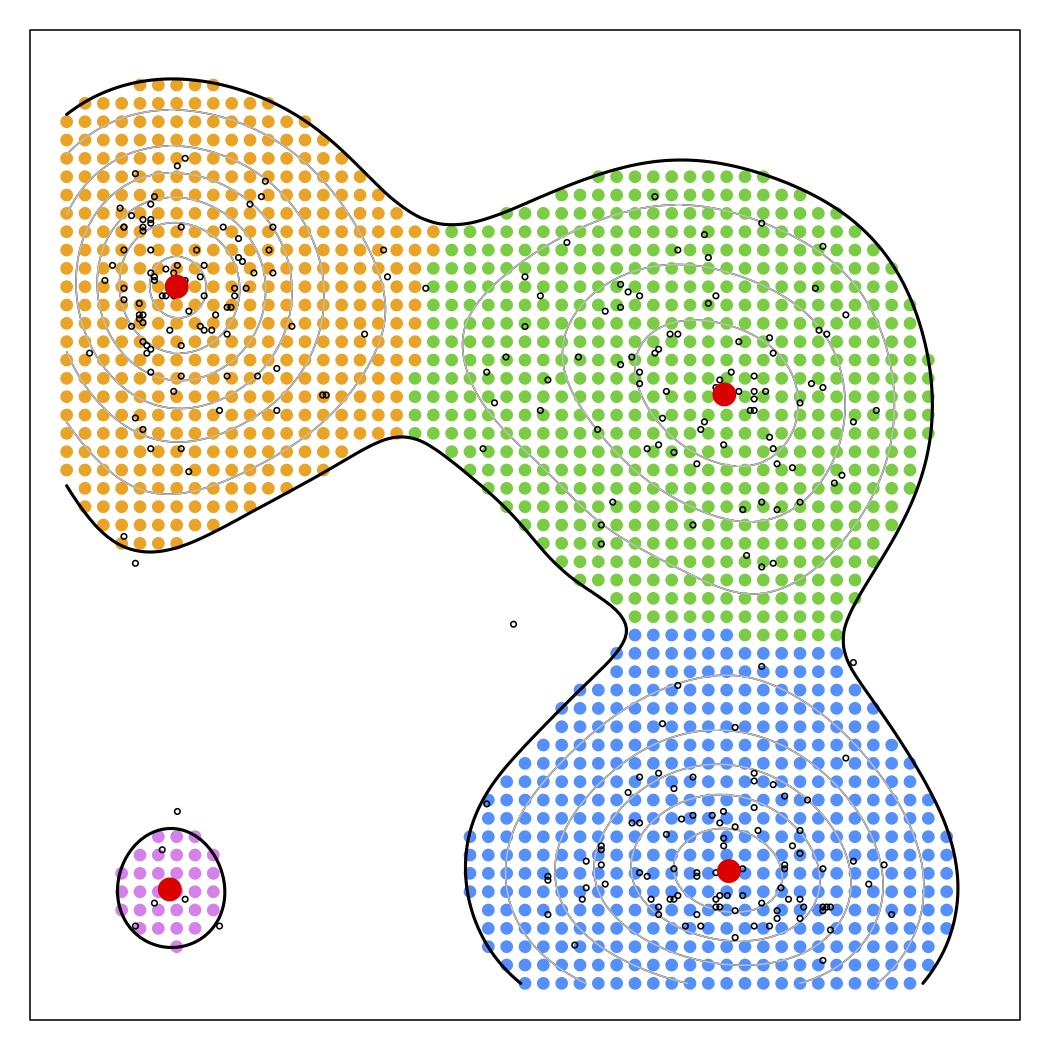}
\includegraphics[width=1.8in]{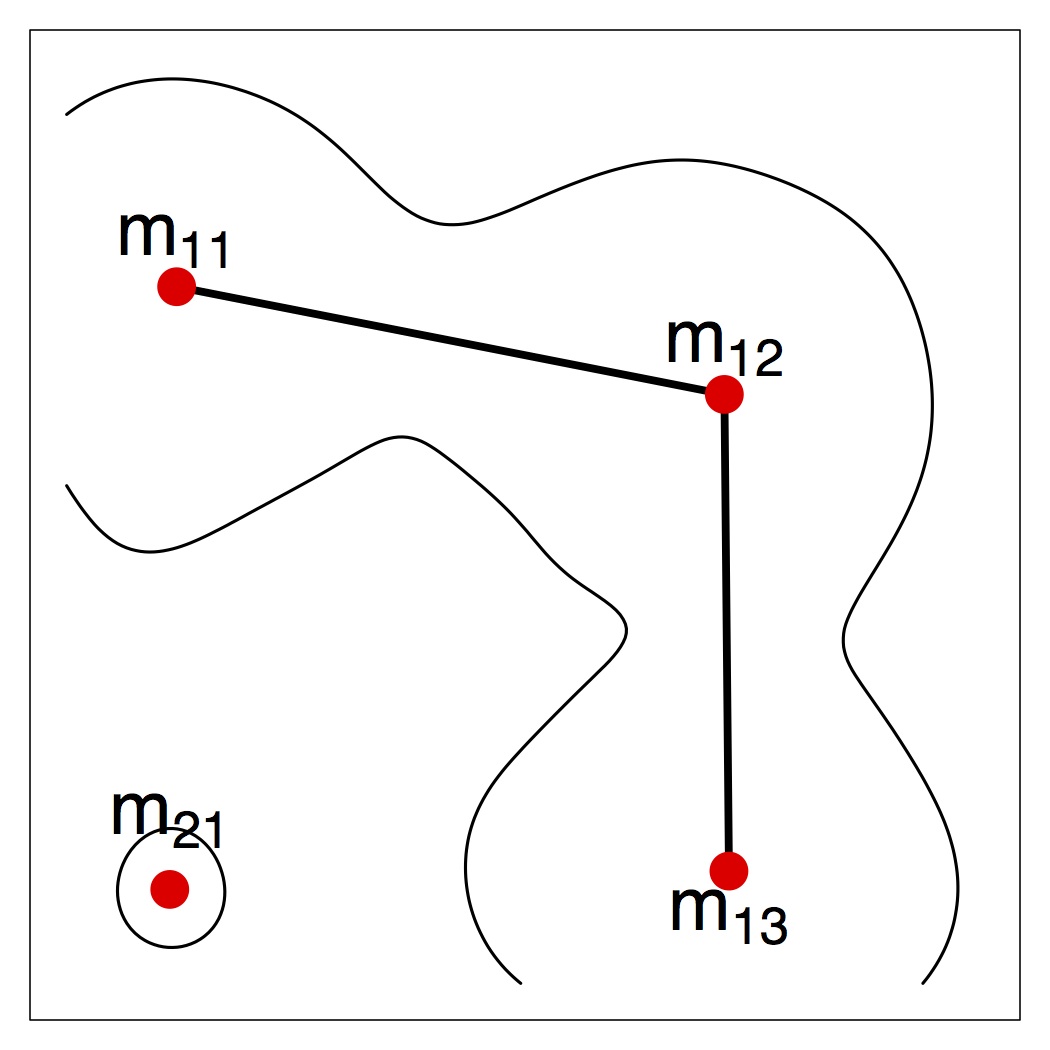}
\caption{An example for density level set and mode clustering.
Left: density level set (thick black contour denotes the specified level); four red dots
are the local modes. Middle: basins of attraction for each local modes intersecting 
the density level set. Right: graph representation. There are two connected components.
The large connected components contain three local modes $m_{1,1},m_{1,2},m_{1,3}$ 
with $m_{1,1}, m_{1,2}$ and $m_{1,2},m_{1,3}$ are connected. }
\label{fig::vis_ms}
\end{figure}

\subsection{Mode Clustering and Density Upper Level Sets}

Mode clustering (in particular, the method known as mean-shift clustering)
was introduced in \cite{fukunaga1975estimation},
where data points are clustered by following the gradient flows of the kernel density estimator
from each point to a local mode.
The clusters are the ``basins of attraction'' of each local mode \citep{fukunaga1975estimation,cheng1995mean, comaniciu2002mean}.
More generally, given a smooth Morse function $p$, mode clustering works as follows
\citep{li2007nonparametric, chacon2012clusters,chen2014enhanced}.
We form a partition of $\K$ based on the gradient field $g\equiv\nabla p$.
For each $x\in\K$, we define a gradient flow $\pi_x: [0,\infty)\mapsto \K$:
\begin{equation}
\pi_x(0) = x,\quad \pi_x'(t) = g(\pi_x(t)).
\end{equation}
That is, $\pi_x(t)$ starts at $x$ and moves along the gradient of $p$.
We define the destination for $\pi_x(t)$ as $\dest(x) = \lim_{t\rightarrow \infty}\pi_x(t)$.
Let $\cM$ be the collection of all local modes of $p$.
It can be shown that $\dest(x)\in \cM$ except for a set of $x$'s in a set
$\cB$ with Lebesgue measure $0$
(this set corresponds to the boundaries of clusters).
For each mode $m_j\in \cM$, we define its basin of attraction as
\begin{equation}
\cA_j = \{x\in \K: \dest(x)=m_j\}.
\end{equation}
The regions $\cA_1,\cdots, \cA_k$ are the clusters generated by mode clustering.

Now we recall three facts about an upper density level set 
$L=\{x:\ p(x) \geq\lambda\}$ (c.f. Figure~\ref{fig::vis_ms} 
left and middle panels):
\begin{itemize}
\item[1.] $L$ can be decomposed into connected components
$L = \bigcup_{\ell=1}^K \cC_\ell$,
where the $\cC_\ell$ are disjoint, connected compact sets
under regularity conditions.

\item[2.] Each $\cC_\ell$ contains at least one local mode.

\item[3.] If $\cC_\ell$ contains $s$ local modes, then
$\displaystyle \cC_\ell = \cC_{\ell,1}^{\lambda}\cup \cdots\cup \cC_{\ell,s}^{\lambda}\cup \cB$,
where $\cC_{\ell, j}^{\lambda}$ is the basin of attraction for a local mode $m_{\ell, j}$ intersected
with the level set $L$ and $\cB$ are the boundaries of the basins which has $0$ Lebesgue measure.
Namely, $\cC_{\ell, j}^{\lambda} = L\cap \cA_k$ for some $k$.
\end{itemize}
Thus, the upper level sets are covered by the basins of attraction of the
local modes (the uncovered regions have Lebesgue measure $0$ so we ignore them
for visualization). 

We then create a graph $G=(V,E)$ with each node corresponding to
a local mode within $L_h$ and each edge representing a \emph{connection} 
between local modes to represent a level set.
Specifically, we add an edge to a pair of local modes $(m_{\ell,j},m_{\ell,k})$
when the corresponding basins $\cC_{\ell,j}^{\lambda}$ and
$\cC_{\ell,k}^{\lambda}$ shares the same boundaries. i.e.
$\bar{\cC}_{\ell,j}^{\lambda}\cap\bar{\cC}_{\ell,k}^{\lambda}\neq \phi$.
Two local modes have an edge only if 
they are in the same connected component
and their shared boundaries are also in the upper level set.
Figure~\ref{fig::vis_ms} provides an example illustrating these parts.

\begin{figure}
\centering
\includegraphics[width=2.5in]{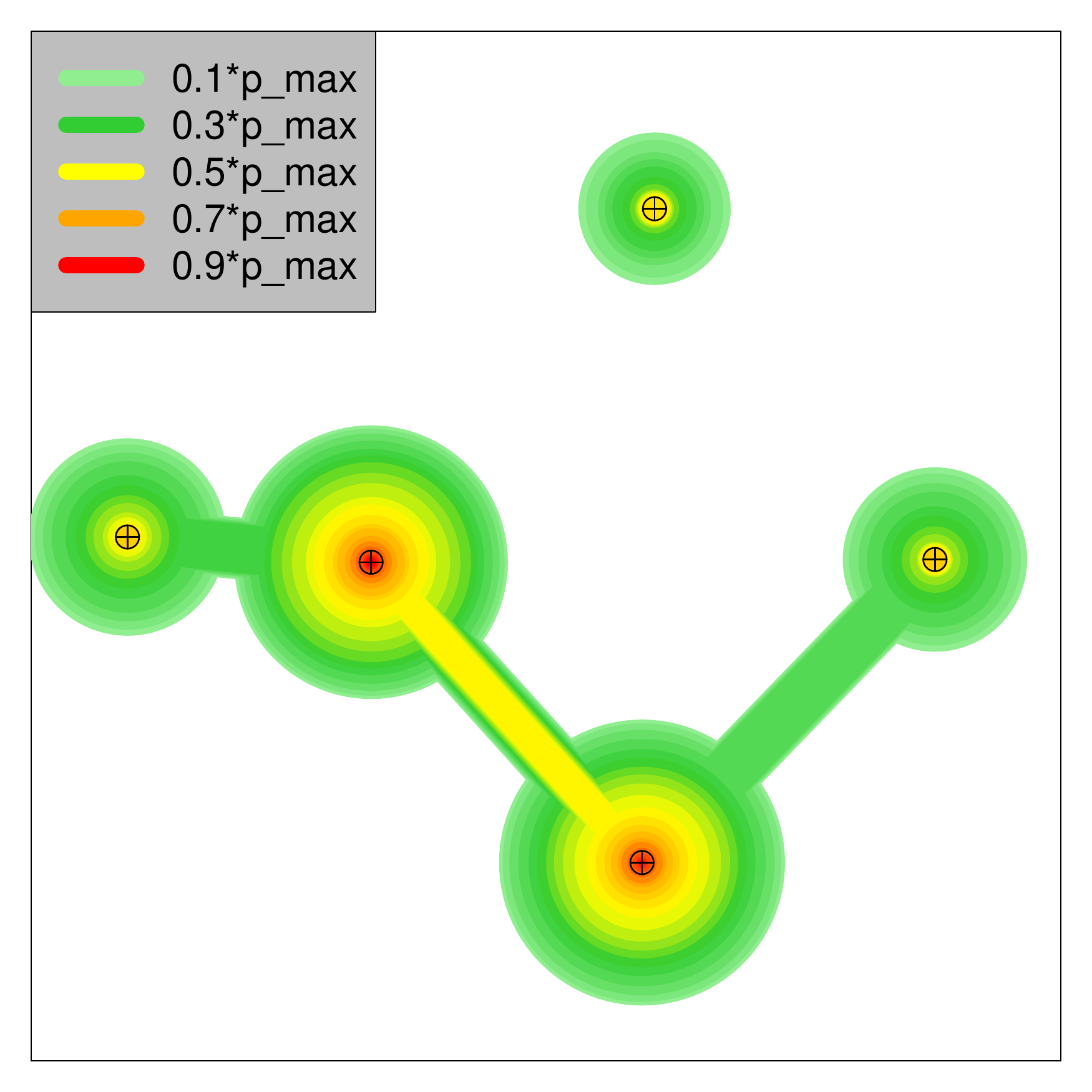}
\includegraphics[width=2.5in]{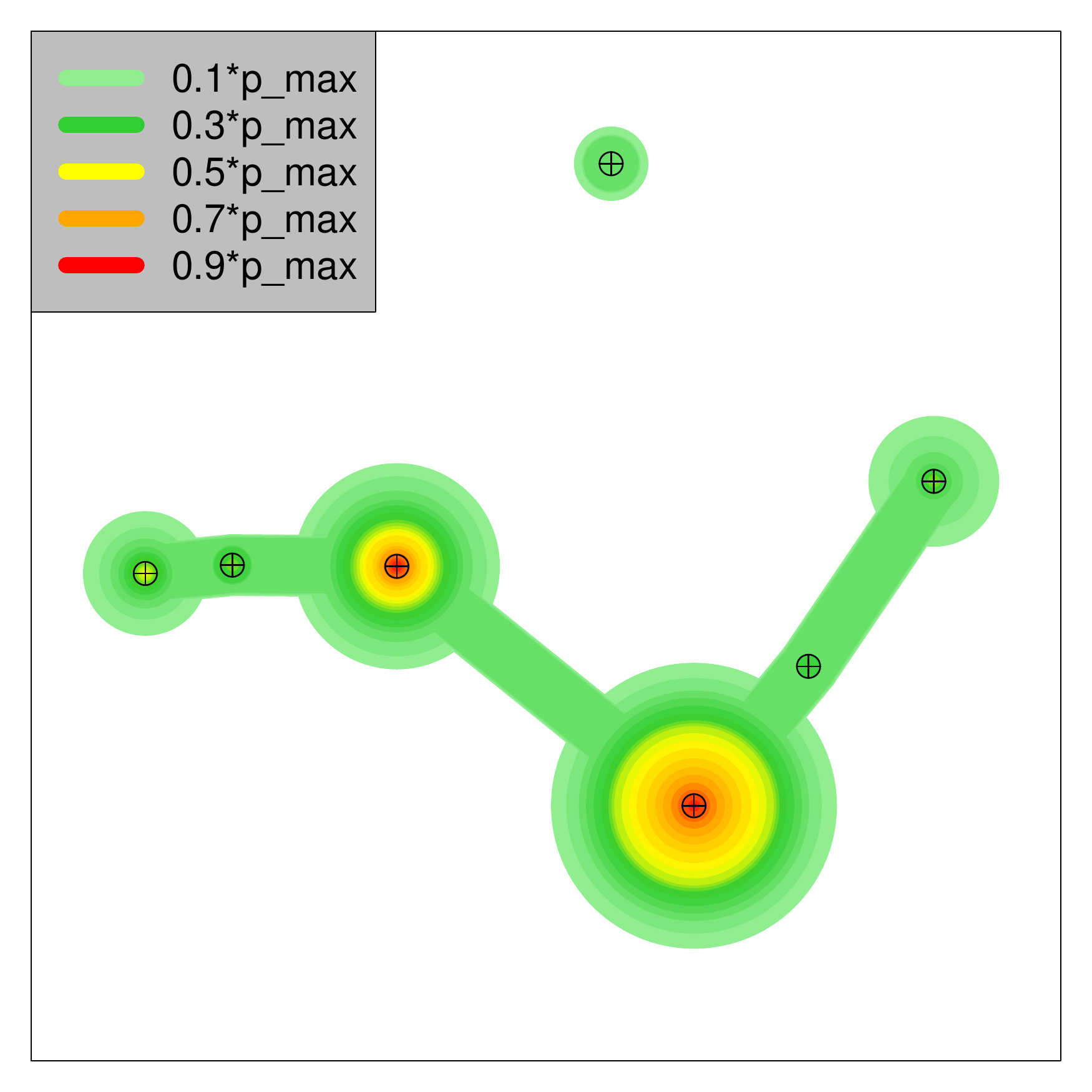}
\caption{A visualization example in multivariate cases.
We use the same dataset as Figure~\ref{fig::vis0} 
but add an additional Gaussian noise to each data point
to make it a higher dimensional dataset.
Left: a 6 dimensional dataset (add 3 additional dimensional noises). 
Right: a 10 dimensional dataset (add 7 additional dimensional noises). }
\label{fig::vis1}
\end{figure}

\subsection{Visualization Algorithm}

We construct visualization using Algorithm \ref{Alg::vis}.
First, we perform mode clustering 
to obtain local modes $m_1,\cdots, m_k\in\R^d$. 
We use the mean shift algorithm
\citep{fukunaga1975estimation,cheng1995mean, comaniciu2002mean} 
to find the modes of density estimate.
Second, we perform multidimensional scaling 
(MDS; \citealt{kruskal1964multidimensional})
to all the modes to map them onto a 2D plane.
Let $m^\dagger_1,\cdots, m^\dagger_k\in\R^2$ be the corresponding locations after MDS.
Third, we assign local mode $m_\ell$ an index
\begin{equation}
r_\ell(\lambda) = \frac{n_\ell(\lambda)}{n},\quad n_\ell(\lambda) 
= \sum_{i=1}^n 1\left(\hat{\dest}(X_i)=m_\ell\wedge \hat{p}_h(X_i)\geq \lambda\right),
\label{eq::m_idx}
\end{equation}
where $n_\ell(\lambda)$ is the number of data points that are assigned to
$m_\ell$ by mode clustering and have (estimated) density being greater or equal to $\lambda$.
Fourth,
if $r_\ell(\lambda)>0$, 
we create a circle around each $m^\dagger_\ell$
with radius proportional to $r_\ell(\lambda)$
If $r_\ell(\lambda)=0$, we ignore this local mode;
this occurs if the (estimated) density of the mode (and its basin of attractions)
lies below $\lambda$.
Fifth,
we connect two local modes $m^\dagger_\ell$ and $m^\dagger_j$
if they belong to the same connected component of
the estimated upper level set $\hat{L}_h =\{x: \hat{p}_h(x)\geq \lambda\}$
and their basins of attraction intersect $\hat{L}_h$ share the same boundary 
(see e.g. Figure~\ref{fig::vis_ms} middle and right panel).
Note that this step might be computationally difficult in practice.
An efficient alternative is to examine the shortest distance between two connected
components; if the distance is sufficiently small, 
we claim these two components share the same boundary.
Finally, 
we set the width of the line connecting $m_\ell$ and $m_j$
to be proportional to $r_\ell(\lambda)+r_j(\lambda)$.

Given several density levels, $\lambda_1<\cdots<\lambda_K$,
we can overlay the visualization from the previous paragraph from $\lambda_1$ to
$\lambda_K$ to create a ``tomographic'' visualization of the clusters.
This gives a visualization for the density level sets.
Figure~\ref{fig::vis1} shows an example for visualizing level sets for a $6$-dimensional 
and a $10$-dimensional
simulation datasets at different density levels. 
This dataset is from \cite{chen2014enhanced}.

\begin{algorithm}
\caption{Visualization for a single level set}
\label{Alg::vis}
\begin{algorithmic}
\State \textbf{Input:} Data $\{ X_1,...X_n\}$, density level $\lambda$, smoothing parameter $h$
\State 1. Compute the kernel density estimator $\hat{p}_h$.
\State 2. Find the modes $m_1,\cdots, m_k$ of $\hat{p}_h$ 
(one can apply the mean shift algorithm).
\State 3. Apply multidimensional scaling to the modes to project them into $\mathbb{R}^2$;
denote the corresponding locations as $m^\dagger_1,\cdots, m^\dagger_k$.
\State 4. For each local modes $m_\ell$, we assign it an index \eqref{eq::m_idx}
$$
r_\ell(\lambda) = \frac{n_\ell(\lambda)}{n},\quad n_\ell(\lambda) 
= \sum_{i=1}^n 1\left(\hat{\dest}(X_i)=m_\ell\wedge \hat{p}_h(X_i)\geq \lambda\right).
$$
\State 5. If $r_\ell(\lambda)>0$, we create a circle around $m^\dagger_\ell$
with radius in proportional to $r_\ell(\lambda)$.
\State 6. We connect two local modes $m^\dagger_\ell$ and $m^\dagger_j$
if
\begin{itemize} 
\item[(1)] they belong to the same connected component of $\hat{L}_h$, the estimated upper level
set at level $\lambda$,
and 
\item[(2)] their basins of attraction above level set share the same boundary.
\end{itemize}
\State 7. Adjust the width for the line connecting $m^\dagger_\ell$ and $m^\dagger_j$
to be in proportion to $r_\ell(\lambda)+r_j(\lambda)$.
\end{algorithmic}
\end{algorithm}

\subsection{Visualizing Level Sets with Confidence}	\label{sec::vis::CI}
\begin{figure}
\centering
\includegraphics[width=2.5in]{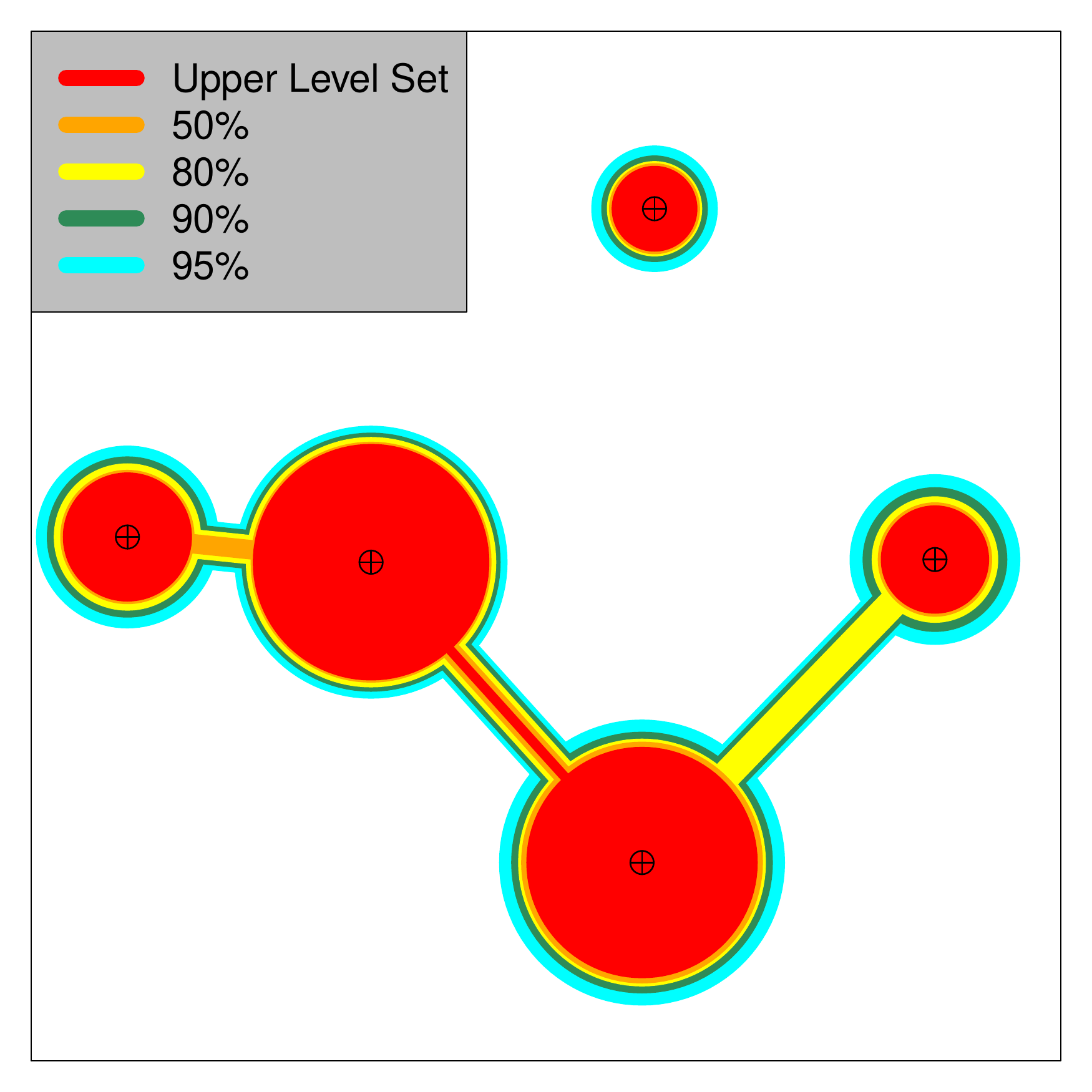}
\caption{A visualization for confidence sets of an upper set with level $\lambda$.
This is the 6 dimensional dataset of Figure~\ref{fig::vis1}.
We pick the density level $\lambda= 0.4\times\sup_x\norm{\hat{p}_h(x)}$
and display confidence levels under $\alpha=(50\%, 80\%, 90\%, 95\%)$.}
\label{fig::vis2}
\end{figure}

A modified version of our algorithm 
allows us to visualize multivariate confidence sets for
an upper level set at a given level $\lambda$.
In particular, we visualize the confidence set for the level sets
produced by Method 2 (supremum loss, see Section \ref{sec::CI::M2}) .
The advantage of Method 2 here is 
that this confidence set under different $\alpha$'s
is just the density level set at different $\lambda$'s.
This makes it easy to visualize $K$ distinct confidence sets for pre-specified levels $\alpha_1<\cdots<\alpha_K$.

Recall that $\hat{M}^*_n = \sup_{x\in\K}|\hat{p}_h^*(x)-\hat{p}_h(x)|$
is the bootstrap supremum deviation for the KDE
and $\hat{m}_{1-\alpha} = F^{-1}_{M^*_n}(1-\alpha)$ is the quantile.
When we want to visualize confidence sets for the upper level set $L_h$
at different significance levels, we pick
\begin{equation}
\lambda_1 = \lambda-\hat{m}_{1-\alpha_1},\cdots, \lambda_K = \lambda-\hat{m}_{1-\alpha_K}
\end{equation}
and use the visualization algorithm to create a tomographic visualization.
Figure~\ref{fig::vis2} provides an example for visualizing confidence
sets for upper level set using the 6 dimensional simulation data in Figure~\ref{fig::vis1}.


\begin{remark}
At the cost of additional computation,
we can also use Method 1 (Hausdorff loss) instead of Method 2,
combining it with the basins of attractions to visualize
the confidence sets.
We use method 1 to construct the inner/outer confidence sets
and then we find the connected components and partition it
by the basins of attraction for local modes
and use multidimensional scaling to visualize it in low dimensions.
\end{remark}


\section{Discussion}	
\label{sec::discuss}
In this paper, we derived the limiting distribution
for smoothed density level sets under Hausdorff loss.
This result immediately allows us to
construct confidence sets for the smoothed level set.
We developed two bootstrapping methods to construct the confidence sets,
and we showed that both methods are consistent.
These confidence sets can be inverted to construct multiple local tests
of whether a point's density value is above or below a given level,
which has application to related problems such as anomaly detection.
Finally, we developed a new visualization method 
that is informative and interpretable,
even in multidimensions.

Although we focused on density level sets in this paper,
our methods,
including confidence sets and visualization,
can be applied to a more general class of problems
such as kernel classifier, two-sample tests, and generalized level sets
(See \citealt{mammen2013confidence} for more details.).

\bibliographystyle{abbrvnat}
\bibliography{LV.bib}

\section{Proofs}
\setcounter{thm}{7}
\label{section::proofs}

\begin{thm}[Theorem 2 in \cite{Cuevas2006}]
Assume (K1--2) and (G), then we have
$$
\Haus\left(\hat{D}_h, D_h\right) = O(\norm{\hat{p}_h-p_h}_{0,\max}).
$$

\label{thm::rate}
\end{thm}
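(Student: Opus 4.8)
The plan is to recover this (it is Theorem~2 of \cite{Cuevas2006}) from the geometry already set up above, reducing to a one-sided estimate via normal compatibility. Write $\epsilon = \norm{\hat p_h - p_h}_{0,\max}$ and work on the event $\cX_n = \{\norm{\hat p_h - p_h}^*_{2,\max} \le \eta_0\}$ for a small constant $\eta_0$; by assumptions (K1)--(K2) and the Gin\'e--Guillou-type uniform bounds behind them (\cite{Gine2002,Einmahl2005,genovese2014nonparametric}), $\P(\cX_n) \to 1$, and $\epsilon \le \eta_0$ on $\cX_n$. On $\cX_n$ we may apply Lemma~\ref{lem::normal} with $q = \hat p_h$ (note $\hat p_h \in \BC^3$ by (K1)): condition (G) holds for $\hat p_h$, and $D_h$ and $\hat D_h$ are normal compatible. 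Hence, by the simplified form of the Hausdorff distance for normal compatible sets,
$$
\Haus(\hat D_h, D_h) = \sup_{x \in D_h} d(x, \hat D_h),
$$
and it suffices to bound the right-hand side by $C\epsilon$ with $C$ depending only on $g_0$, $\delta_0$, and $\norm{p_h}^*_{2,\max}$.

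For this one-sided bound I would argue pointwise. Fix $x \in D_h$, so $p_h(x) = \lambda$ and $|\hat p_h(x) - \lambda| \le \epsilon$. Let $u = \nabla p_h(x)/\norm{\nabla p_h(x)}$, put $s = -\mathrm{sign}(\hat p_h(x) - \lambda)$ (with $s = 1$ when $\hat p_h(x) = \lambda$), and consider the segment $\gamma(t) = x + s\,t\,\ell\,u$, $t \in [0,1]$, with $\ell = 8\epsilon/g_0$. When $\epsilon < g_0\delta_0/8$ the segment lies in $D_h \oplus \delta_0$, so (G) gives $\norm{\nabla p_h(\gamma(t))} > g_0$; since $p_h \in \BC^3$ we also have $\norm{\nabla p_h(\gamma(t)) - \nabla p_h(x)} \le \norm{p_h}^*_{2,\max}\,\ell$, so that $\langle u, \nabla p_h(\gamma(t))\rangle \ge g_0 - \norm{p_h}^*_{2,\max}\,\ell \ge g_0/2$ once $\eta_0$ (hence $\epsilon$) is small. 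Thus $\tfrac{d}{dt} p_h(\gamma(t)) = s\ell\,\langle u, \nabla p_h(\gamma(t))\rangle$ keeps the sign $s$ and magnitude $\ge g_0\ell/2 = 4\epsilon$, whence $s(p_h(\gamma(1)) - \lambda) \ge 4\epsilon$ and therefore $s(\hat p_h(\gamma(1)) - \lambda) \ge 3\epsilon > 0$ (for $\epsilon > 0$; the case $\epsilon = 0$ is trivial), while $s(\hat p_h(\gamma(0)) - \lambda) \le 0$ by the choice of $s$. By the intermediate value theorem there is $t^* \in [0,1]$ with $\hat p_h(\gamma(t^*)) = \lambda$, i.e.\ $\gamma(t^*) \in \hat D_h$, and $\norm{x - \gamma(t^*)} \le \ell = 8\epsilon/g_0$; hence $d(x, \hat D_h) \le 8\epsilon/g_0$. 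Taking the supremum over $x \in D_h$ and using the display above yields $\Haus(\hat D_h, D_h) \le (8/g_0)\,\epsilon = O(\norm{\hat p_h - p_h}_{0,\max})$ on $\cX_n$.

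The only real difficulty is the bookkeeping: one must keep the segment inside the neighbourhood $D_h \oplus \delta_0$ where the gradient lower bound $g_0$ from (G) is available, while simultaneously controlling how far $\nabla p_h$ drifts along that (short, $O(\epsilon)$-length) segment. Both are handled by first restricting to $\cX_n$, on which $\norm{\hat p_h - p_h}^*_{2,\max}$, and hence $\epsilon$ and $\ell$, are small --- the same regime in which Lemma~\ref{lem::normal} supplies the smoothness and normal compatibility used above. Beyond this the content is just a quantitative inverse-function-theorem estimate localized at the level set; one could equally bound $\sup_{x \in \hat D_h} d(x, D_h)$ directly by the symmetric argument, exchanging the roles of $p_h$ and $\hat p_h$ and using condition (G) for $\hat p_h$ (from Lemma~\ref{lem::normal}) together with $\norm{\hat p_h - p_h}^*_{1,\max} \le \eta_0$ to control $\nabla p_h$ near $\hat D_h$.
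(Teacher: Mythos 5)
The paper does not prove this statement --- it is quoted as Theorem~2 of Cuevas et al.\ (2006), and the paper in fact uses it as an \emph{input} to the proof of Lemma~\ref{lem::normal}: the normal-compatibility claim (assertion~3) is derived there by combining the Chazal--Lieutier criterion with the Hausdorff rate from Theorem~\ref{thm::rate}. Your gradient-segment argument for the one-sided bound $\sup_{x\in D_h} d(x,\hat D_h) \le (8/g_0)\,\norm{\hat p_h - p_h}_{0,\max}$ is correct and self-contained: it uses only (G) for $p_h$, the sup-norm deviation $\epsilon$, and a Lipschitz bound on $\nabla p_h$ along a short segment, and this is indeed the quantitative inverse-function-theorem estimate that underlies the cited result. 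The problem is the reduction step. You invoke normal compatibility from Lemma~\ref{lem::normal} to collapse $\Haus(\hat D_h,D_h)$ to the single one-sided supremum, but within this paper that very assertion of Lemma~\ref{lem::normal} is proven \emph{by citing} Theorem~\ref{thm::rate}. As written, your argument is therefore circular with respect to the paper's own dependency structure.

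The fix is precisely what you gesture at in your closing remark, but it is a necessity rather than a stylistic alternative: abandon normal compatibility and bound $\sup_{x\in\hat D_h} d(x,D_h)$ directly by the symmetric segment argument, then take the maximum of the two one-sided bounds. To run the symmetric argument you need (G)-type gradient control near $\hat D_h$, and for that you first need a crude containment $\hat D_h \subset D_h \oplus r_0$ for some $r_0 < \delta_0$ so that the segment stays where (G) (for $p_h$, or for $\hat p_h$ via assertion~1 of Lemma~\ref{lem::normal}, which does not rely on Theorem~\ref{thm::rate}) is available. That containment is elementary and does not invoke Theorem~\ref{thm::rate}: on the compact set $\K \setminus (D_h\oplus r_0)$ the continuous function $|p_h-\lambda|$ is bounded below by some $c_0>0$, so once $\norm{\hat p_h - p_h}_{0,\max} < c_0$ no $\lambda$-level point of $\hat p_h$ can lie outside $D_h\oplus r_0$. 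With that in hand the symmetric bound goes through verbatim, and the Hausdorff rate follows without any appeal to normal compatibility.
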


\begin{thm}[Talagrand's inequality; version of Theorem 12 in \cite{chen2014nonparametric}]
Assume (K1--2), then for each $t>0$ there exists some $n_0$ such that whenever
$n>n_0$, we have
$$
\mathbb{P}\left(\norm{\hat{p}_h-p_h}^*_{\ell, \max}>t\right)\leq (\ell+1)e^{-tnh^{d+2\ell}A_1},
$$
for some constant $A_1$ and $\ell=0,1,2$.
Moreover, 
$$
\mathbb{E}\left(\norm{\hat{p}_h-p_h}^*_{2,\max}\right) = O\left(\sqrt{\frac{\log n}{nh^{d+4}}}\right).
$$

\label{thm::tala}
\end{thm}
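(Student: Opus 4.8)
The plan is to obtain this as a direct application of the Giné--Guillou/Talagrand machinery for kernel-type empirical processes (\cite{Gine2002}, \cite{Einmahl2005}); since the statement simply restates Theorem~12 of \cite{chen2014nonparametric}, I will only sketch the argument. First I would unfold the norm: by definition
$$
\norm{\hat p_h-p_h}^*_{\ell,\max}=\max_{0\le j\le\ell}\ \sup_{x\in\K}\ \max_{|\alpha|=j}\bigl|\hat p_h^{(\alpha)}(x)-p_h^{(\alpha)}(x)\bigr|,
$$
so by a union bound over the $\ell+1$ values of $j$ it suffices to bound, for each fixed $j\le\ell$, the probability that $\sup_{x\in\K}\max_{|\alpha|=j}|\hat p_h^{(\alpha)}(x)-p_h^{(\alpha)}(x)|>t$. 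Differentiating the kernel density estimator, each summand picks up a factor $h^{-j}$, so this supremum equals $(\sqrt n\,h^{d+j})^{-1}\norm{\mathbb{G}_n}_{\mathcal{G}_j}$, where $\mathbb{G}_n$ is the empirical process and $\mathcal{G}_j=\{\,y\mapsto K^{(\alpha)}((x-y)/h):x\in\K,\ |\alpha|=j\,\}$ is a finite union of the translate classes appearing in assumption (K2).

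I need two inputs for $\mathcal{G}_j$: a complexity bound and a variance bound. By (K2), each translate class (being indexed by $|\alpha|\le2$) is a sub-class of $\mathcal{K}$ and hence VC-type; a finite union of VC-type classes is again VC-type, with constant envelope $F\le\norm{K}^*_{3,\max}<\infty$ by (K1). Since $p$ is a continuous density on the compact set $\K$, it is bounded, so (K1) also gives $\sup_{x\in\K}\mathbb{E}\bigl[K^{(\alpha)}((x-X_1)/h)^2\bigr]\le\norm{p}_\infty\,h^d\int(K^{(\alpha)}(u))^2\,du=O(h^d)$. With these, the standard moment bound for VC-type kernel classes gives $\mathbb{E}\norm{\mathbb{G}_n}_{\mathcal{G}_j}=O(\sqrt{h^d\log(1/h)})=O(\sqrt{h^d\log n})$, and Talagrand's concentration inequality (in Bousquet's form) gives, for every $u>0$,
$$
\mathbb{P}\Bigl(\norm{\mathbb{G}_n}_{\mathcal{G}_j}>\mathbb{E}\norm{\mathbb{G}_n}_{\mathcal{G}_j}+u\Bigr)\le\exp\!\left(-\frac{c\,u^2}{\,\sigma_h^2+\norm{F}_\infty\mathbb{E}\norm{\mathbb{G}_n}_{\mathcal{G}_j}/\sqrt n+\norm{F}_\infty u/\sqrt n\,}\right),\qquad\sigma_h^2=O(h^d).
$$

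To finish, I would translate back: $\{\sup_x\max_{|\alpha|=j}|\hat p_h^{(\alpha)}-p_h^{(\alpha)}|>t\}$ is the event $\{\norm{\mathbb{G}_n}_{\mathcal{G}_j}>\sqrt n\,h^{d+j}t\}$, and for a fixed $t>0$ and all $n$ beyond some $n_0$ --- this is where the bandwidth condition enters, ensuring $nh^{d+2j}/\log n\to\infty$ for $j\le\ell$ --- the threshold $\sqrt n\,h^{d+j}t$ exceeds $2\,\mathbb{E}\norm{\mathbb{G}_n}_{\mathcal{G}_j}$, so I may take $u=\tfrac12\sqrt n\,h^{d+j}t$. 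The denominator of the exponent is then $O(h^d)$ (each of its three terms is), the bound becomes $\exp(-c'\,nh^{d+2j}t^2)\le\exp(-c'\,nh^{d+2\ell}t^2)$ for $h$ small, and absorbing the fixed $t$ and $c'$ into a constant $A_1$ and union-bounding over the $\ell+1$ values of $j$ yields $(\ell+1)e^{-A_1tnh^{d+2\ell}}$. The moment statement then follows without integrating the tail: $\mathbb{E}[\sup_x\max_{|\alpha|=j}|\hat p_h^{(\alpha)}-p_h^{(\alpha)}|]=(\sqrt n\,h^{d+j})^{-1}\mathbb{E}\norm{\mathbb{G}_n}_{\mathcal{G}_j}=O(\sqrt{\log n/(nh^{d+2j})})$, so $\mathbb{E}\norm{\hat p_h-p_h}^*_{2,\max}\le\sum_{j=0}^{2}O(\sqrt{\log n/(nh^{d+2j})})=O(\sqrt{\log n/(nh^{d+4})})$, the $j=2$ term dominating for small $h$. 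The only real obstacle is bookkeeping: keeping the powers of $h$ straight --- each derivative costs $h^{-1}$ in the normalization but contributes only $h^d$, not $h^{d+2j}$, to the variance --- and managing the sub-Gaussian/sub-exponential crossover in Talagrand's inequality, whose genuine exponent is of order $\min(t^2,t\sqrt{nh^d})\cdot nh^{d+2\ell}$ and reduces to the advertised linear-in-$t$ form only once $t$ is fixed and $n\ge n_0(t)$. All the analytic content --- VC-type property, constant envelope, $O(h^d)$ variance --- is handed to us by (K1)--(K2) and compactness of $\K$.
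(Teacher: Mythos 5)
The paper does not prove this statement; it imports it as ``Theorem 12 in \cite{chen2014nonparametric}'' and uses it as a black box in the proofs of Lemmas~1--2 and Theorems~3--4, so there is no in-paper proof to compare against. Your sketch is the standard Gin\'e--Guillou/Talagrand argument such a result rests on, and your ingredients are the right ones: unfolding the norm and union-bounding over $j$, the identity $\sup_x\max_{|\alpha|=j}|\hat p_h^{(\alpha)}-p_h^{(\alpha)}|=(\sqrt n\,h^{d+j})^{-1}\norm{\mathbb{G}_n}_{\mathcal{G}_j}$, the VC-type control from (K2) with the constant envelope from (K1), the $O(h^d)$ wimpy variance, Bousquet's concentration inequality, and reading off the moment statement directly from $\mathbb{E}\norm{\mathbb{G}_n}_{\mathcal{G}_j}=O(\sqrt{h^d\log n})$ (with the $j=2$ term dominating) rather than integrating the tail. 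Two remarks on the loose ends you yourself flag. The condition $nh^{d+2j}/\log n\to\infty$ for every $j\le\ell$ is indeed what guarantees $\sqrt n\,h^{d+j}t\gg\mathbb{E}\norm{\mathbb{G}_n}_{\mathcal{G}_j}$ for each fixed $t$, but it is suppressed in the theorem's hypotheses, and for $\ell=2$ it is strictly stronger than the $nh^{d+2}/\log n\to\infty$ assumed in the paper's main results. Also, for fixed $t$ and $n$ large Bousquet's denominator stays $O(h^d)$, so the honest exponent is of order $t^2 nh^{d+2j}$; this yields the advertised $e^{-A_1 t\,nh^{d+2\ell}}$ only after a factor of $t$ is absorbed into $A_1$, i.e., $A_1$ must be read as $t$-dependent, which the quantifier ``for each $t>0$'' just barely permits. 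Neither of these is a defect in your argument; they are imprecisions in the statement as cited.
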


\begin{proof}[ for Lemma 1]
We first prove the lower bound for $\reach(D_h)$ and then
we will prove the additional assertions.

\vspace{0.2 in}
{\bf Part 1: Lower bound on reach.}
We prove this by contradiction.
Take $x$ near $D_h$ such that 
\begin{equation}
d(x, D_h) < \left(\frac{\delta_0}{2}, \frac{g_0}{\norm{p_h}^*_{2,\max}}\right).
\end{equation}
We assume that $x$ has two projections onto $D_h$, denoted as $b$ and $c$.

Since $b,c\in D_h$, $p_h(b)-\lambda = p_h(c)-\lambda= 0$
so that $p_h(b)-p_h(c)=0$.
Now by Taylor's theorem
\begin{equation}
\begin{aligned}
\norm{(b-c)^T \nabla p_h(b)} 
&= \norm{p_h(b)-p_h(c) - (b-c)^T \nabla p_h(b)}\\
& \leq \frac{1}{2}\norm{b-c}^2 \norm{p_h}_{2,\max}.
\end{aligned}
\label{eq::normal1}
\end{equation}

By the definition of projection, we can find a constant $t_b\in \R$ such that 
$x-b =t_b \nabla p_h(b)$. Together with \eqref{eq::normal1}, 
\begin{equation}
\begin{aligned}
2|(b-c)^T(x-b)| &= 2|(b-c)^T\nabla p_h(b) t_b|\\
&\leq \norm{(b-c)^T \nabla p_h(b)} |t_b|\\
&\leq \norm{p_h}_{2,\max} \norm{b-c}^2 |t_b|.
\end{aligned}
\end{equation}
Since both $b$ and $c$ are projection points from $x$ onto $D_h$, 
$$
\norm{x-b} = \norm{x-c}.
$$
Thus, we have 
\begin{equation}
\begin{aligned}
0&= \norm{x-c}^2 - \norm{x-b}^2\\
& = \norm{b-c}^2 +2(b-c)^T(x-b)\\
&\geq \norm{b-c}^2 -\norm{p_h}_{2,\max} \norm{b-c}^2 |t_b|\\
& = \norm{b-c}^2 (1-\norm{p_h}_{2,\max}|t_b|).
\end{aligned}
\label{eq::normal2}
\end{equation}

Recall that $d(x,D_h)\leq \frac{g_0}{\norm{p_h}_{2,\max}}$ and by Taylor's theorem,
\begin{equation}
\frac{g_0}{\norm{p_h}_{2,\max}} >
d(x, D_h) = \norm{x-b}=\norm{t_b\nabla p_h(b)}  = |t_b| \norm{\nabla p_h(b)}
\geq  |t_b|g_0
\end{equation}
so that $|t_b|\norm{p_h}_{2,\max} < 1$.
Note that the lower bound $g_0$ in the last inequality is because $d(x,D_h)< \frac{\delta_0}{2}$
so it follows from assumption (G).
Plugging in this result into the last equality of \eqref{eq::normal2},
we conclude that $\norm{b-c}=0$.
This shows $b=c$ so that we have a unique projection.
Thus, whenever $d(x, D_h)< \left(\frac{\delta_0}{2}, \frac{g_0}{\norm{p_h}^*_{2,\max}}\right)$,
we have a unique projection onto $D_h$
and thus we have proved the lower bound on reach.

\vspace{0.2 in}
{\bf Part 2: The three assertions.}
The first assertion is trivially true when $\norm{p_h-q}^*_{2,\max}$
is sufficiently small since assumption (G) only involves gradients (first derivatives).

The second assertion follows from the lower bound on reach.
By assertion 1, (G) holds for $q$. And the lower bound on reach 
is bounded by gradient and second derivatives so that 
we have the prescribed bound.

The third assertion follows from Theorem 1 in \cite{Chazal2007}
which states that if two $d-1$ dimensional smooth manifolds $M_1$ and $M_2$
have Hausdorff distance being less than $(2-\sqrt{2}) \min\{\reach(M_1), \reach(M_2)\}$,
then $M_1$ and $M_2$ are normal compatible to each other.
Now by Theorem \ref{thm::rate}, the Hausdorff distance between $D_h$ and $D(q)$
is at rate $O(\norm{p_h-q}_{1,\max})$
so that this assertion is true when $\norm{p_h-q}_{2,\max}$ is sufficiently small.
\end{proof}

\begin{proof}[ of Lemma~2]
Let $x\in D_h$.
We define $\Pi(x)\in D_h$ to be the projected point onto $\hat{D}_h$.
By Lemma~1 and Theorem~\ref{thm::rate},
when $\norm{\hat{p}_h-p_h}^*_{2,\max}\rightarrow 0$,
$ \Haus(D_h, \hat{D}_h) \overset{P}{\rightarrow} 0$ so that $\Pi(x)$ is unique.
Thus, we assume $\Pi(x)$ is unique.

Now since $\Pi(x)\in \hat{D}_h$ and $x\in D_h$, $\hat{p}_h(\Pi(x))-p_h(x)=0$.
Thus, by Taylor's theorem
\begin{equation}
\begin{aligned}
\hat{p}_h(x)-p_h(x) &=\hat{p}_h(x)- \hat{p}_h(\Pi(x))\\
&=  (x-\Pi(x))^T(\nabla\hat{p}_h(\Pi(x)) +O_\P(\norm{x-\Pi(x)})).
\label{eq::emp1}
\end{aligned}
\end{equation}
Note that $x-\Pi(x)$ is normal to $\hat{D}_h$ at $\Pi(x)$
so that it points toward the same direction as $\nabla \hat{p}_h(\Pi(x))$.
Thus, \eqref{eq::emp1} can be rewritten as
\begin{equation}
\hat{p}_h(x)-p_h(x)  = \norm{x-\Pi(x)}\big( \norm{\nabla \hat{p}_h(\Pi(x))} +O_\P(\norm{x-\Pi(x)})\big).
\label{eq::emp2}
\end{equation}

By Taylor's theorem,
$\nabla\hat{p}_h(\Pi(x))$ is close to $\nabla p_h(x)$
in the sense that
\begin{equation}
\nabla\hat{p}_h(\Pi(x)) = \nabla p_h(x) + O(\norm{\hat{p}_h-p_h}^*_{1,\max}).
\label{eq::emp3}
\end{equation}
In addition, $O(\norm{x-\Pi(x)}))$ is bounded 
by $O(\Haus(\hat{D}_h,D_h))$ which is at rate $O(\norm{\hat{p}_h-p_h}^*_{1,\max})$
due to Theorem~\ref{thm::rate}.
Putting this together with \eqref{eq::emp2}, we conclude
\begin{equation}
\begin{aligned}
\hat{p}_h(x)-p_h(x)
& = \norm{x-\Pi(x)} \big(\norm{p_h(x)} + O(\norm{\hat{p}_h-p_h}^*_{1,\max})\big)\\
& = d(x, \hat{D}_h)\big(\norm{p_h(x)} + O(\norm{\hat{p}_h-p_h}^*_{1,\max})\big).
\end{aligned}
\label{eq::emp4}
\end{equation}

Note that the left hand side can be written as
\begin{equation}
\hat{p}_h(x) - p_h(x) = \frac{1}{nh^d}\sum_{i=1}^n K\left(\frac{x-X_i}{h}\right) - 
\frac{1}{h^d}\mathbb{E}\left(K\left(\frac{x-X_i}{h}\right)\right) = \frac{1}{\sqrt{n}h^d}\mathbb{G}_n(\tilde{f}_x),
\label{eq::emp5}
\end{equation}
where $\tilde{f}_x(y) = K\left(\frac{x-y}{h}\right)$.
After plugging \eqref{eq::emp5} into the left hand side of \eqref{eq::emp4}, dividing both side by
$\norm{p_h(x)}$ and setting $f_x(y) = \frac{\tilde{f}_x(y)}{\sqrt{h^d}\norm{p_h(x)}}$, we
obtain
\begin{equation}
\frac{\frac{1}{\sqrt{nh^d}}\mathbb{G}_n(f_x)-d(x, \hat{D}_h)}{d(x, \hat{D}_h)} = 
O(\norm{\hat{p}_h-p_h}^*_{1,\max}).
\end{equation}
This holds uniformly for all $x\in D_h$
and note that the definition of $\cF$ is
$$
\cF = \left\{f_x(y)\equiv\frac{1}{\sqrt{h^d}\norm{ \nabla p_h(x)}}K\left(\frac{x-y}{h}\right): x\in D_h\right\}.
$$
So we conclude
$$
\sup_{x\in D_h}\left|\frac{\frac{1}{\sqrt{nh^d}}\mathbb{G}_n(f_x)-d(x, \hat{D}_h)}{d(x, \hat{D}_h)} \right|
= O(\norm{\hat{p}_h-p_h}^*_{1,\max}).
$$
\end{proof}

\begin{proof}[ for Theorem~3]
The proof for Theorem 3 follows the same procedure 
as the proof of Theorem 6 in \cite{chen2014asymptotic}.
The proof contains two parts: Gaussian approximation
and anti-concentration.

\vspace{0.2 in}
{\bf Part 1: Gaussian approximation.}
Basically, we will show that 
$$
\sqrt{nh^d}\Haus(\hat{D}_h, D_h)\approx
\sup_{f\in \cF} |\mathbb{G}_n(f)| \approx
\sup_{f\in \cF} |\mathbb{B}(f)|,
$$
where $\mathbb{B}$ is a Gaussian process defined in (13)
of the original paper.

First, when $\norm{\hat{p}_h-p_h}$ is sufficiently small,
$\hat{D}_h$ and $D_h$ are normal compatible to each other by Lemma~1.
Then by the property of normal compatible,
\begin{equation}
\sup_{x\in D_h} d(x,\hat{D}_h) = \Haus(\hat{D}_h, D_h).
\label{eq::G1}
\end{equation}
Thus, the difference
\begin{equation}
\begin{aligned}
\left|\sqrt{nh^d}\Haus(\hat{D}_h, D_h)-
\sup_{f\in \cF} |\mathbb{G}_n(f)| \right|
&= \left|\sqrt{nh^d}\sup_{x\in D_h}d(x,\hat{D}_h)-
\sup_{f\in \cF} |\mathbb{G}_n(f)| \right|\\
&\leq \frac{\sup_{x\in D_h} \left|\frac{1}{\sqrt{nh^d}}\mathbb{G}_n(f_x)-d(x, \hat{D}_h)\right|}{\frac{1}{\sqrt{nh^d}}}\\
&= \sup_{x\in D_h}\left|\frac{\frac{1}{\sqrt{nh^d}}\mathbb{G}_n(f_x)-d(x, \hat{D}_h)}{d(x, \hat{D}_h)} \right| 
O_\P(1)\\
&=O(\norm{\hat{p}_h-p_h}^*_{1,\max}).
\end{aligned}
\label{eq::G2}
\end{equation}
Note that the last two inequality follows from the fact that $d(x,\hat{D}_h) \leq O_\P(\frac{1}{\sqrt{nh^d}})$.
By Theorem~\ref{thm::tala} the above result implies,
\begin{equation}
\mathbb{P}\left(\left|\sqrt{nh^d}\Haus(\hat{D}_h, D_h)-
\sup_{f\in \cF} |\mathbb{G}_n(f)| \right|
> t\right)\leq 2e^{-tnh^{d+2} A_2}
\label{eq::G3}
\end{equation}
for some constant $A_2$.

Now by Corollary 2.2 in \cite{chernozhukov2014gaussian}, 
there exists some random variable $\mathbf{B} \overset{d}{=} \sup_{f\in\cF}|\mathbb{B}(f)|$
such that for all $\gamma \in (0,1)$ and $n$ is sufficiently large,
\begin{equation}
\mathbb{P}\left(\left|\sup_{f\in \cF} |\mathbb{G}_n(f)| -\mathbf{B} \right|> 
A_3 \frac{\log^{2/3}(n)}{\gamma^{1/3}(nh^{d})^{1/6}}\right) \leq A_4 \gamma.
\label{eq::G4}
\end{equation}
Note that this result basically follows from the same derivation of
Proposition 3.1 in \cite{chernozhukov2014gaussian}
with the fact that $g\equiv 1$ in their definition.

Combining equations \eqref{eq::G3} and \eqref{eq::G4} and pick $t= 1/\sqrt{nh^{d+2}}$, 
we have that
for $n$ is sufficiently large and $\gamma \in (0,1)$,
\begin{equation}
\mathbb{P}\left(\left|\sqrt{nh^d}\Haus(\hat{D}_h, D_h)-\mathbf{B} \right|> 
A_3 \frac{\log^{2/3}(n)}{\gamma^{1/3}(nh^{d})^{1/6}} +\frac{1}{\sqrt{nh^{d+2}}}\right) \leq A_4 \gamma
+ 2e^{-\sqrt{nh^{d+2}} A_2}.
\label{eq::G5}
\end{equation}

\vspace{0.2 in}
{\bf Part 2: Anti-concentration.}
To obtain the desired Berry-Esseen bound, we apply the anti-concentration inequality
in \cite{chernozhukov2014gaussian} and \cite{chernozhukov2014anti}.

\begin{lem} [Modification of Lemma 2.3 in \cite{chernozhukov2014gaussian}]
\label{lem::anti}
Let $\mathbf{B} \overset{d}{=} \sup_{f\in\cF}|\mathbb{B}(f)|$,
where $\mathbb{B}$ and $\mathcal{F}$ are defined as the above.
Assume (K1-2) and that there exists a random variable $Y$ such that 
$\mathbb{P}(\left|Y-\mathbf{B}\right|>\eta)<\delta(\eta)$. 
Then
$$
\sup_{t} \left|\P(Y<t) - \P\left(\mathbf{B}<t\right) \right| \leq A_5
\mathbb{E}(\mathbf{B})\eta+\delta(\eta)
$$
for some constant $A_5$.
\end{lem}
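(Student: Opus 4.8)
The plan is to split the argument into two essentially independent pieces: a \emph{sandwiching} step that reduces the Kolmogorov distance between $Y$ and $\mathbf{B}$ to a small-ball probability for the Gaussian supremum $\mathbf{B}$, up to the coupling error $\delta(\eta)$; and the Gaussian anti-concentration inequality for suprema of Gaussian processes from \cite{chernozhukov2014gaussian,chernozhukov2014anti}, which bounds that small-ball probability by a constant multiple of $\eta\,\E(\mathbf{B})$.

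First I would carry out the sandwiching step. Fix $t\in\R$ and $\eta>0$. On the event $\{|Y-\mathbf{B}|\le\eta\}$ one has $\{Y<t\}\subset\{\mathbf{B}<t+\eta\}$, so
\begin{equation}
\P(Y<t)\le\P(\mathbf{B}<t+\eta)+\P(|Y-\mathbf{B}|>\eta)\le\P(\mathbf{B}<t)+\P\big(t\le\mathbf{B}<t+\eta\big)+\delta(\eta);
\end{equation}
symmetrically, since $\{\mathbf{B}<t-\eta\}\cap\{|Y-\mathbf{B}|\le\eta\}\subset\{Y<t\}$,
\begin{equation}
\P(Y<t)\ge\P(\mathbf{B}<t-\eta)-\delta(\eta)\ge\P(\mathbf{B}<t)-\P\big(t-\eta\le\mathbf{B}<t\big)-\delta(\eta).
\end{equation}
Both interval probabilities are at most $\sup_{s\in\R}\P(|\mathbf{B}-s|\le\eta)$, so taking the supremum over $t$ gives
\begin{equation}
\sup_t\big|\P(Y<t)-\P(\mathbf{B}<t)\big|\le\sup_{s\in\R}\P\big(|\mathbf{B}-s|\le\eta\big)+\delta(\eta).
\end{equation}

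It then remains to bound $\sup_s\P(|\mathbf{B}-s|\le\eta)$. After enlarging the index set $\cF$ to $\cF\cup(-\cF)$ one may view $\mathbf{B}$ as the supremum of a centred, separable Gaussian process, with separability and almost sure boundedness of the supremum inherited from the fact that (K1--K2) make $\cF$ a uniformly bounded VC-type class (as in Theorem~\ref{thm::Gaussian}). Its variance parameter is $\sigma^2=\sup_{f\in\cF}\E(f(X_1)^2)$, and from $\E(f_x(X_1)^2)=\norm{\nabla p_h(x)}^{-2}\int K^2(u)\,p(x-hu)\,du$ one sees that (K1) bounds $\sigma^2$ from above while assumption (G), through $\norm{\nabla p_h(x)}>g_0$ on $D_h$, keeps $\sigma^2$ (and each $\E(f_x(X_1)^2)$) bounded away from $0$; hence $\sigma$ stays in a fixed compact subinterval of $(0,\infty)$, and $\E(\mathbf{B})\ge\E|\mathbb{B}(f_0)|\ge c>0$ for any fixed $f_0\in\cF$. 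The cited anti-concentration inequality then gives $\sup_s\P(|\mathbf{B}-s|\le\eta)\le C\,\eta\,(\E(\mathbf{B})+1)$ with $C$ depending only on the bounds on $\sigma$, and absorbing the $+1$ via $\E(\mathbf{B})\ge c$ yields $\sup_s\P(|\mathbf{B}-s|\le\eta)\le A_5\,\E(\mathbf{B})\,\eta$. Combined with the last display, this proves the lemma.

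The only non-routine part is the anti-concentration ingredient: one must verify the hypotheses of the cited inequality (separability of $\mathbb{B}$ on $\cF$, almost sure finiteness of the supremum, and non-degeneracy), where assumption (G) is precisely what prevents $\sigma$ from vanishing and so makes $A_5$ finite and independent of $h$; and one must pass from $\sup_f|\mathbb{B}(f)|$ to an ordinary Gaussian supremum by the sign-doubling device before invoking the bound. The sandwiching step and the bookkeeping that combines the two pieces are entirely elementary.
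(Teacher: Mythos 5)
Your overall architecture matches the paper's: reduce the Kolmogorov distance to a Gaussian small-ball probability by sandwiching, then invoke the Chernozhukov--Chetverikov--Kato anti-concentration inequality. The sandwiching step, which the paper leaves implicit in its citation of Lemma~2.3 of \cite{chernozhukov2014gaussian}, you spell out explicitly and correctly, and the sign-doubling device and the discussion of why (G) keeps the variance bounded away from zero are both appropriate.

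The gap is in the form of the anti-concentration bound you quote. You write $\sup_s \P(|\mathbf{B}-s|\le\eta) \le C\,\eta\,(\E(\mathbf{B})+1)$ and then absorb the $+1$ into the constant. That $(\E(\mathbf{B})+1)$ form is the unit-variance version of the inequality. Here, as you yourself observe, $\E(f_x(X_1)^2) = \norm{\nabla p_h(x)}^{-2}\int K^2(u)\,p(x-hu)\,du$ merely lies in a fixed compact subinterval of $(0,\infty)$; it is not constant in $x$. For a Gaussian process with variance bounded between $\underline\sigma^2$ and $\bar\sigma^2$ but not constant, the CCK bound (Lemma~2.3 of \cite{chernozhukov2014gaussian}, or the corresponding statement in \cite{chernozhukov2014anti}) reads
\begin{equation*}
\sup_s \P\bigl(|\mathbf{B}-s|\le\eta\bigr) \le C_{\underline\sigma,\bar\sigma}\,\eta\Bigl(\E(\mathbf{B}) + \sqrt{1\vee\log(\underline\sigma/\eta)}\Bigr),
\end{equation*}
and the $\sqrt{1\vee\log(\underline\sigma/\eta)}$ term cannot be swallowed into a constant $A_5$ independent of $\eta$, since it grows without bound as $\eta\to 0$. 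The paper's own proof acknowledges exactly this: it notes that Lemma~2.3 really produces $\E(\mathbf{B})+\log\eta$ rather than $\E(\mathbf{B})$, and then argues that in the downstream application $\eta$ is chosen so that $\sqrt{\log(1/\eta)}$ and $\E(\mathbf{B})$ are both $O(\sqrt{\log n})$ (the latter by Dudley's inequality), so the extra term changes only the constant in the final rate. Your proof, as written, proves the lemma with the missing $\log$ term, and thus proves a statement that is literally false for small $\eta$; you would need to either restate the lemma to include the $\log$ factor or add the paper's contextual argument explaining why that factor is harmless in the range of $\eta$ actually used.
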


It is easy to verify that assumptions (K1-2) imply the assumptions (A1-3)
in \cite{chernozhukov2014gaussian} so that the result follows.
Note that in the original Lemma 2.3 in \cite{chernozhukov2014gaussian},
$\mathbb{E}(\mathbf{B})$ should be replaced by $\mathbb{E}(\mathbf{B})+ \log \eta$.
However, $\mathbb{E}(\mathbf{B})=O(\sqrt{\log n})$
due to Dudley's inequality for Gaussian process \citep{van1996weak} 
and later we will find that $\log \eta$ is also at this rate so we ignore $\log \eta$.

From Lemma \ref{lem::anti} and equation \eqref{eq::G5},
there exists some constant $A_6$ such that
\begin{equation}
\begin{aligned}
\sup_t\Bigg|\mathbb{P}\Big(\sqrt{nh^{d}}&\Haus(\hat{D}_h, D_h)<t\Big)-
\mathbb{P}\left(\sup_{f\in\cF}|\mathbb{B}(f)|<t\right)
\Bigg|\\
&\leq A_5 
\mathbb{E}(\mathbf{B})\left(A_3 \frac{\log^{2/3}(n)}{\gamma^{1/3}(nh^{d})^{1/6}} +\frac{1}{\sqrt{nh^{d+2}}}\right)
+A_4 \gamma+ 2e^{-\sqrt{nh^{d+2}} A_2}\\
&\leq A_6 \left(A_3 \frac{\log^{7/6}(n)}{\gamma^{1/3}(nh^{d})^{1/6}} +\sqrt{\frac{\log n}{nh^{d+2}}}\right)
+A_4 \gamma+ 2e^{-\sqrt{nh^{d+2}} A_2}.
\end{aligned}
\end{equation}
Now pick $\gamma = \left(\frac{\log^7 n}{nh^{d}}\right)^{1/8}$
and use the fact that $\frac{1}{\sqrt{nh^{d+2}}}$
and $2e^{-\sqrt{nh^{d+2}} A_2}$ converges faster than the other terms;
we obtain the desired rate.
\end{proof}

\begin{proof}[ for Theorem~4]
This proof follows the same strategy for the proof of Theorem 7 in \cite{chen2014asymptotic}.
We prove the Berry-Esseen type bound first
and then show that the coverage is consistent.
We prove the Berry-Esseen bound in two simple steps:
Gaussian approximation and
support approximation.

Let $\mathcal{X}_n = \{(X_1,\cdots, X_n): \norm{\hat{p}_h-p_h}^*_{2,\max}\leq \eta_0\}$
for some small $\eta_0$
so that 
whenever our data is within $\mathcal{X}_n$,
(G) holds for $\hat{p}_h$.
By Lemma 1, such an $\eta_0$ exists
and by Theorem~\ref{thm::tala} we have $\mathbb{P}(\mathcal{X}_n)\geq 1-3^{-nh^{d+4}\tilde{A}_0}$
for some constant $\tilde{A}_0$.
Thus, we assume our original data $X_1,\cdots, X_n$ is within $\mathcal{X}_n$.

\vspace{0.2 in}
{\bf Step 1: Gaussian approximation.}
Let $\hat{\mathbb{P}}_n$ and $\hat{\mathbb{P}}^*_n$ be the empirical measure
and the bootstrap empirical measure.
A crucial observation is that for a function $f_x(y) = K\left(\frac{x-y}{h}\right)$,
\begin{equation}
\hat{\mathbb{P}}_n (f_x) = \int K\left(\frac{x-y}{h}\right)d\hat{\mathbb{P}}_n(y) = h^d\hat{p}_h(x).
\end{equation}
Also note 
\begin{equation}
\hat{\mathbb{P}}^*_n (f_x) = \int K\left(\frac{x-y}{h}\right)d\hat{\mathbb{P}}^*_n(y) = h^d\hat{p}^*_h(x).
\end{equation}
Therefore, for the bootstrap empirical process 
$\mathbb{G}_n^*= \sqrt{n}(\hat{\mathbb{P}}^*-\hat{\mathbb{P}})$,
\begin{equation}
\hat{p}^*_h(x) - \hat{p}_h(x) = \frac{1}{\sqrt{n}h^d} \mathbb{G}_n^*(f_x).
\end{equation}

Thus, if we sample from $\hat{p}_h$ and consider 
estimating $\hat{p}_h$ by $\hat{p}^*_h$,
we are doing exactly the same procedure of estimating $p_h$ by $\hat{p}_h$.
Therefore, Lemma~2 and Theorem~3
hold for approximating $\Haus(\hat{D}^*_h, \hat{D}_h)$
by a maxima for a Gaussian process.
The difference is that the Gaussian process is defined on
\begin{equation}
\cF_n = 
\left\{f_x(y)\equiv\frac{1}{\sqrt{nh^d}\norm{ \nabla \hat{p}_h(x)}}K\left(\frac{x-y}{h}\right): x\in \hat{D}_h\right\}
\end{equation}
since the ``parameter (level sets)'' being estimated is $\hat{D}_h$ (the estimator is $\hat{D}^*_h$).
Note that $\cF_n$ is very similar to $\cF$ except the denominator is slightly different and
the support $\hat{D}_h$ is also different from $D_h$.
That is, we have 
\begin{equation}
\begin{aligned}
\sup_t\Bigg|\mathbb{P}\Bigg(\sqrt{nh^d}\Haus(&\hat{D}^*_h, \hat{D}_h)<t\Bigg| X_1,\cdots,X_n\Bigg) \\
&- 
\mathbb{P}\left(\sup_{f\in\cF_n}|\mathbb{B}_n(f)|<t\Bigg| X_1,\cdots,X_n\right)\Bigg| 
\leq O\left(\left(\frac{\log^7 n}{nh^d}\right)^{1/8}\right),
\end{aligned}
\label{eq::BT1}
\end{equation}
where $\mathbb{B}_n$ is a Gaussian process on $\cF_n$ such that 
for any $f_1, f_2 \in \cF_n$,
\begin{equation}
\mathbb{E}(\mathbb{B}_n(f_1)|X_1,\cdots,X_n) = 0, 
\quad \Cov(\mathbb{B}_n(f_1), \mathbb{B}(f_2)|X_1,\cdots,X_n) = \frac{1}{n}\sum_{i=1}^n f_1(X_i)f_2(X_i).
\label{eq::BG}
\end{equation}

\vspace{0.2 in}
{\bf Step 2: Support approximation.}
In this step, we will show that 
\begin{equation}
\sup_{f\in\cF_n}|\mathbb{B}_n(f)|\approx
\sup_{f\in\cF}|\mathbb{B}_n(f)|\approx
\sup_{f\in\cF}|\mathbb{B}(f)|.
\end{equation}
The first approximation can be shown by using 
the Gaussian comparison lemma 
(Theorem 2 in \cite{chernozhukov2014comparison}; also see Lemma 17 in \cite{chen2014asymptotic}).
We do the same thing as Step 3 in the proof of Theorem 8 in \cite{chen2014asymptotic}
so we omit the details.
Essentially, given any $\epsilon>0$,
we can construct a pair of balanced $\epsilon$-nets
for both $\cF$ and $\cF_n$, denoted as $\{g_1,\cdots, g_K\}$ and 
$\{g^n_1, \cdots, g^n_K\}$
so that $\max_{j}\norm{g_j-g^n_j}^*_{\max} = O(\norm{\hat{p}_h-p_h}^*_{1,\max})$.
Then this $\epsilon$-net leads to 
\begin{equation}
\begin{aligned}
\sup_t\Bigg|\mathbb{P}\Bigg(\sup_{f\in\cF_n}&|\mathbb{B}_n(f)|<t\Bigg| X_1,\cdots,X_n\Bigg) \\
&- 
\mathbb{P}\left(\sup_{f\in\cF}|\mathbb{B}_n(f)|<t\Bigg| X_1,\cdots,X_n\right)\Bigg| 
\leq O\left(\left(\norm{\hat{p}_h-p_h}^*_{1,\max}\right)^{1/3}\right).
\end{aligned}
\label{eq::Sapp}
\end{equation}
The difference between $\sup_{f\in\cF}|\mathbb{B}_n(f)|$
and $\sup_{f\in\cF}|\mathbb{B}(f)|$
is small since the these two Gaussian processes differ 
in their covariance but as $n\rightarrow \infty$,
the covariances converges at rate $1/\sqrt{n}$
so that we can neglect the difference between them.
Thus, combining \eqref{eq::BT1} and \eqref{eq::Sapp}
and the argument from previous paragraph, we conclude
\begin{equation}
\begin{aligned}
\sup_t\Bigg|\mathbb{P}\Bigg(\sqrt{nh^d}\Haus(\hat{D}^*_h, \hat{D}_h)<t&\Bigg| X_1,\cdots,X_n\Bigg) 
- 
\mathbb{P}\left(\sup_{f\in\cF}|\mathbb{B}(f)|<t\right)\Bigg| \\
&\leq O\left(\left(\frac{\log^7 n}{nh^d}\right)^{1/8}\right)+
O\left(\left(\norm{\hat{p}_h-p_h}^*_{1,\max}\right)^{1/3}\right).
\end{aligned}
\label{eq::BT2}
\end{equation}
Now comparing the above result to Theorem~3
and using the fact that the first big-O term 
dominates the second term 
(the first is of rate $-1/8$ for $n$ but
the second term is at rate $-1/6$ by Theorem~\ref{thm::tala}), 
we conclude the result for first assertion.

For the coverage, 
let $W_n = \Haus(\hat{D}_h, D_h)$ and $w_{n,1-\alpha} = F^{-1}_{W_n}(1-\alpha)$.
Since $D_h \subset \hat{D}_h \oplus  \Haus(\hat{D}_h, D_h)$,
we have
\begin{equation}
\mathbb{P}(D_h \subset \hat{D}_h \oplus w_{n,1-\alpha} ) = 1-\alpha.
\end{equation}
Now by the first assertion, the difference for
$w_{n,1-\alpha} $ and the bootstrap estimate $w^*_{n,1-\alpha} $
differs at rate $O\left(\left(\frac{\log^7 n}{nh^d}\right)^{1/8}\right)$,
which completes the proof.

\end{proof}

---
\end{document}